\keywords{Lambda calculus, Quantum computing, Categorical semantics}
\newcommand\OC{Lambda-${\mathcal{S}}_1$\xspace}
\newcommand\ket[1]{\ensuremath{|#1\rangle}}
\newcommand\teoremaunitarias{\cite[Theorem IV.12]{DiazcaroGuillermoMiquelValironLICS19}}
\newcommand\Definible{\mathsf{Def}}
\newcommand\parts[1]{\ensuremath{{\mathcal P}_{\!\!*}({#1})}}
\newcommand\xrecap[4]{\noindent{\bf #1~\ref{#3}} (#2){\textbf{.}}{\textit{#4}}}
\newcommand\home[2]{[#1,#2]}
\newcommand\I{I}
\newcommand\Id{\mathsf{Id}}
\newcommand\sem[1]{\left\llbracket{#1}\right\rrbracket}
\newcommand\semR[1]{\llparenthesis{#1}\rrparenthesis_{\!{}_R}}
\def\<{\langle}
\def\>{\rangle}
\newcommand\Void{*} 
\newcommand\Pair[2]{(#1,#2)} 
\newcommand\Lam[2]{\lambda#1\,{.}\,#2} 
\newcommand\letkeyword{\mathtt{let}}
\newcommand\inkeyword{\mathtt{in}}
\newcommand\LetP[4]{\letkeyword~\Pair{#1}{#2}=#3~\inkeyword~#4}
\newcommand\inleftkeyword{\mathtt{inl}}
\newcommand\Inl[1]{\inleftkeyword(#1)}
\newcommand\inrightkeyword{\mathtt{inr}}
\newcommand\Inr[1]{\inrightkeyword(#1)}
\newcommand\matchkeyword{\mathtt{match}}
\newcommand\Match[5]{\matchkeyword~#1~%
  \{\Inl{#2}\mapsto#3~|~\Inr{#4}\mapsto#5\}}
\newcommand\FV{\textit{FV}}
\newcommand\C{\mathbb{C}}
\newcommand\lra{\longrightarrow}
\newcommand\U{\ensuremath{\mathbb U}}
\newcommand\s[1]{\ensuremath{\mathsf{#1}}}
\newcommand\So{\ensuremath{{\mathcal S}_1}}
\newcommand\Set{\s{Set}}
\newcommand\Arr[1]{\s{Arr}(#1)}
\newcommand\HomV[1]{\s{Hom}_{\VecV}(#1)}
\newcommand\HomS[1]{\s{Hom}_{\SetV}(#1)}
\newcommand\Ob[1]{\s{Ob}(#1)}
\newcommand\Span[1]{\ensuremath{{S}{#1}}}
\newcommand\Forg[1]{\ensuremath{{U}{#1}}}
\newcommand\HomSfiV{\HomS{f,\Forg V}}
\newcommand\HomHsfV{\HomV{\Span f,V}}
\newcommand\HomHsAg{\HomV{\Span A,g}}
\newcommand\HomSAig{\HomS{A,\Forg g}}
\newcommand\Val{{\s V}}
\newcommand\ValD{\vec{\s V}}
\newcommand\VecV{\s{SVec}_{\ValD}}
\newcommand\SetV{\Set_{\ValD}}
\newcommand\Hil{\mathcal H_{\ValD}}
\newcommand\xlra[1]{\xrightarrow{#1}}
\begin{document}

\title[Lambda-\texorpdfstring{\So}{S1} and its Categorical Model]{Quantum Control in the Unitary Sphere:\texorpdfstring{\\}{} Lambda-\texorpdfstring{\So}{S1} and its Categorical Model}

\author[A.~D\'{\i}az-Caro]{Alejandro D\'{\i}az-Caro\rsuper{a,b}}
\address{Instituto de Ciencias de la Computaci\'on, CONICET--Universidad de Buenos Aires. 
Buenos Aires, Argentina}
\address{Depto.~de Ciencia y Tecnolog\'{\i}a, Universidad Nacional de Quilmes.
Bernal, Buenos Aires, Argentina}
\email{adiazcaro@icc.fcen.uba.ar} 
\thanks{Partially funded by PIP 11220200100368CO, PICT-2019-1272, 21STIC10 Qapla', PUNQ 1342/19 and ECOS-Sud A17C03 QuCa.}	

\author[O.~Malherbe]{Octavio Malherbe\rsuper{c,d}}	
\address{Instituto de Matem\'atica y Estad\'{\i}stica ``Rafael Laguardia'', FIng, Universidad de la Rep\'ublica.
Montevideo, Uruguay}
\address{Depto.~de Matem\'atica y Aplicaciones, CURE, Universidad de la Rep\'ublica.
Maldonado, Uruguay}
\email{malherbe@fing.edu.uy}  

\begin{abstract}
  In a recent paper, a realizability technique has been used to give a semantics of a quantum lambda calculus. Such a technique gives rise to an infinite number of valid typing rules, without giving preference to any subset of those. In this paper, we introduce a valid subset of typing rules, defining an expressive enough quantum calculus.  Then, we propose a categorical semantics for it. Such a semantics consists of an adjunction between the category of distributive-action spaces of value distributions (that is, linear combinations of values in the lambda calculus), and the category of sets of value distributions.
\end{abstract}

\maketitle

\section{Introduction}\label{introduction}
In quantum programming languages, the control flow of programs divides models
in two classes. On the one hand, there is the model of the QRAM~\cite{Knill04},
or classical control~\cite{SelingerMSCS04}. The classical control refers to a
scheme where the quantum operations are performed in a specialized device,
known as QRAM, attached to a classical computer, which instructs the device
which operations to apply over which qubits.  It is the more realistic and
practical scenario.  In this model, the quantum operations are given by a
series of ``black boxes''. An example of this is the quantum lambda
calculus~\cite{SelingerValironMSCS06}, as well as several high-level quantum
programming languages such as Quipper~\cite{GreenEtAlPLDI13} and
QWIRE~\cite{PaykinRandZdancewicPOPL17}.  The kind of problems that this model
dealt with is, for example, to forbid cloning unknown qubits, since the
non-cloning theorem states that there is no universal cloning machine.

On the other hand, there is the model of quantum control, with a parallel
agenda. The ultimate motivation of this model is to extend the Curry-Howard
isomorphism relating type theory with logics, to the quantum case. Indeed,
there is a long line of research on Quantum Logic started by the pioneer work
of Birkhoff and Von Neumann in the 30's~\cite{BirkhoffVonNeumann}. However, the
connection from this logic to a lambda calculus is unknown.

One of the first works on the quantum control approach is the development of
QML~\cite{AltenkirchGrattageLICS05}, where the quantum control is expressed by
the quantum if ``$\s{if}^\circ$'' which, given a superposition of $\ket 0$ and
$\ket 1$, produces a superposition of its two output branches. However, a
superposition of the form $\alpha.\ket 0+\beta.\ket 1$ is a valid qubit only if
its norm is equal to $1$, so, if $|\alpha|^2+|\beta|^2=1$. Therefore,
superposing the two output branches would be valid, only if the norm of this
term is equal to $1$. Therefore, for example, the term $\s{if}^\circ\
\alpha.\ket 0+\beta.\ket 1\ \s{then}\ s\ \s{else}\ t$ is valid only if $s$ and
$t$ are orthogonal, and so it preserves the norm. Thus, QML introduced a notion
of norm for a small subset of terms. Indeed, consider the type Bool, with its
two orthogonal values $\s{true}$ and $\s{false}$. To detect if $t:Bool$ and
$r:Bool$ are orthogonal, means to reduce $t$ and $r$ and to compare them. So
the orthogonality question in typing have been an open question for many years,
until the work~\cite{DiazcaroGuillermoMiquelValironLICS19}, which provided a
partial answer.

The long path to this partial answer started at
Lineal~\cite{ArrighiDowekRTA08,ArrighiDowekLMCS17}, which is an untyped
extension to the lambda calculus allowing for linear combinations of terms.
This way, if $s$ and $t$ are two terms, so is its formal linear combination
$\alpha\cdot s+\beta\cdot t$, with $\alpha,\beta\in\mathbb C$.  Unitary
matrices are not expressed as given black boxes, but they can be constructed.

Quantum programs can be expressed in Lineal, except for the quantum
measurement, which is left out of the system. However, Lineal is not restricted
to only quantum programs.  In particular, the vectors are not ensured to be of
norm $1$, since it would require checking for orthogonality between vectors.
Neither functions are proved to behave as isometries, as needed by the quantum
theory. One main feature of Lineal, although, is the fact that all the
functions, even if they are not forced to be linear or isometries, are treated
linearly: if a function $\lambda x.s$ is applied to a formal linear combination
$\alpha\cdot v+\beta\cdot w$, it distributes linearly as follows:
\[
  (\lambda x.s)(\alpha\cdot v+\beta\cdot w)\lra\alpha\cdot(\lambda
  x.s)v+\beta\cdot(\lambda x.s)w
\]
generalising the quantum-if from QML\@.

A drawback in taking all functions as linear, is that adding measurement was
not trivial, since a measurement is not a linear operation: If $M$ is a
measurement operator, $M(\alpha\cdot v+\beta\cdot w)$ does not behave as
$\alpha\cdot Mv+\beta\cdot Mw$.

Lambda-$\mathcal S$~\cite{DiazcaroDowekTPNC17,DiazcaroDowekRinaldiBIO19} is a
typed lambda calculus based on Lineal, mainly focused on adding measurement to
the calculus. Instead of treating all functions as linear, its types enforce
linearity when the argument is a superposition, and allow for duplication when
it is not. This is done by labelling superpositions with a modality $\mathcal
S$.  Any term typed by $\mathcal S$ is treated linearly, so only basis terms
are duplicable. It is argued to be somehow the dual to Intuitionistic Linear
Logic, where duplicable terms are marked (by a $!$). Indeed, in~\cite{DiazcaroMalherbeLSFA18,DiazcaroMalherbe20,DiazcaroMalherbeACS20} a
categorical model for Lambda-$\mathcal S$ has been proposed, obtained by a
monoidal monad determined by a monoidal adjunction $(S,m)\dashv(U,n)$ and
interpreting $\mathcal S$ as the monad $US$---exactly the opposite to the $!$
of linear logic, which in the literature is often interpreted as the comonad
$SU$ (see~\cite{MelliesHAL03}). This implies that on the one hand there is a
tight control of the Cartesian structure of the model, and on the other hand
the world of superpositions lives inside the classical word, i.e.~determined
externally by classical rules until one decides to explore it. This is given by
the following composition of maps:
\[
  USA\times USA\xlra{n} U(SA\otimes SA)\xlra{Um} US(A\times A)
\]
that allows us to operate in a monoidal structure explicitly allowing the
algebraic manipulation and then to return to the Cartesian product. This is
different from linear logic, where the $!$ stops any algebraic manipulation,
i.e.~$({!}A)\otimes ({!}A)$ is a product inside a monoidal category. A concrete
example is an adjunction between the categories $\Set$ of sets and $\s{Vec}$ of
vector spaces~\cite{DiazcaroMalherbeLSFA18,DiazcaroMalherbe20}.

The problem of orthogonality has been finally addressed
in~\cite{DiazcaroGuillermoMiquelValironLICS19}, which provides another type
system for Lineal, ensuring superpositions to be in the unitary sphere
$\mathcal S_1$ (that is, \mbox{norm-$1$} vectors).  It also characterizes
isometries via a specific type.  On this system, measurement has been left out
of the equation, since Lambda-$\mathcal S$ already showed how to add
measurement on Lineal, so, it is already known how to add measurement and it is no longer a problem.
This system ensuring norm-$1$ vectors and characterizing isometries has been
obtained by means of realizability
techniques~\cite{KleeneJSL45,Vonoosten08,KrivinePS09,MiquelTLCA11}. Instead of
deriving a computational meaning of proofs once the type system is set up, the
idea of realizability is to consider the type system as a by-product of the
operational semantics---programs are then potential realizers of types. For
example, a program behaving as the identity will be a realizer of $A\rightarrow
A$, regardless of its inner structure. Realizability is a powerful and modular
framework amenable to many systems (see~\cite{BrunelPhD14}). So, one
particularity is that the typing rules are probable lemmas (any typing rule
conforming the semantics, is a valid rule), hence, the set of rules is
potentially infinite. On this scheme, there is a modality $\sharp$, with a
similar behaviour to the $\mathcal S$ of Lambda-$\mathcal S$. The claimed main
goal of this system has been to solve the long-standing issue of how to ensure
\mbox{norm-$1$} superpositions, and characterize unitary functions.

The goal of the present paper is to extract a (finite) fixed type system
following the realizability
semantics~\cite{DiazcaroGuillermoMiquelValironLICS19}, a calculus we call \OC,
ensuring \mbox{norm-$1$} superpositions. We also give a categorical model for
this calculus.

Hence, the main contributions are twofold. On the one hand, we give the
definition of \OC, which is not trivial
since~\cite{DiazcaroGuillermoMiquelValironLICS19} provides only a method to
produce an infinite type system. On the other hand, the second main
contribution is the categorical model, which has some common grounds with the
concrete model of Lambda-$\mathcal
S$~\cite{DiazcaroMalherbeLSFA18,DiazcaroMalherbe20}, however, the chosen
categories this time are not $\Set$ and $\s{Vec}$, but categories that use the
fact that values in our calculus form a distributive-action space (an algebraic
  structure similar to a vector space, where its additive structure is a
semi-group). Summarising, the main novelty and contribution of this paper is
presenting a model for quantum computing in the quantum control paradigm, which
we show to be complete on qubits (Theorem~\ref{thm:completenessonqubits}).

We left the measurement operator out of the obtained system, only for the sake
of simplicity.
    The inclusion of the measurement operator was a problem for Lineal.
    However, after Lambda-$\mathcal
    S$~\cite{DiazcaroDowekTPNC17,DiazcaroDowekRinaldiBIO19} tackled this
    problem by considering linear types and the modality $\mathcal S$, it is no
    longer a problem. Since Lambda-$\So$\ uses the same modality (here written
    $\sharp$), the inclusion or not of a measurement operator does not suppose
    a challenge any more.
So, adding a measurement operator as the one from Lambda-$\mathcal
S$ is not difficult, since the problems were already solved on that system, and
\OC follows the same line. However, adding a measurement operator implies to
have a probabilistic rewrite system, which demands an extra monad (the
probabilistic monad) to be added to the model.  While this addition would be
easy (cf.~\cite{DiazcaroMalherbeLSFA18,DiazcaroMalherbe20}), it introduces superfluous complexity to the system making its model less clear.

In Section~\ref{sec:LambdaS1} we introduce the calculus \OC. We prove its main 
correctness properties such as progress, subject reduction, and strong
normalization in~Section~\ref{sec:ExpCorr}. In Section~\ref{sec:expressivity}
we show the expressiveness of \OC, which includes the simply typed lambda
calculus with addition and pairs, plus the isometries. In
Section~\ref{sec:DenSem} we introduce its categorical model. In
Section~\ref{sec:SC} we prove the soundness of the model, and the completeness
of the type $\sharp(\U+\U)$, which corresponds to $\C^2$ (the type of qubits).
We conclude in Section~\ref{sec:Conclusion} with some final remarks.

\section{\texorpdfstring{\OC}{Lambda-S1}}\label{sec:LambdaS1}
\subsection{Terms}
In Table~\ref{tab:Syntax} we give the grammar of terms for \OC. Those fall into 
two categories: ``pure'' and ``distributions'', and those in two subcategories,
of terms and values. The idea is that a distribution is a linear combination of
a pure terms. Formally, the set of distributions is equipped with a congruence
$\equiv$ that is generated from the 6~rules of Table~\ref{tab:cong-term}. We
say that a term $\vec t:=\sum_{i=1}^n\alpha_i\cdot t_i$ is given in canonical
form when $t_j\neq t_k$ for all $j,k$.

\begin{table*}[t]
  \[
    \begin{array}{l@{\ }r@{\ }c@{\ }l}
      \parbox[c]{1cm}{\centering\tiny Pure values}&v,w&:=& x \mid \Lam{x}{\vec{s}} \mid \Void \mid \Pair{v_1}{v_2}\mid \Inl{v} \mid \Inr{v}\\[6pt]
      \parbox[c]{1cm}{\centering\tiny Pure terms}&s,t&:=& v \mid s\,t \mid t;\vec{s} \mid \LetP{x_1}{x_2}{t}{\vec{s}} \mid\Match{t}{x_1}{\vec{s}_1}{x_2}{\vec{s}_2}\\[6pt]
      \parbox[c]{1cm}{\centering\tiny Value distrib.}&\vec{v},\vec{w}&:=& v \mid \vec{v}+\vec{w} \mid \alpha\cdot\vec{v}\qquad\hfill(\alpha\in\C)\\[6pt]
      \parbox[c]{1cm}{\centering\tiny Term distrib.}&\vec{s},\vec{t}&:=& t \mid \vec{s}+\vec{t} \mid \alpha\cdot\vec{t}\qquad\hfill(\alpha\in\C)
    \end{array}
  \]
  \caption{Grammar of terms}%
  \label{tab:Syntax}
\end{table*}

\begin{table*}[t]
  \[
    \begin{array}{r@{\ }c@{\ }l@{\qquad}r@{\ }c@{\ }l@{\qquad}r@{\ }c@{\ }l}
      \vec{t}_1+\vec{t}_2&\equiv&\vec{t}_2+\vec{t}_1
      &
        (\vec{t}_1+\vec{t}_2)+\vec{t}_3&\equiv&\vec{t}_1+(\vec{t}_2+\vec{t}_3)
      &
        1\cdot\vec{t}&\equiv&\vec{t}
      \\[6pt]
      \alpha\cdot(\beta\cdot\vec{t})&\equiv& \alpha\beta\cdot\vec{t}
      &
        (\alpha+\beta)\cdot\vec{t} &\equiv& \alpha\cdot\vec{t}+\beta\cdot\vec{t}
      &
        \alpha\cdot(\vec{t}_1+\vec{t}_2)&\equiv& \alpha\cdot\vec{t}_1+\alpha\cdot\vec{t}_2
    \end{array}
  \]
\caption{Congruence rules on term distributions}%
\label{tab:cong-term}
\end{table*}

From now on, we consider term distributions modulo the congruence~$\equiv$, and
simply write $\vec{t}=\vec{t}'$ for $\vec{t}\equiv\vec{t}'$. This convention
does not affect \emph{inner}---or \emph{raw}---distributions (which occur
within a pure term, for instance in the body of an abstraction), that are still
considered only up to $\alpha$-conversion\footnote{Intuitively, a distribution
  that appears in the body of an abstraction (or in the body of a
  let-construct, or in a branch of a match-construct) does not represent a real
  superposition, but \emph{machine code} that will produce later a particular
superposition, after some substitution has been performed.}.

Pure terms and term distributions are intended to be evaluated according to a
call-by-pure-values strategy\footnote{Called call-by-basis
  in~\cite{AssafDiazcaroPerdrixTassonValironLMCS14,DiazcaroGuillermoMiquelValironLICS19},
  and simply call-by-value in~\cite{ArrighiDowekRTA08,ArrighiDowekLMCS17}.},
  which is a declination of the call-by-value strategy in a computing
  environment where all functions are linear by construction. In its original
  form~\cite{ArrighiDowekRTA08,ArrighiDowekLMCS17}, a superposition $t(v+w)$
  reduced to $(tv+tw)$, while in our case
  following~\cite{DiazcaroGuillermoMiquelValironLICS19}, the first term is not
  even in the grammar, but it is just a notation for the former. This notation
  extends the syntactic constructs of the language by linearity, proceeding as
  follows: for all value distributions $\vec{v}=\sum_{i=1}^n\alpha_i\cdot{v_i}$
  and $\vec{w}=\sum_{j=1}^m\beta_j\cdot{w_j}$, and for all term distributions
  $\vec{s}_1, \vec{s}_2$, $\vec{t}=\sum_{k=1}^p\gamma_k\cdot t_k$, and
  $\vec{s}=\sum_{\ell=1}^q\delta_\ell \cdot s_\ell$ we have the notations given
  in Table~\ref{tab:notation}. Notice that $\vec t s$ is not in the grammar nor
  in the notation: the term at the left of an application must be a pure term.
  The reason is that we will focus in isometries, and the linear combination of
  isometries is not necessarily an isometry.

\begin{table*}[t]
  \[
    \begin{array}{r@{\ }l@{}r@{\ }l}
      \Pair{\vec{v}}{\vec{w}} &:= \sum\limits_{i=1}^n\sum\limits_{j=1}^k\alpha_i\beta_j\cdot\Pair{v_i}{w_j}
      & t\,\vec{s} &:= \sum\limits_{\ell=1}^q\delta_\ell\cdot ts_\ell
      \\
      \Inl{\vec{v}} &:=\sum\limits_{i=1}^n\alpha_i\cdot\Inl{v_i}
      & \vec{t};\vec{s} &:= \sum\limits_{k=1}^p\gamma_k\cdot(t_k;\vec{s})
      \\
      \Inr{\vec{v}} &:=\sum\limits_{i=1}^n\alpha_i\cdot\Inr{v_i}
      &\LetP{x}{y}{\vec{t}}{\vec{s}} &:= \sum\limits_{k=1}^p\gamma_k\cdot\big(\LetP{x}{y}{t_k}{\vec{s}}\big)
      \\
      \multicolumn{4}{l}{\Match{\vec{t}}{x_1}{\vec{s}_1}{x_2}{\vec{s}_2} :=}\\
      \multicolumn{4}{r}{\sum\limits_{k=1}^p\gamma_k\cdot \big(\Match{t_k}{x_1}{\vec{s}_1}{x_2}{\vec{s}_2}\big)}
  \end{array}
\]
  Where\quad
    $\vec{v} =\sum_{i=1}^n\alpha_i\cdot{v_i}$,
    \quad $\vec{w}=\sum_{j=1}^m\beta_j\cdot{w_j}$,
    \quad $\vec{t} =\sum_{k=1}^p\gamma_k\cdot t_k$,
    \quad $\vec{s}=\sum_{\ell=1}^q\delta_\ell \cdot s_\ell$
  \caption{Notations for linear constructions}%
  \label{tab:notation}
\end{table*}

We write $\Val$, $\ValD$, $\Lambda$, and $\vec{\Lambda}$ to the
sets of pure values, value distributions, pure terms, and term distributions respectively.

Finally, in Table~\ref{tab:AtomicEval} we give the rewrite relation. As usual, we
write $\lra^*$ for the reflexive and transitive closure of $\lra$.

\begin{table*}
  \begin{align*}
    (\Lam{x}{\vec{t}}\,)\,v&\lra \vec{t}\,[x:=v]\\
    \Void;\vec{s}&\lra \vec{s}\\
    \LetP{x}{y}{\Pair{v}{w}}{\vec{s}}&\lra \vec{s}[x:=v,y:=w]\\
    \Match{\Inl{v}}{x_1}{\vec{s}_1}{x_2}{\vec{s}_2}&\lra \vec{s}_1[x_1:=v]\\
    \Match{\Inr{v}}{x_1}{\vec{s}_1}{x_2}{\vec{s}_2}&\lra \vec{s}_2[x_2:=v]
  \end{align*}
  \[
    \begin{array}{c}
      \infer{s\,t\lra s\,\vec r}{t\lra \vec r}
      \qquad
      \infer{t\,v\lra r\,v}{t\lra r}
      \qquad
      \infer{t;\vec{s}\lra \vec r;\vec{s}}{t\lra \vec r}
      \qquad
      \infer{\LetP{x}{y}{t}{\vec{s}}\lra \LetP{x}{y}{\vec r}{\vec{s}}}{t\lra \vec r} \\[5pt]
      \infer{\Match{t}{x_1}{\vec{s}_1}{x_2}{\vec{s}_2}\lra \Match{\vec r}{x_1}{\vec{s}_1}{x_2}{\vec{s}_2}}{t\lra \vec r} \\[5pt]
      \infer{\alpha\cdot t+\vec s\lra\alpha\cdot\vec r+\vec s}{t\lra\vec r}
    \end{array}
  \]
  \caption{Rewrite rules}%
  \label{tab:AtomicEval}\vspace{-6pt}
\end{table*}

As the reader may have noticed from the grammar in Table~\ref{tab:Syntax} and
the congruence rules from Table~\ref{tab:cong-term}, there is no such a thing as
a ``null vector'' in the grammar, and so $0\cdot\vec{t}$ does not simplify.
Indeed, we do not want in general a null vector $\vec 0$ within term
distributions, since $0\cdot\vec v$ must have a type compatible with $\vec v$, while $\vec 0$ do not have any restriction. In fact, the set of value distributions do not form
a vector space for this reason. However, we still can define an analogous to an
inner product and, from it, a notion of orthogonality.

\begin{defi}[Pseudo inner product and orthogonality]\label{def:pseudoinnerproduct}
  Let $\vec v=\sum_{i=1}^n\alpha_i\cdot v_i$ and $\vec{w}=\sum_{j=1}^m\beta_j\cdot w_j$ be two value distributions in canonical form.
  Then we define the pseudo inner product
  $(\cdot\mid\cdot):\ValD\times\ValD\to\mathbb C$ as
  \[
    (\vec v\mid\vec w) :=
    \sum_{i=1}^n\sum_{j=1}^m\bar\alpha_i\beta_j\delta_{v_i,w_j}
  \]
  where $\delta_{v_i,w_j}$ is the Kronecker delta, i.e.~it is $1$ if $v_i=w_j$,
  and $0$ otherwise.
  We write $\vec v\perp\vec w$ if $(\vec v\mid\vec w)=0$.
\end{defi}

Remark that in this section, we are not defining the mathematical structure we
have, just pinpointing the fact that it is not a vector space, and so we cannot
define an inner product. However, we have defined a function, which we call
pseudo inner product, which is enough for the syntactic treatment of the
calculus we are introducing. In Section~\ref{sec:DenSem} we will give the
rigorous mathematical definitions needed to give a denotational semantics of
such a calculus.

\subsection{Types}
Types are produced by the following grammar
\[
  A:=
  \mathbb U\mid
  \sharp A\mid
  A+A\mid
  A\times A\mid
  A\rightarrow A
\]
The type $\sharp A$ is meant to type term distributions of pure terms of type $A$.
This is a subset of the grammar from
in~\cite{DiazcaroGuillermoMiquelValironLICS19}.
In particular, we do not include
the construction $\flat A$, however we use the notation $A^\flat$ (read: $A$ is flat) for the
following property: $A$ does not contain any $\sharp$, except, maybe, at the right of an arrow.
We also write $A\oplus
B:=\sharp(A+B)$ and $A\otimes B:=\sharp(A\times B)$.
In~\cite{DiazcaroGuillermoMiquelValironLICS19} there is also a type
$A\Rightarrow B$, which contains the superposition of arrows, which are valid
arrows. In our case, we decided to simplify the language by not allowing
superpositions of arrows to be arrows, and so this particular type construct is not used.

In Table~\ref{tab:subtyping} we give a subtyping relation between types. In particular, $\sharp\sharp A<\sharp A$ since a term distribution of term distributions of type $\sharp A$ is just a term distribution of pure terms of type $A$,

In Table~\ref{tab:types} we give the typing rules, where
we use the notation
  $\Gamma\vdash(\Delta_1\vdash \vec v_1\perp\Delta_2\vdash\vec v_2):A$ for
\[
  \left\{
    \begin{array}{l}
      \Gamma,\Delta_1\vdash\vec v_1:A\\
      \Gamma,\Delta_2\vdash\vec v_2:A\\
      \theta_{\Gamma,\Delta_1}(\vec v_1)\perp \theta_{\Gamma,\Delta_2}(\vec v_2)
    \end{array}
  \right.
\]
where for any context $\Gamma$, $\theta_\Gamma$ is a substitution of variables
by pure values of the same type. Notice that substituting a variable in a value by a pure value, keep the term being a value.
When $\Delta_1=\Delta_2=\emptyset$, we just write $\Gamma\vdash(\vec v_1\perp\vec v_2):A$.

The given type system is linear on types, except flat types (i.e.~any type $A$
such that $A^\flat$). Notice that the type system uses the notations from
Table~\ref{tab:notation}, for example, the rule
\[
  {\vcenter{\infer[^{\mathsf{Pair}}]{\Gamma,\Delta\vdash(\vec v,\vec w):A\times B}{\Gamma\vdash\vec v:A & \Delta\vdash\vec w:B}}}
\]
is in fact
\[
  {\vcenter{\infer[^{\mathsf{Pair}}]{\Gamma,\Delta\vdash\sum_{ij}\alpha_i\beta_j\cdot(v_i,w_j):A\times B}{\Gamma\vdash\sum_i\alpha_i\cdot v_i:A & \Delta\vdash\sum_j\beta_j\cdot w_j:B}}}
\]

\begin{table*}
  \[
    \begin{array}{c@{\qquad\qquad}c@{\qquad\qquad}c}
      {\infer{A\leq A}{}}
      &
      {\infer{A\leq C}{A\leq B & B\leq C}}
      &
      {\infer{A\leq\sharp A}{}}
      \qquad\qquad
      {\infer{\sharp\sharp A\leq\sharp A}{}}
      \\[1ex]
      {\infer{A'\rightarrow B\leq A\rightarrow B'}{A\leq A' & B\leq B'}}
      &
      \infer{A\times B\leq A'\times B'}{A\leq A' & B\leq B'}
      &
      \infer{A+B\leq A'+B'}{A\leq A' & B\leq B'}
    \end{array}
  \]
  \caption{Subtyping}%
  \label{tab:subtyping}
\end{table*}

\begin{table*}
  \[
    \begin{array}{c}
      {\vcenter{\infer[^{\mathsf{Ax}}]{x:A\vdash x:A}{}}}
      \qquad
      {\vcenter{\infer[^{\mathsf{Lam}}] {\Gamma\vdash\lambda x.\vec t:A\rightarrow B} {\Gamma,x:A\vdash\vec t:B}}}
      \qquad
      {\vcenter{\infer[^{\mathsf{App}}] {\Gamma,\Delta\vdash t\vec s:B} {\Gamma\vdash t:A\rightarrow B & \Delta\vdash\vec s:A}}}
      \\[10pt]
      {\vcenter{\infer[^{\mathsf{Void}}]{\vdash\Void:\mathbb U}{}}}
      \qquad
      {\vcenter{\infer[^{\mathsf{PureSeq}}]{\Gamma,\Delta\vdash t;\vec s:A}{\Gamma\vdash t:\mathbb U & \Delta\vdash\vec s:A}}}
      \qquad
      {\vcenter{\infer[^{\mathsf{UnitarySeq}}]{\Gamma,\Delta\vdash\vec t;\vec s:\sharp A}{\Gamma\vdash\vec t:\sharp \mathbb U & \Delta\vdash\vec s:\sharp A}}}
      \\[10pt]
      {\vcenter{\infer[^{\mathsf{Pair}}]{\Gamma,\Delta\vdash(\vec v,\vec w):A\times B}{\Gamma\vdash\vec v:A & \Delta\vdash\vec w:B}}}
      \\[10pt]
      {\vcenter{\infer[^{\mathsf{PureLet}}]{\Gamma,\Delta\vdash\mathsf{let}\ (x,y)=t\ \mathsf{in}\ \vec s:C}{\Gamma\vdash t:A\times B & \Delta,x:A,y:B\vdash\vec s:C}}}
      \\[10pt]
      {\vcenter{\infer[^{\mathsf{UnitaryLet}}]{\Gamma,\Delta\vdash\mathsf{let}\ (x,y)=\vec t\ \mathsf{in}\ \vec s:\sharp C}{\Gamma\vdash\vec t:A\otimes B & \Delta,x:\sharp A,y:\sharp B\vdash\vec s:\sharp C}}}
      \\[10pt]
      {\vcenter{\infer[^{\mathsf{InL}}]{\Gamma\vdash\mathsf{inl}(v):A+B}{\Gamma\vdash v:A}}}
      \qquad
      {\vcenter{\infer[^{\mathsf{InR}}]{\Gamma\vdash\mathsf{inr}(v):A+B}{\Gamma\vdash v:B}}}
      \\[10pt]
      {\vcenter{\infer[^{\mathsf{PureMatch}}]{\Gamma,\Delta\vdash\mathsf{match}\ t\ \{\mathsf{inl}(x_1)\mapsto\vec v_1\mid\mathsf{inr}(x_2)\mapsto\vec v_2\}:C}{\Gamma\vdash t:A+B & \Delta\vdash(x_1: A\vdash\vec v_1\perp x_2: B\vdash\vec v_2): C}}}
      \\[10pt]
      {\vcenter{\infer[^{\mathsf{UnitaryMatch}}]{\Gamma,\Delta\vdash\mathsf{match}\ \vec t\ \{\mathsf{inl}(x_1)\mapsto\vec v_1\mid\mathsf{inr}(x_2)\mapsto\vec v_2\}:\sharp C}{\Gamma\vdash\vec t:A\oplus B & \Delta\vdash(x_1:\sharp A\vdash\vec v_1\perp x_2:\sharp B\vdash\vec v_2):\sharp C}}}
      \\[10pt]
      \vcenter{\infer[^{\mathsf{Sup}}]{\vdash\sum_{j=1}^m\alpha_j\cdot\vec v_j:\sharp A}{\text{\scriptsize $(k\neq h)$} & \vdash(\vec v_k\perp\vec v_h):A & \sum_{j=1}^m|\alpha_j|^2=1 & m\geq 1 & A\neq B\rightarrow C }}
      \\[10pt]
      {\vcenter{\infer[^{\leq}]{\Gamma\vdash\vec t:B}{\Gamma\vdash\vec t:A & A\leq B}}}
      \qquad
      \vcenter{\infer[^{\equiv}]{\Gamma\vdash\vec r:A}{\Gamma\vdash\vec t:A & \vec t\equiv\vec r}}
      \\[10pt]
      {\vcenter{\infer[^{\mathsf{Weak}}]{\Gamma,x:A\vdash\vec t:B}{\Gamma\vdash\vec t:B & A^\flat}}}
      \qquad
      {\vcenter{\infer[^{\mathsf{Contr}}]{\Gamma,x:A\vdash\vec t[y:=x]:B}{\Gamma,x:A,y:A\vdash\vec t:B & A^\flat}}}
    \end{array}
  \]
  \caption{Typing system}%
  \label{tab:types}\vspace{-6pt}
\end{table*}

\section{Syntactic properties}\label{sec:ExpCorr}
This section is devoted to proving several syntactic properties of the calculus
introduced in the previous section. In particular, Progress
(Section~\ref{sec:progress}), Subject Reduction (Section~\ref{sec:SR}), and
Strong Normalization (Section~\ref{sec:SN}). The last property is shown by
proving that the calculus is a valid fragment with respect to the realizability
semantics given in~\cite{DiazcaroGuillermoMiquelValironLICS19}. Since \OC is a
fragment of a bigger calculus, we also show that it is expressive enough for quantum
computing (Section~\ref{sec:expressivity}).

\subsection{Progress}\label{sec:progress}
\begin{thm}[Progress]\label{thm:progress}
  If $\vdash \vec t:A$ and $\vec t$ does not reduce, then $\vec t\in\ValD$.
\end{thm}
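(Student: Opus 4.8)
The plan is to prove the equivalent dichotomy that every closed well-typed term distribution either reduces or already lies in $\ValD$, by induction on the derivation of $\vdash\vec t:A$. The observation that organises the argument is that the last rewrite rule of Table~\ref{tab:AtomicEval}, $\alpha\cdot t+\vec s\lra\alpha\cdot\vec r+\vec s$, lets reduction fire in any summand; hence a sum reduces as soon as one of its pure components does, and conversely $\vec t=\sum_i\alpha_i\cdot t_i\in\ValD$ exactly when every $t_i$ is a pure value. So in each case it suffices either to exhibit one reducing component or to check that all components are values.

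Several cases are immediate. The value-introduction rules ($\mathsf{Void}$, $\mathsf{Lam}$, $\mathsf{Pair}$, $\mathsf{InL}$, $\mathsf{InR}$, $\mathsf{Sup}$) conclude with a term that is already a value distribution, so $\vec t\in\ValD$ without even using the hypothesis. The rules $\mathsf{Ax}$, $\mathsf{Weak}$ and $\mathsf{Contr}$ carry a nonempty context in their conclusion and therefore cannot end a derivation of a closed judgement. For the subtyping rule $\leq$ the subject is unchanged, so the induction hypothesis applies verbatim; for the congruence rule $\equiv$ I would first record a small lemma that $\equiv$ preserves both ``reduces'' and ``belongs to $\ValD$'' (it only reorders sums and normalises scalars), after which the induction hypothesis again transfers.

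The substance is in the elimination rules $\mathsf{App}$, $\mathsf{PureSeq}$, $\mathsf{UnitarySeq}$, $\mathsf{PureLet}$, $\mathsf{UnitaryLet}$, $\mathsf{PureMatch}$ and $\mathsf{UnitaryMatch}$. Each such term is a redex-shaped context around a principal subterm, and I would apply the induction hypothesis to that subterm. If the principal subterm reduces, then expanding the construct through the notations of Table~\ref{tab:notation} and using the congruence and additive rewrite rules of Table~\ref{tab:AtomicEval} propagates the reduction to the whole term, summand by summand in the $\sharp$-typed cases. If instead the principal subterm is a value distribution, I invoke a canonical-forms lemma to see that its shape matches the redex pattern---a value of type $A\times B$ is a pair $\Pair{v}{w}$, of type $A+B$ is $\Inl{v}$ or $\Inr{v}$, of type $\mathbb U$ is $\Void$, and of a bare arrow type is a single abstraction $\Lam{x}{\vec t}$---and then fire the corresponding head rule against the already-computed value.

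The crux, and the step I expect to be hardest, is this canonical-forms/generation lemma together with its interaction with subtyping and the modality $\sharp$. Inverting through $\leq$ (in particular the clauses $A\leq\sharp A$ and $\sharp\sharp A\leq\sharp A$) requires showing that the only way to type a genuine superposition $\sum_i\alpha_i\cdot t_i$ with more than one distinct summand is via $\mathsf{Sup}$, possibly followed by $\leq$ or $\equiv$, so that such a distribution always carries a $\sharp$ in its type; equivalently, a distribution of a bare (non-$\sharp$) type, and in particular of an arrow type $A\rightarrow B$, must be a single pure term. This last point is precisely what makes the function-position rule $t\,v\lra r\,v$ applicable in the $\mathsf{App}$ case: since the function has a bare arrow type it is a single pure term whose reducts remain single pure terms, so either it is an abstraction (triggering the $\beta$-rule against the value argument obtained from the induction hypothesis on the argument) or it reduces and the reduction lifts to the application. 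Once the generation lemma is established, the remaining bookkeeping---matching each eliminator with its introduced value and checking that the $\sharp$-annotations line up---is routine.
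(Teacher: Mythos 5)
Your proposal is correct, and its engine is the same as the paper's: apply the induction hypothesis to the principal subterm of each eliminator, use a canonical-forms argument to see that a non-reducing (hence value) subterm has the redex shape dictated by its type, and conclude that the whole term fires a head rule. The difference is organizational but real: the paper inducts on the \emph{term} $\vec t$ and argues by contradiction in each case (``the only possibility is $\vec s\equiv\Void$, which is absurd since $\vec t$ does not reduce''), so it never sees the non-syntax-directed rules ($\leq$, $\equiv$, $\mathsf{Sup}$, $\mathsf{Weak}$, $\mathsf{Contr}$) as cases, whereas you induct on the \emph{typing derivation} and prove the dichotomy ``reduces or is a value,'' which forces you to dispatch those rules explicitly --- hence your auxiliary lemma that $\equiv$ preserves both reducibility and membership in $\ValD$, and your generation lemma pushing superpositions under $A\leq\sharp A$ and $\sharp\sharp A\leq\sharp A$. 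What your route buys is honesty about exactly the point the paper leaves implicit: the claim ``due to its type, the only possibility is\dots'' \emph{is} a canonical-forms/generation lemma, and its proof must thread through subtyping and the fact that a genuine multi-summand distribution can only enter the derivation via $\mathsf{Sup}$ and hence carries a $\sharp$. What the paper's route buys is brevity, since term-induction cases line up one-to-one with the rewrite rules. Note also that in the $\mathsf{App}$ case your ordering is the more careful one: since the function-position rule $t\,v\lra r\,v$ only fires against a \emph{value} argument, one must first use the induction hypothesis on the argument to get $\vec s\in\ValD$ before inferring that the function does not reduce; the paper's text jumps directly to ``neither does $s$'' without recording that step, which your version fills in.
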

\begin{proof}
  We proceed by induction on $\vec t$.
  \begin{itemize}
  \item If $\vec t$ is a value distribution, we are done.
  \item Let $\vec t=s\vec r$. Then $\vdash s:B\Rightarrow A$, but since $\vec t$
    does not reduce, neither does $s$, so, by the induction hypothesis $s\in\ValD$,
    and, due to its type, the only possibility is $s\equiv\lambda x.\vec s'$, which is
    absurd since $\vec t$ does not reduce.
  \item $\vec t=\vec s;\vec r$. Then there are two possibilities:
    \begin{itemize}
    \item $\vdash\vec s:\U$, but since $\vec t$ does not reduce, neither does
      $\vec s$, so, by the induction hypothesis $\vec s$ is a value, and, due its
      type, the only possibility is $\vec s\equiv\Void$, which is absurd since $\vec t$ does
      not reduce.
    \item $\vdash\vec s:\sharp\U$, but since $\vec t$ does not reduce, neither
      does $\vec s$, so, by the induction hypothesis $\vec s$ is a value, and, due its
      type, the only possibility is $\vec s\equiv\alpha\cdot\Void$, which is absurd since
      $\vec t$ does not reduce.
    \end{itemize}
  \item $\vec t=\LetP{x}{y}{\vec{s}}{\vec{r}}$. Then there are two
    possibilities:
    \begin{itemize}
    \item $\vdash\vec s:B\times C$, but since $\vec t$ does not reduce, neither
      does $\vec s$, so, by the induction hypothesis $\vec s$ is a value, and, due its
      type, the only possibility is $\vec s\equiv(v_1,v_2)$, which is absurd since $\vec t$
      does not reduce.
    \item $\vdash\vec s:\sharp(B\times C)$, but since $\vec t$ does not reduce,
      neither does $\vec s$, so, by the induction hypothesis $\vec s$ is a value, and,
      due its type, the only possibility is $\vec s\equiv(\vec v_1,\vec v_2)$, which is
      absurd since $\vec t$ does not reduce.
    \end{itemize}
  \item $\vec t=\Match{\vec s}{x_1}{\vec v_1}{x_2}{\vec v_2}$. Then there are
    two possibilities:
    \begin{itemize}
    \item $\vdash\vec s:B+C$, but since $\vec t$ does not reduce, neither does
      $\vec s$, so, by the induction hypothesis $\vec s$ is a value, and, due its
      type, the only possibilities are $\vec s\equiv\Inl{s'}$ or $\vec s\equiv\Inr{s'}$, both of
      which are absurd since $\vec t$ does not reduce.
    \item $\vdash\vec s:\sharp(B+C)$, but since $\vec t$ does not reduce,
      neither does $\vec s$, so, by the induction hypothesis $\vec s$ is a value, and,
      due its type, the only possibilities are $\vec s\equiv\Inl{\vec s'}$ or $\vec s\equiv\Inr{\vec s'}$, both of which are absurd since $\vec t$ does not reduce.
      \qedhere
    \end{itemize}
  \end{itemize}
\end{proof}

\subsection{Subject reduction}\label{sec:SR}
The type preservation with our chosen typing rules is proven now. We first need a substitution lemma.
\begin{lem}%
  \label{lem:SubstSR}
  Let $\Gamma,x:A\vdash\vec t:B$, $\Delta\vdash\vec v:A$, and $\Delta^\flat$, then $\Gamma,\Delta\vdash\vec t[x:=\vec v]:B$.
\end{lem}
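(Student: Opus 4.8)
The plan is to argue by induction on the derivation of $\Gamma,x:A\vdash\vec t:B$, with a case analysis on the last rule applied: in each case I would push the substitution inside the relevant constructor, invoke the induction hypothesis on the premises, and re-apply the same rule. Throughout I keep the bound variables of $\vec t$ disjoint from $\FV(\vec v)$ by $\alpha$-conversion, so that the expected syntactic identities hold, e.g.\ $(\Lam{y}{\vec s})[x:=\vec v]=\Lam{y}{(\vec s[x:=\vec v])}$ and $(t\,\vec s)[x:=\vec v]=t\,(\vec s[x:=\vec v])$ when $x\notin\FV(t)$. The base case $\mathsf{Ax}$ is immediate: the context is exactly $x:A$, so $\vec t=x$, $B=A$, and $\vec t[x:=\vec v]=\vec v$, which is typed by the hypothesis $\Delta\vdash\vec v:A$. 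The introduction and elimination rules ($\mathsf{Lam}$, $\mathsf{App}$, $\mathsf{Pair}$, $\mathsf{InL}$, $\mathsf{InR}$, and the sequence/let/match rules) are then handled by locating $x$ in the unique premise whose context contains it, applying the induction hypothesis there, and reassembling, the extra context $\Delta$ travelling with that premise; the $\leq$ and $\equiv$ cases simply transport the substitution.

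The two cases that consume the hypothesis $\Delta^\flat$ are $\mathsf{Weak}$ and $\mathsf{Contr}$. In $\mathsf{Weak}$, if the weakened variable is $x$ itself then $x\notin\FV(\vec t)$, so $\vec t[x:=\vec v]=\vec t$ and $\Delta$ is added by weakening, which is legitimate precisely because $\Delta^\flat$. In $\mathsf{Contr}$, when the contracted variable carries the flat type $A$, the substitution must replace two occurrences of $x$ by $\vec v$; this duplicates $\Delta$ and the two copies are re-merged by contraction, again available exactly because $\Delta^\flat$. Here it is useful that a flat $A$ forces $\vec v$ to be a pure value (a genuine superposition can only be typed with a leading $\sharp$), so the duplication is harmless. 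Note also that $\mathsf{Sup}$ need not be treated as a last rule with $x$ present, since it requires an empty context and so only reaches a non-empty context through $\mathsf{Weak}$.

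The delicate point, and the step I expect to be the main obstacle, is verifying that the substitution preserves both the grammatical separation between pure terms and distributions and the exact type $B$ (without spuriously introducing a $\sharp$). Several rules ($\mathsf{App}$, $\mathsf{PureSeq}$, $\mathsf{PureLet}$, $\mathsf{PureMatch}$, $\mathsf{InL}$, $\mathsf{InR}$) require their principal subject to be a \emph{pure} term or value, whereas $\vec v$ may be a proper superposition. For the function position of $\mathsf{App}$ this is harmless, because an arrow type can never be assigned to a genuine superposition (the side condition $A\neq B\to C$ of $\mathsf{Sup}$, together with the fact that no subtyping derives an arrow from a $\sharp$-type), so such an $x$ is necessarily replaced by a pure value; similarly for $\mathsf{PureSeq}$ (type $\U$) and for $\mathsf{PureMatch}$, since no rule builds a bare distribution at a sum type $A+B$. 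The care concentrates at product types, where $\mathsf{Pair}$ \emph{does} produce distributions, and at the boundary between the Pure rules and their $\mathsf{Unitary}$ counterparts: for each occurrence of $x$ one must track whether it lands in a position governed by a Pure rule or by the corresponding $\mathsf{Unitary}$ rule, and check that the substituted term still matches one of the two and retains its type. Getting this case analysis to close — reconciling the $\sharp$-free typing of the pure rules with the superpositions that $\vec v$ may carry — is the real content of the lemma; the remainder is mechanical bookkeeping.
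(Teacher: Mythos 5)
Your proof follows essentially the same route as the paper's: a structural induction (the paper inducts on the shape of $\vec t$ and appeals to a ``straightforward generation lemma'' where you induct on the derivation, making $\mathsf{Weak}$, $\mathsf{Contr}$, $\leq$ and $\equiv$ explicit cases), pushing the substitution through each constructor, re-applying the same Pure or Unitary rule after the induction hypothesis, and spending the hypothesis $\Delta^\flat$ exactly where you do, in the weakening and contraction cases. The purity-versus-distribution subtlety you single out in your last paragraph is genuine, but the paper's own proof does not address it either --- it silently assumes each substituted subterm still fits the rule being reassembled, and the only side condition it checks explicitly is that the $\perp$ requirement of the match rules is preserved under substitution (a point your sketch omits); so your treatment is, if anything, more cautious than the published one.
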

\begin{proof}
  By induction on $\vec t$.
  \begin{itemize}
  \item If $x\notin\FV(\vec t)$ then a
    straightforward generation lemma shows that $x:A$ can be
    removed from $\Gamma,x:A\vdash\vec t:B$, and from $\Gamma\vdash\vec t:B$ we
    can derive $\Gamma,\Delta\vdash\vec t:B$ by rule $\s{Weak}$. Notice that
    $t[x:=\vec v]=\vec t$.
  \item Let $\vec t=x$, then $\Gamma^\flat$ and $A\leq B$, and so, by
    rules $\leq$ and $\mathsf{Weak}$, we have $\Gamma,\Delta\vdash\vec v:B$. Notice that
    $x[x:=\vec v]=\vec v$.
  \item Let $\vec t=\lambda y.\vec s$, then $C\Rightarrow D\leq B$ and
    $\Gamma,x:A,y:C\vdash\vec s:D$. Hence, by the induction hypothesis,
    $\Gamma,\Delta,y:C\vdash\vec s[x:=\vec v]:D$. Therefore, by rules $\s{Lam}$
    and $\leq$,
    $\Gamma,\Delta\vdash\lambda y.\vec s[x:=\vec v]:B$. Notice that $\lambda y.\vec s[x:=\vec v]=(\lambda y.\vec s)[x:=\vec v]$.
  \item Let $\vec t=(v_1,v_2)$, then $\Gamma_1\vdash v_1:B_1$ and
    $\Gamma_2\vdash v_2:B_2$ with $(\Gamma_1,\Gamma_2)=(\Gamma,x:A)$ and $B_1\times
    B_2=B$. Assume that $\Gamma_1=(\Gamma_1',x:A)$. Then, by the induction
    hypothesis, $\Gamma_1',\Delta\vdash v_1[x:=\vec v]:B_1$, and so, by rule
    $\s{Pair}$, $\Gamma_1',\Gamma_2,\Delta\vdash (v_1[x:=\vec v],v_2):B$. Notice that
    $(\Gamma_1',\Gamma_2)=\Gamma$ and $(v_1[x:=\vec v],v_2)=(v_1,v_2)[x:=\vec v]$.
    The case where $\Gamma_2=(\Gamma_2',x:A)$ is analogous.
  \item Let $\vec t=\Inl w$, then $\Gamma,x:A\vdash w:B_1$, with $B=B_1+B_2$,
    so, by the induction hypothesis, $\Gamma,\Delta\vdash w[x:=\vec v]:B_1$ and, by rule
    $\s{InL}$, $\Gamma,\Delta\vdash\Inl{w[x:=\vec v]}:B$. Notice that $\Inl{w[x:=\vec v]}=\Inl
    w[x:=\vec v]$.
  \item Let $\vec t=\Inr w$, this case is analogous to the previous.
  \item Let $\vec t=s\vec r$, then
    $\Gamma_1\vdash s:C\Rightarrow D$ and $\Gamma_2\vdash\vec r:C$,
      with $D\leq B$ and $(\Gamma_1,\Gamma_2)=(\Gamma,x:A)$,
    Assume $\Gamma_1=(\Gamma_1',x:A)$. Then, by the
    induction hypothesis, $\Gamma_1',\Delta\vdash s[x:=\vec v]:C\Rightarrow D$. Thus, by
    rules $\s{App}$, and $\leq$,
    $\Gamma_1',\Gamma_2,\Delta\vdash s[x:=\vec v]\vec r:B$. Notice that
    $(\Gamma_1',\Gamma_2)=\Gamma$ and $s[x:=\vec v]\vec r=(s\vec r)[x:=\vec v]$. The case where
    $\Gamma_2=(\Gamma_2',x:A)$ is analogous.
  \item Let $\vec t=\vec r;\vec s$, then there are two possibilities:
    \begin{enumerate}
    \item Either $\Gamma_1\vdash \vec r:\U$ and $\Gamma_2\vdash s:B$, with
      $(\Gamma_1,\Gamma_2)=(\Gamma,x:A)$,
    \item or $\Gamma_1\vdash \vec r:\sharp\U$ and $\Gamma_2\vdash\vec s:\sharp C$, with
      $B=\sharp C$, and $(\Gamma_1,\Gamma_2)=(\Gamma,x:A)$.
    \end{enumerate} In any case, assume $\Gamma_1=(\Gamma_1',x:A)$. Then, by the
    induction hypothesis, $\Gamma_1',\Delta\vdash\vec r[x:=\vec v]:E$ (with $E=\U$ in the
    first case or $E=\sharp\U$ in the second). Thus, by rules $\s{PureSeq}$ or
    $\s{UnitarySeq}$, $\Gamma_1',\Gamma_2,\Delta\vdash\vec r[x:=\vec v];\vec s:B$. Notice that
    $(\Gamma_1',\Gamma_2)=\Gamma$ and $\vec r[x:=\vec v];\vec s=(\vec r;\vec s)[x:=\vec v]$. The case where
    $\Gamma_2=(\Gamma_2',x:A)$ is analogous.
  \item Let $\vec t=\LetP{x_1}{x_2}{\vec r}{\vec s}$, then there are two possibilities:
    \begin{enumerate}
    \item Either $\Gamma_1\vdash\vec r:C\times D$ and
      $\Gamma_2,x_1:C,x_2:D\vdash\vec s:B$, with $(\Gamma_1,\Gamma_2)=(\Gamma,x:A)$,
    \item or $\Gamma_1\vdash\vec r:\sharp(C\times D)$ and $\Gamma_2,x_1:\sharp
      C,x_2:\sharp D\vdash\vec s:\sharp C$, with $B=\sharp C$ and
      $(\Gamma_1,\Gamma_2)=(\Gamma,x:A)$.
    \end{enumerate} In any case, assume $\Gamma_1=(\Gamma_1',x:A)$. Then, by the
    induction hypothesis, $\Gamma_1',\Delta\vdash\vec r[x:=\vec v]:E$ (with $E=C\times D$ in
    the first case or $E=\sharp(C\times D)$ in the second). Thus, by rules
    $\s{PureLet}$ or $\s{UnitaryLet}$, $\Gamma_1',\Gamma_2,\Delta\vdash \Gamma\vdash
    \LetP{x_1}{x_2}{\vec r[x:=\vec v]}{\vec s}:B$. Notice that $(\Gamma_1',\Gamma_2)=\Gamma$
    and $\LetP{x_1}{x_2}{\vec r[x:=\vec v]}{\vec s}=(\LetP{x_1}{x_2}{\vec r}{\vec s})[x:=\vec v]$. The
    case where $\Gamma_2=(\Gamma_2',x:A)$ is analogous.
  \item Let $\vec t=\Match{\vec r}{x_1}{\vec v_1}{x_2}{\vec v_2}$, then there
    are two possibilities:
    \begin{enumerate}
    \item Either $\Gamma_1\vdash\vec r:C+D$ and $\Gamma_2\vdash(x_1:C\vdash\vec v_1\perp x_2:D\vdash\vec v_2):B$, with $(\Gamma_1,\Gamma_2)=(\Gamma,x:A)$,
    \item or $\Gamma_1\vdash\vec r:\sharp(C+D)$ and $\Gamma_2\vdash(x_1:\sharp
      C\vdash\vec v_1\perp x_2:\sharp D\vdash\vec v_2):\sharp C$, with $B=\sharp C$
      and $(\Gamma_1,\Gamma_2)=(\Gamma,x:A)$.
    \end{enumerate}
    In any case, assume $\Gamma_1=(\Gamma_1',x:A)$. Then, by the induction
    hypothesis, $\Gamma_1',\Delta\vdash\vec r[x:=\vec v]:E$ (with $E=C+D$ in the
    first case or $E=\sharp(C+D)$ in the second). Thus, by rules $\s{PureMatch}$ or
    $\s{UnitaryMatch}$, $\Gamma_1',\Gamma_2,\Delta\vdash \Gamma\vdash \Match{\vec r[x:=\vec v]}{x_1}{\vec v_1}{x_2}{\vec v_2}:B$. Notice that
    $(\Gamma_1',\Gamma_2)=\Gamma$ and $\Match{\vec r[x:=\vec v]}{x_1}{\vec v_1}{x_2}{\vec v_2}=(\Match{\vec r}{x_1}{\vec v_1}{x_2}{\vec v_2})[x:=\vec v]$.

    Now assume $\Gamma_2=(\Gamma_2',x:A)$, then by the induction hypothesis,
    $\Gamma_2,\Delta,x_1:C\vdash\vec v_1[x:=\vec v]:B$ and
    $\Gamma_2,\Delta,x_2:D\vdash\vec v_2[x:=\vec v]:B$. Notice that the $\perp$
    condition is preserved under substitution, therefore,
    $\Gamma_2,\Delta\vdash(x_1:C\vdash\vec v_1[x:=\vec v]\perp x_2:D\vdash\vec v_2[x:=\vec v]):B$, and we close analogously to the previous case.
  \item Let $\vec t=\sum_{i=1}^n\alpha_i\cdot\vec v_i$, then $x\notin\FV(\vec t)$ and we are in the first case.
    \qedhere
  \end{itemize}
\end{proof}

\begin{thm}[Subject reduction]\label{thm:SR}
  If $\Gamma\vdash\vec t:A$ and $\vec t\lra\vec s$, then $\Gamma\vdash\vec s:A$.
\end{thm}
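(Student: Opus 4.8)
The plan is to argue by induction on the derivation of the reduction $\vec t\lra\vec s$, reading off from the last rule applied which of the five base rewrite rules of Table~\ref{tab:AtomicEval} or which of the contextual rules is at play. In every case the first move is an inversion step: from $\Gamma\vdash\vec t:A$ I recover the typings of the immediate subterms of $\vec t$. Because the typing system is not syntax-directed---the rules $\leq$, $\equiv$, $\s{Weak}$ and $\s{Contr}$ may intervene---this inversion is not immediate, so I would first isolate a generation lemma that, for each term former, collects the possible premises modulo a chain of subtypings, congruences, weakenings and contractions. This is exactly the kind of case analysis already visible in the proofs of Progress (Theorem~\ref{thm:progress}) and of Lemma~\ref{lem:SubstSR}, where each elimination former splits into a ``pure'' and a ``unitary'' ($\sharp$) possibility.

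For the five base cases the reduct is obtained by substituting pure values into a body, so each is closed by an appeal to the substitution lemma (Lemma~\ref{lem:SubstSR}). For $(\Lam{x}{\vec t})\,v\lra\vec t[x:=v]$, inversion on $\s{App}$ and then on $\s{Lam}$ yields $\Gamma_1,x:A'\vdash\vec t:B'$ and $\Delta_1\vdash v:A'$ with the relevant subtyping inequalities; Lemma~\ref{lem:SubstSR} then gives $\Gamma_1,\Delta_1\vdash\vec t[x:=v]:B'$, and the $\leq$ rule restores the target type. The sequencing case $\Void;\vec s\lra\vec s$ is immediate, and $\LetP{x}{y}{\Pair{v}{w}}{\vec s}$ needs two successive substitutions. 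For the two $\matchkeyword$ rules one keeps a single branch, so the orthogonality premise carried by the judgement $\Gamma_2\vdash(x_1{:}A\vdash\vec v_1\perp x_2{:}B\vdash\vec v_2):C$ is simply discarded after selecting $\vec v_1$ (resp.\ $\vec v_2$) and substituting.

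The contextual rules are the inductive steps. For each of $s\,\vec r$, $r\,v$, $\vec r;\vec s$, $\LetP{x}{y}{\vec r}{\vec s}$ and the $\matchkeyword$ congruence, inversion exposes the type of the reducing subterm, the induction hypothesis preserves it, and the same former rebuilds the judgement---again with the pure/unitary dichotomy handled exactly as in Lemma~\ref{lem:SubstSR}. The remaining rule $\alpha\cdot t+\vec s\lra\alpha\cdot\vec r+\vec s$ forces a little more care: a term distribution carrying a reducible (non-value) summand is never typed by $\s{Sup}$ (whose premises are value distributions) but arises through the linear notation of Table~\ref{tab:notation}, i.e.\ as an elimination distributed over a value distribution. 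I would therefore extend the generation lemma to term distributions, read $\alpha\cdot t+\vec s$ back as such an eliminated form, reduce the relevant component by the induction hypothesis, and reassemble.

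The step I expect to be the real obstacle is matching the hypotheses of Lemma~\ref{lem:SubstSR}, and specifically its flatness side-condition $\Delta^\flat$, in the substitution-based cases. When the bound variable has a non-flat ($\sharp$) type it is used linearly, a single occurrence is replaced, and the side-condition is not exercised; but when its type is flat the argument may be duplicated or discarded, and soundness then depends on its context being flat as well, so that no linear resource hidden inside a duplicable value is copied. Establishing this---essentially a lemma to the effect that, in the positions where duplication can occur, a flat-typed value is typed in a flat context, together with the attendant case split on the flatness of the bound variable's type---is where the substructural discipline and the substitution machinery interact most delicately, and is the part of the argument I would treat most carefully.
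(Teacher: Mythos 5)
Your plan is, in outline, exactly the paper's proof: the paper disposes of this theorem in one line (``by induction on the relation $\lra$\,, using Lemma~\ref{lem:SubstSR}''), and your case split---substitution lemma for the redex rules, induction hypothesis plus re-application of the same typing rule for the contextual closures---is that argument made explicit. Your reading of the rule $\alpha\cdot t+\vec s\lra\alpha\cdot\vec r+\vec s$ is also the correct one: a distribution with a non-value summand is never typed by $\s{Sup}$ (whose premises are closed value distributions) but only through the notations of Table~\ref{tab:notation} and the $\s{Unitary}$ rules; the paper does this exact bookkeeping, though only in the appendix proof of soundness (Theorem~\ref{thm:soundness}), not here.

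The genuine problem is the one you isolate in your last paragraph, and your proposed repair does not close it. The auxiliary lemma you want---a value of flat type occurring in a discardable or duplicable position is typed in a flat context---is false: $y:\sharp\U\vdash\lambda z.y:\U\rightarrow\sharp\U$ is derivable ($\s{Ax}$, then $\s{Weak}$ on $z:\U$, then $\s{Lam}$), and $\U\rightarrow\sharp\U$ \emph{is} flat because its only $\sharp$ is to the right of an arrow, while $\sharp\U$ is not flat. This is the classic problem of a linear resource captured inside a duplicable closure, and it cannot be fixed at the level of the proof: with that value as argument, $y:\sharp\U\vdash(\lambda x.\Void)(\lambda z.y):\U$ is derivable (weakening $x:\U\rightarrow\sharp\U$ into the body $\Void$), the term reduces to $\Void$, yet $y:\sharp\U\vdash\Void:\U$ is not derivable, since the only rule that could introduce $y$ into the context is $\s{Weak}$ and it demands flatness. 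So in the stated generality the theorem itself fails; one must either restrict the statement (to closed terms, or to contexts $\Gamma$ with $\Gamma^\flat$---which is all the paper ever uses, e.g.\ Theorem~\ref{thm:completeness} applies subject reduction only to closed terms, and under which Lemma~\ref{lem:SubstSR}'s hypothesis $\Delta^\flat$ becomes available) or change the system so that abstractions capturing non-flat variables do not receive flat types. Note that the paper's own one-line proof has the same hole---it invokes Lemma~\ref{lem:SubstSR} in the beta, let and match cases without $\Delta^\flat$ being available, and even in your ``linear'' case one needs a variant of that lemma, since as stated it requires $\Delta^\flat$ unconditionally. Your instinct about where the delicacy lies is exactly right; what is missing is the recognition that it is resolved by weakening the statement, not by the lemma you propose.
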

\begin{proof}
  By induction on the relation $\lra$. The proof is straightforward, using
  Lemma~\ref{lem:SubstSR}.
\end{proof}

\subsection{Strong normalization}\label{sec:SN}
We prove that \OC is valid with respect to the realizability model given
in~\cite{DiazcaroGuillermoMiquelValironLICS19} (Theorem~\ref{thm:correctness}),
which implies strong normalization (Corollary~\ref{cor:SN})\footnote{Notice,
  however, that subject reduction is not implied by the correctness with respect
  to the realizability semantics, since it may be the case that a term $\vec t$
  reduces to a term $\vec t'$, both in the semantics, but the second not typable
  with the typing rules chosen. This is why we have given a direct syntactic proof
  of such a property (Section~\ref{sec:SR}).}.

In~\cite{DiazcaroGuillermoMiquelValironLICS19} a realizability
semantics has been defined, and the type system of the language is determined by any rule
following such a semantics. In particular, the realizability
predicate~\cite[Def. IV.2]{DiazcaroGuillermoMiquelValironLICS19} states that
a term $\vec t$ is a realizer of a type $A$ (notation $\vec t\Vdash A$) if and only if $\vec t$ rewrites to a value in the interpretation of $A$.
Then, a typing judgement $\vdash\vec t:A$ is valid if and only if, $\vec t\Vdash
A$. In this paper, we have fixed a set of typing rules in Table~\ref{tab:types},
some of which are already proven to be valid
in~\cite{DiazcaroGuillermoMiquelValironLICS19}, while others are proved to be correct
next (Theorem~\ref{thm:correctness}).
For the sake of self-containment, we include the needed definitions from the
realizability semantics~\cite{DiazcaroGuillermoMiquelValironLICS19}.

Let $\mathcal{S}_1=\{\vec v:(\vec v|\vec v)=1\}$.
The interpretation $\semR \cdot$ of types is given by
\begin{align*}
  \semR{\U}&=\{\Void\}\\
  \semR{\sharp A} &=\mathsf{span}(\semR A)\cap\mathcal{S}_1\\
  \semR{A+B} &=\{\Inl{\vec v}:\vec v\in\semR{A}\}\cup\{\Inr{\vec w}:\vec w\in\semR B\}\\
  \semR{A\times B} &=\{(\vec v,\vec w):\vec v\in\semR A,\vec w\in\semR B\}\\
  \semR{A\rightarrow B} &=\{\lambda x.\vec t:\forall\vec v\in\semR A,\vec t\langle x:=\vec v \rangle\Vdash B\}
\end{align*}
where
$\vec t\Vdash A$ means that $\vec t$ reduces to a value in $\semR A$, and $\vec t\langle x:=\vec v \rangle$ is the bilinear substitution defined as follows:
Let $\vec t=\sum_i\alpha_i\cdot t_i$ and $\vec v=\sum_j\beta_j\cdot v_j$. Then,
$\vec t\langle  x:=\vec v \rangle:=\sum_i\sum_j\alpha_i\beta_j\cdot t_i[x:=v_j]$.

If $\sigma$ is a substitution, we may write $\vec t\langle \sigma \rangle$ for
the term distribution $\vec t$ substituted by $\sigma$.
In addition, we write $\sigma\in\semR\Gamma$ if for all $x:A\in\Gamma$,
$x\langle \sigma \rangle\Vdash A$.

Finally, the realizability semantics defines the typing rules as follows:
$\Gamma\vdash\vec t:A$ is a notation for $\forall\sigma\in\semR\Gamma$, $\vec t\langle \sigma \rangle\Vdash A$. Hence, we need to prove that the typing system
presented in Table~\ref{tab:types} is correct, which is done by Theorem~\ref{thm:correctness}.

\begin{thm}[Correctness]\label{thm:correctness}
  If $\Gamma\vdash\vec t:A$, then for all $\sigma\in\semR\Gamma$, we have $\vec t\<\sigma\>\Vdash A$.
\end{thm}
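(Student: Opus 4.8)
The plan is to proceed by induction on the derivation of $\Gamma\vdash\vec t:A$, with one case per typing rule of Table~\ref{tab:types}, in each case assuming the statement for the premises and establishing it for the conclusion. Because the realizability semantics of \cite{DiazcaroGuillermoMiquelValironLICS19} already validates several of the rules we have selected, for those I would simply appeal to that validity; the real content is checking the remaining rules directly against the interpretation $\semR{\cdot}$ and the predicate $\vec t\Vdash A$ (``$\vec t$ reduces to a value in $\semR A$''). Throughout, fix an arbitrary $\sigma\in\semR\Gamma$ and aim to show $\vec t\<\sigma\>\Vdash A$.

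The structural and ``pure'' rules are routine unfoldings. For $\mathsf{Ax}$ the goal is exactly the hypothesis $\sigma\in\semR{x:A}$. For $\mathsf{Lam}$, the term $(\lambda x.\vec t)\<\sigma\>$ is already a value, and its membership in $\semR{A\rightarrow B}$ follows from the induction hypothesis applied to $\sigma$ extended by $x:=\vec v$ for an arbitrary $\vec v\in\semR A$. For $\mathsf{App}$, $\mathsf{PureSeq}$, $\mathsf{PureLet}$ and $\mathsf{PureMatch}$ I would use the call-by-pure-value strategy to drive each subterm to a value in the interpretation of its type, fire the resulting redex, and close by the induction hypothesis on the continuation under the environment extended with the (bilinear) substitution the redex produces. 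The rules $\mathsf{Void}$, $\mathsf{Pair}$, $\mathsf{InL}$ and $\mathsf{InR}$ are immediate, since the terms are already values and realizing the type reduces to membership in $\semR{\cdot}$; and $\mathsf{Weak}$, $\mathsf{Contr}$ follow by restricting, respectively extending, $\sigma$, the side-condition $A^\flat$ being exactly what justifies the duplication in $\mathsf{Contr}$.

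The rules $\equiv$ and $\leq$ need a small auxiliary lemma. For $\equiv$ it suffices that reduction and the predicate $\Vdash$ are stable under the term congruence. For $\leq$ I would prove, by induction on the subtyping rules of Table~\ref{tab:subtyping}, that $A\leq B$ implies $\semR A\subseteq\semR B$; the only points requiring argument are $A\leq\sharp A$ and $\sharp\sharp A\leq\sharp A$. The first rests on a lemma, itself proved by induction on types, that every realizer is a unit vector for the pseudo inner product, so that $\semR A\subseteq\mathsf{span}(\semR A)\cap\mathcal{S}_1=\semR{\sharp A}$; the second on the elementary inclusion $\mathsf{span}(\mathsf{span}(\semR A)\cap\mathcal{S}_1)\subseteq\mathsf{span}(\semR A)$.

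The genuinely quantum cases are $\mathsf{Sup}$, $\mathsf{UnitarySeq}$, $\mathsf{UnitaryLet}$ and $\mathsf{UnitaryMatch}$, and this is where I expect the main obstacle. For $\mathsf{Sup}$ the $\vec v_j$ are already value distributions, so the induction hypothesis places each in $\semR A$; by the unit-vector lemma each has norm $1$, and the premises $\vec v_k\perp\vec v_h$ then give $(\sum_j\alpha_j\vec v_j\mid\sum_j\alpha_j\vec v_j)=\sum_j|\alpha_j|^2=1$, so $\sum_j\alpha_j\vec v_j\in\mathsf{span}(\semR A)\cap\mathcal{S}_1=\semR{\sharp A}$. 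For $\mathsf{UnitaryMatch}$ the scrutinee reduces to a norm-$1$ superposition of injected values; the match distributes over it, the two branches fire on the respective components, and one must argue that the result is again norm-$1$ and lies in $\semR{\sharp C}$. This is the crux: it requires that orthogonal inputs are sent to orthogonal outputs by each branch (i.e.\ that the branch maps act as isometries) and that the two branches have orthogonal images, so that the norms add up to the norm $1$ of the scrutinee. The first is a consequence of linear typing, and the second is exactly what the premise $(x_1{:}\sharp A\vdash\vec v_1\perp x_2{:}\sharp B\vdash\vec v_2)$ encodes, through its stability under the substitutions instantiating $x_1,x_2$; $\mathsf{UnitaryLet}$ is analogous with a single entangled component, and $\mathsf{UnitarySeq}$ is the degenerate case. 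I would therefore isolate an orthogonality- and norm-preservation lemma for reduction and bilinear substitution, after which the quantum cases close by the norm computation just sketched and everything else is routine inductive bookkeeping.
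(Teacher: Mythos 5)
Your proposal is correct and takes essentially the same route as the paper: induction on the typing derivation against the realizability predicate, deferring to the validity results of~\cite{DiazcaroGuillermoMiquelValironLICS19} for rules already established there, and your direct treatments of $\mathsf{Lam}$, $\mathsf{App}$ and $\mathsf{Sup}$ coincide with the paper's proof (these three are in fact the \emph{only} cases the paper proves). The additional cases you sketch --- the $\mathsf{Unitary}$ rules, the subtyping inclusion $\semR{A}\subseteq\semR{B}$, and the norm-preservation lemma --- are ones the paper simply outsources wholesale to~\cite{DiazcaroGuillermoMiquelValironLICS19}, so your extra work is a harmless (indeed more self-contained) elaboration rather than a divergence in method.
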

\begin{proof}
  We only prove the judgements that are not already proved
  in~\cite{DiazcaroGuillermoMiquelValironLICS19}. We proceed by induction on the
  typing derivation.
  \begin{itemize}
  \item Rule $\mathsf{Lam}$. By the induction hypothesis,
    $\forall(\sigma,x:=\vec w)\in\semR{\Gamma,x:A}$, $\vec t\<\sigma,x:=\vec w\>\Vdash B$.
    Therefore, by definition, $(\lambda x.\vec t)\<\sigma\>=\lambda x.\vec t\<\sigma\>\in\semR{A\rightarrow B}$. So, $(\lambda x.\vec t)\<\sigma\>\Vdash
    A\rightarrow B$.

  \item Rule $\mathsf{App}$. By the induction hypothesis,
    $\forall\sigma\in\semR\Gamma$, $t\<\sigma\>\Vdash A\rightarrow B$ and
    $\forall\theta\in\semR\Delta$, $\vec s\<\theta\>\Vdash A$. Then, by definition,
    $t\<\sigma\>\lra^*\lambda x.\vec r$ and $\vec s\<\theta\>\lra^*\vec v\in\semR A$
    such that $\vec r\<x:=\vec v\>\Vdash B$. Since $(t\vec s)\<\sigma,\theta\>\lra^*\vec r\<x:=\vec v\>$, we have, $(t\vec s)\<\sigma,\theta\>\Vdash B$.

  \item Rule $\mathsf{Sup}$. For all $j$, by the induction hypothesis, we have
    that $\vec v_j\in\semR A$. Since for all $k\neq h$ we have $\vec v_k\perp\vec v_h$, and $\sum_j|\alpha_j|^2=1$, we have $\sum_j\alpha_j\cdot\vec v_j\in\So$,
    and so $\sum_j\alpha_j\cdot\vec v_j\in\semR{\sharp A}$. Hence,
    $\sum_j\alpha_j\cdot\vec v_j\Vdash\sharp A$.
    \qedhere
  \end{itemize}
\end{proof}

The following corollary is a direct consequence of Theorems~\ref{thm:progress}
and~\ref{thm:correctness}.

\begin{cor}[Strong normalization]\label{cor:SN}
  If $\Gamma\vdash\vec t:A$ then $\vec t$ is strongly normalizing.
  \qed
\end{cor}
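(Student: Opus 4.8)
The plan is to obtain strong normalization as a corollary of adequacy (Theorem~\ref{thm:correctness}) by exploiting that the realizability interpretation of~\cite{DiazcaroGuillermoMiquelValironLICS19} is built from strongly normalizing term distributions: each truth value $\semR A$, and hence the predicate $\Vdash A$, behaves as a reducibility candidate, so $\vec u\Vdash A$ already entails that $\vec u$ is strongly normalizing (this is a property of the semantics, established in~\cite{DiazcaroGuillermoMiquelValironLICS19}, and is exactly why validity with respect to that model yields termination). Given a derivable $\Gamma\vdash\vec t:A$, it therefore suffices to find a closing substitution $\sigma_0\in\semR\Gamma$, to apply Correctness, and to transfer strong normalization from $\vec t\<\sigma_0\>$ back to $\vec t$.

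First I would check that every $\semR A$ is inhabited by a closed pure value, by induction on $A$: the base case is $\Void\in\semR{\U}$; injections and pairs of inhabitants inhabit $\semR{A+B}$ and $\semR{A\times B}$; a constant abstraction $\Lam{x}{\vec r}$ with $\vec r\in\semR B$ inhabits $\semR{A\rightarrow B}$; and since any pure value $v$ satisfies $(v\mid v)=1$, such a $v\in\semR A$ already lies in $\mathsf{span}(\semR A)\cap\So=\semR{\sharp A}$. Collecting one inhabitant for each declaration in $\Gamma$ yields a substitution $\sigma_0\in\semR\Gamma$ mapping the free variables of $\vec t$ to closed values. By Theorem~\ref{thm:correctness}, $\vec t\<\sigma_0\>\Vdash A$, hence $\vec t\<\sigma_0\>$ is strongly normalizing; and by Progress (Theorem~\ref{thm:progress}) its normal form is a genuine value distribution, so reduction terminates at a value rather than getting stuck, reconciling the realizability notion ``reduces to a value'' with actual termination.

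It then remains to transfer strong normalization back to the open term $\vec t$. The key is a substitution--reduction commutation lemma stating that whenever $\vec t\lra\vec t'$ then $\vec t\<\sigma_0\>\lra^{+}\vec t'\<\sigma_0\>$: since $\sigma_0$ replaces variables by values, the bilinear substitution neither erases nor creates the redex contracted in $\vec t$ (a redex of the rules of Table~\ref{tab:AtomicEval} remains a redex after substituting values for variables, and substitution commutes with contraction up to $\alpha$-renaming), it merely possibly duplicates it through the bilinear distribution. Consequently an infinite reduction $\vec t\lra\vec t_1\lra\vec t_2\lra\cdots$ would lift to an infinite reduction of $\vec t\<\sigma_0\>$, contradicting its strong normalization, whence $\vec t$ is strongly normalizing. \textbf{The main obstacle} is precisely this transfer step: one must verify the commutation lemma carefully for the bilinear substitution $\langle x:=\vec v\rangle$ acting on superpositions, ensuring in particular that contracting a component of a sum lifts to a contraction in each summand produced by the substitution, so that at least one step survives and the lifted reduction sequence stays infinite.
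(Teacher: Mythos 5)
Your overall strategy (reduce the corollary to Theorem~\ref{thm:correctness}) is the same one the paper gestures at, but your elaboration rests on a false lemma: the inhabitation claim does not hold, and it fails for the very reason this calculus is interesting. The substitution used by the realizability semantics is \emph{bilinear}, so for $x\notin\FV(\vec r)$ and $\vec v=\sum_j\beta_j\cdot v_j$ one has $\vec r\<x:=\vec v\>=(\sum_j\beta_j)\cdot\vec r$, \emph{not} $\vec r$. Elements of $\semR{\sharp C}$ are only required to have norm $1$, not coefficient sum $1$: both $i\cdot\Void\in\semR{\sharp\U}$ and $\tfrac{1}{\sqrt 2}\cdot\Inl\Void-\tfrac{1}{\sqrt 2}\cdot\Inr\Void\in\semR{\sharp(\U+\U)}$. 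Hence the constant abstraction $\Lam{x}{\Inl\Void}$ is \emph{not} in $\semR{\sharp(\U+\U)\rightarrow\sharp(\U+\U)}$: substituting the second vector into the constant body yields $0\cdot\Inl\Void$, of norm $0$, hence outside $\So$. Worse, some arrow interpretations are empty outright, e.g.\ $\semR{\sharp\U\rightarrow\U}=\emptyset$: an inhabitant $\Lam{x}{\vec t}$ would need $\vec t\<x:=\alpha\cdot\Void\>=\alpha\cdot\vec t[x:=\Void]$ to reduce to $\Void$ for every $|\alpha|=1$; since reduction acts summand-wise and never alters scalar factors, every value reachable from $i\cdot\vec t[x:=\Void]$ has the form $i\cdot\vec w$ with $\vec w$ reachable from $\vec t[x:=\Void]$, so the requirements at $\alpha=1$ and $\alpha=i$ force $\vec t[x:=\Void]$ to reach the two non-congruent values $\Void$ and $-i\cdot\Void$, which the essentially deterministic, per-summand reduction cannot do. This is no accident: it is the semantic counterpart of Theorem~\ref{thm:isometryLICS}, which forces realizers of arrows to act isometrically --- something constant maps do not. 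Consequently, for a context such as $\Gamma=x:\sharp\U\rightarrow\U$ there is no $\sigma_0\in\semR{\Gamma}$ at all, Theorem~\ref{thm:correctness} is vacuous there, and your argument proves nothing about derivable judgements like $x:\sharp\U\rightarrow\U\vdash x:\sharp\U\rightarrow\U$, whereas the corollary is asserted for arbitrary $\Gamma$.

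Two further points, secondary but worth flagging. First, the paper's own (one-line) proof never introduces closing substitutions; it reads strong normalization directly off Progress and Correctness, i.e.\ off the predicate ``for all $\sigma\in\semR\Gamma$, $\vec t\<\sigma\>\Vdash A$''. Your attempt to make this rigorous via a closing substitution is natural, but it is precisely the added lemmas that break. Second, in this paper $\vec t\Vdash A$ is \emph{defined} as ``$\vec t$ reduces to a value in $\semR A$'', i.e.\ weak normalization; your premise that realizers are strongly normalizing is therefore itself a claim needing either a precise citation of a strong-normalization-style candidate definition in~\cite{DiazcaroGuillermoMiquelValironLICS19} or an argument that weak and strong normalization coincide for this strategy (plausible, since the only non-determinism is the choice of which pure summand to reduce, but it must be argued). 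Your transfer lemma is broadly sound for pure-value substitutions, modulo the wrinkle that distinct summands may be identified after substitution and merged by the congruence; but no amount of care there can compensate for the missing $\sigma_0$.
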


The realizability model also implies that $\ValD\subseteq\mathcal{S}_1$.

\subsection{Expressivity}\label{sec:expressivity}
First we define the following encoding of norm-$1$ vectors from $\mathbb C^{2^n}$ to
terms in \OC: 
\begin{defi}\label{def:encoding}
  Let $b_i\in\{0,1\}$ and $\ket{\underline{k}}=\ket{b_0\dots b_{2^n-1}}$, where
  $b_0\dots b_{2^n-1}$ is the binary representation of $k\in\mathbb N$. Then, we
  encode $\ket{\underline{k}}$ in \OC as $\hat{{\underline k}} =
  (\hat{b}_0,(\hat{b}_2,(\dots,\hat{b}_{2^n-1})))$, with $\hat{0}=\Inl\Void$ and
  $\hat{1}=\Inr\Void$.
  Thus, if $\sum_{i=0}^{2^n-1}|\alpha_i|^2=1$ and $\vec{\mathsf v}=(\alpha_0,\dots,\alpha_{2^n-1})^T\in\mathbb
  C^{2^{n}}$, its encoding in \OC is \( \hat{\vec{\mathsf v}}=\sum_{k=0}^{2^n-1}
  \alpha_k\cdot\hat{\underline{k}} \).
\end{defi}

From now on, we may write $\mathbb B$ for $\U+\U$, $A^n$ for
$\prod_{i=1}^nA=A\times A\times\cdots\times A$, and $A^{\otimes n}$ for $\sharp
A^n=\sharp(A\times A\times\cdots\times A)$.

\begin{exa}
  $(\frac 1{\sqrt 2},0,0,\frac 1{\sqrt 2},0,0,0,0)^T=\frac 1{\sqrt
    2}\ket{000}+\frac 1{\sqrt 2}\ket{011}=\frac 1{\sqrt 2}\ket{\underline
    0}+\frac 1{\sqrt 2}\ket{\underline 3}$ in $\C^{2^3}$ is encoded as
  \[
    \frac 1{\sqrt 2}\cdot(\Inl\Void,(\Inl\Void,\Inl\Void))+ \frac 1{\sqrt
      2}\cdot(\Inl\Void,(\Inr\Void,\Inr\Void))
  \]
\end{exa}

The previous construction is typable as shown in the following example.
\begin{exa}\label{ex:type}
  Let $\vec{\s v}=(\alpha_0,\alpha_1,\dots,\alpha_{7})^T=\sum_{k=0}^7\alpha_{k}\ket{\underline{k}}\in\mathbb
  C^{2^3}$, with $\sum_{k=0}^{7}|\alpha_k|^2=1$. Hence, $\hat{\vec{\s v}}=\sum_{k=0}^{7}\alpha_{k}\cdot\hat{\underline{k}}$, using the encoding
  from Definition~\ref{def:encoding}.

  We check that $\vdash\hat{\vec{\s v}}:\mathbb B\otimes\mathbb B\otimes\mathbb B$. We have
    \[
      \infer[\s{Pair}]{\vdash \hat{\underline{0}}:\mathbb B\times\mathbb B\times\mathbb B}{
        \infer[\s{InL}]{\vdash\Inl\Void:\mathbb B}{\infer[\s{Void}]{\vdash\Void:\U}{}}
        &
        \infer[\s{Pair}]{\vdash(\Inl\Void,\Inl\Void):\mathbb B\times\mathbb B}
        {
          \infer[\s{InL}]{\vdash\Inl\Void:\mathbb B}{\infer[\s{Void}]{\vdash\Void:\U}{}}
          &
          \infer[\s{InL}]{\vdash\Inl\Void:\mathbb B}{\infer[\s{Void}]{\vdash\Void:\U}{}}
        }
      }
    \]
    Similarly, we derive $\vdash\hat{\underline{1}}:\mathbb B\times\mathbb B\times\mathbb B$,\dots,
    $\vdash\hat{\underline{7}}:\mathbb B\times\mathbb B\times\mathbb B$. Therefore,
    \[
      \infer[\s{Sup}]{\vdash\hat{\vec{\s v}}:\mathbb B\otimes\mathbb B\otimes\mathbb B}
      {
        \text{\scriptsize $(j\neq k)$}
        &
        \vdash(\hat{\underline j}\perp\hat{\underline k}):\mathbb B\times\mathbb B\times\mathbb B
        &
        \sum_{i=0}^{7}|\alpha_i|^2=1
      }
    \]
\end{exa}

We can define a case construction for the elements of $\mathbb B^{\otimes n}$,
noted as $\s{case}\ s\ \s{of}\ \{\hat{\underline k}\mapsto\vec v_k\}$, as shown
in the following example.
\begin{exa}\label{ex:caseof}
  Let $\lambda z.\s{case}\ z\ \s{of}\ \{ \hat{\underline 0}\mapsto \vec v_0\mid
  \hat{\underline 1}\mapsto \vec v_1\mid \hat{\underline 2}\mapsto \vec v_2\mid
  \hat{\underline 3}\mapsto \vec v_3 \}$ be defined as
  \begin{align*}
    \lambda z. \LetP x y z \matchkeyword~x~
        \{ &\Inl{{x'}}\mapsto {x'};\matchkeyword~y~\{ \Inl{y'}\mapsto {{y'};\vec v_0}\mid
                                                    \Inr{y'}\mapsto {{y'};\vec v_1}\}\\
        \mid &\Inr{{x'}}\mapsto {x'};\matchkeyword~y~\{ \Inl{y'}\mapsto {{y'};\vec v_2}\mid
                                                    \Inr{y'}\mapsto {{y'};\vec v_3}\}\}
  \end{align*}

  Assuming $\vec v_i$ are orthogonal two by two, this term can be typed with
  $(\mathbb B\otimes\mathbb B)\rightarrow(\mathbb B\otimes\mathbb B)$,
  Extending this construction to any $n$ is an easy exercise.
\end{exa}

\begin{exa}
  We can use the case construction from Example~\ref{ex:caseof} to construct any
  quantum operator. For example, the CNOT operator corresponds to the matrix
  \[
    \left( \begin{smallmatrix}
        1 & 0 & 0 & 0\\
        0 & 1 & 0 & 0\\
        0 & 0 & 0 & 1\\
        0 & 0 & 1 & 0
      \end{smallmatrix}\right)
  \]
  which sends $\ket{0x}$ to $\ket{0x}$, $\ket{10}$ to $\ket {11}$, and
  $\ket{11}$ to $\ket{10}$. This operator can be written in \OC as
  \[
    \lambda z.\s{case}\ z\ \s{of}\ \{ \hat{\underline 0}\mapsto \hat{\underline 0}\mid \hat{\underline 1}\mapsto \hat{\underline 1}\mid \hat{\underline 2}\mapsto \hat{\underline 3}\mid \hat{\underline 3}\mapsto \hat{\underline 2} \}
  \]

  Any quantum operator can be written as a matrix, and this encoding provides exactly that.
\end{exa}

The expressivity of the language is stated as follows.
\begin{thm}[Expressivity]\label{thm:expressiveness}\leavevmode
  \begin{enumerate}
  \item Simply Typed Lambda Calculus extended with pairs and sums, in
    call-by-value, is included in \OC. 
  \item If $U$ is an isometry acting on $\mathbb C^{2^n}$, then there exists a
    lambda term $\hat{U}$ such that $\vdash\hat{U}:\mathbb B^{\otimes n}\rightarrow
    \mathbb B^{\otimes n}$ and for all $\vec{\s v}\in\mathbb C^{2^n}$, if $\vec{\s w}=U\vec{\s v}$, we have $\hat U\hat{\vec{\s v}}\lra^*\hat{\vec{\s w}}$.
  \end{enumerate}
\end{thm}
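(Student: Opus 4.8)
The plan for item~(1) is to define a pair of translations $(-)^\star$, on types and on terms, from call-by-value STLC with pairs and sums into \OC, and to show that it preserves typing and simulates one reduction step. On types I would send the base type to $\U$ and act homomorphically on the connectives: $(A\times B)^\star=A^\star\times B^\star$, $(A+B)^\star=A^\star+B^\star$, and $(A\rightarrow B)^\star=A^\star\rightarrow B^\star$. On terms, variables, abstraction and application go to their \OC counterparts (the $\s{App}$ rule together with the rewrite rules already implements call-by-value on the argument), pairing goes through $\s{Pair}$, the two injections through $\s{InL}$/$\s{InR}$, and projections through the pair destructor $\s{PureLet}$. Typing preservation and the simulation of a CBV step then follow by a routine induction on the STLC derivation, invoking Subject Reduction (Theorem~\ref{thm:SR}) to keep everything typed along the way.

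The only non-routine point of item~(1) is sum elimination, since \OC's $\s{PureMatch}$ carries an orthogonality side condition on its two branches, whereas an STLC \emph{case} imposes none. I would resolve this by \emph{tagging}: translate the case $\s{case}\ u\ \{\s{inl}(x)\mapsto s_1\mid\s{inr}(y)\mapsto s_2\}:C$ to $\LetP{z}{r}{M}{r}$, where $\hat0=\Inl\Void$, $\hat1=\Inr\Void$ and
\[
  M=\Match{u^\star}{x}{(\lambda w.(\hat 0,w))\,s_1^\star}{y}{(\lambda w.(\hat 1,w))\,s_2^\star}.
\]
The two branches now carry distinct heads $\hat0$ and $\hat1$ in their first component, hence are orthogonal for the pseudo inner product of Definition~\ref{def:pseudoinnerproduct} \emph{regardless} of any substitution, so $M$ is typable with type $\mathbb B\times C^\star$; the outer $\s{PureLet}$, which has no orthogonality premise, then projects the payload back to $C^\star$. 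Operationally, once $u^\star$ reduces to an injected value the match fires into the corresponding tagged pair and the let discards the tag, so the original case reduction is simulated faithfully.

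For item~(2) I would take $\hat U:=\lambda z.\s{case}\ z\ \s{of}\ \{\hat{\underline k}\mapsto\widehat{U\ket{\underline k}}\}_{k=0}^{2^n-1}$, the $n$-qubit generalization of Example~\ref{ex:caseof}, whose $k$-th branch is the encoding (Definition~\ref{def:encoding}) of the $k$-th column $U\ket{\underline k}$ of $U$. Two facts drive the typing. First, since $U$ is an isometry it preserves the inner product, so the columns $U\ket{\underline k}$ are norm-$1$ and pairwise orthogonal; the encoding carries the Hermitian inner product on $\mathbb C^{2^n}$ to the pseudo inner product on $\ValD$, so each branch is typable as $\mathbb B^{\otimes n}$ by $\s{Sup}$ exactly as in Example~\ref{ex:type}, and the branches are pairwise $\perp$. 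Second, these orthogonality judgements are precisely what the $\sharp$-destructors ($\s{UnitaryLet}$, $\s{UnitaryMatch}$, $\s{UnitarySeq}$) used to unfold the $\s{case}$ demand, so the whole term types with $\mathbb B^{\otimes n}\rightarrow\mathbb B^{\otimes n}$.

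It remains to check $\hat U\hat{\vec{\s v}}\lra^*\widehat{\vec{\s w}}$ for $\vec{\s w}=U\vec{\s v}$. Writing $\hat{\vec{\s v}}=\sum_k\alpha_k\cdot\hat{\underline k}$ and using the linear extension of application from Table~\ref{tab:notation}, $\hat U\hat{\vec{\s v}}=\sum_k\alpha_k\cdot(\hat U\,\hat{\underline k})$; each $\hat U\,\hat{\underline k}$ selects its branch and reduces to $\widehat{U\ket{\underline k}}$, so $\hat U\hat{\vec{\s v}}\lra^*\sum_k\alpha_k\cdot\widehat{U\ket{\underline k}}$, and linearity of $U$ together with linearity of the encoding in its coefficients gives $\sum_k\alpha_k\cdot\widehat{U\ket{\underline k}}=\widehat{\sum_k\alpha_k\,U\ket{\underline k}}=\widehat{U\vec{\s v}}=\widehat{\vec{\s w}}$, the last equalities read modulo the congruence $\equiv$ that collects repeated basis terms into canonical form. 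The main obstacle I anticipate is the typability of the generalized $n$-qubit $\s{case}$: one must verify, by induction on $n$, that unfolding it into nested unitary lets and matches threads exactly the pairwise-orthogonality premises that the columns of the isometry supply, and that the bilinear substitution and $\equiv$ behave well enough that the collected distribution is genuinely $\widehat{\vec{\s w}}$ in canonical form.
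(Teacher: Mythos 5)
Your item~(2) is essentially the paper's own argument: the same term $\hat U=\lambda x.\s{case}\ x\ \s{of}\ \{\hat{\underline k}\mapsto\sum_i\beta_{ik}\cdot\hat{\underline i}\}$ built column-by-column from $U$, the same appeal to orthonormality of the columns of an isometry to discharge the orthogonality premises of the case construction of Example~\ref{ex:caseof}, and the same computation $\hat U\hat{\vec{\s v}}\equiv\sum_k\alpha_k\cdot\hat U\,\hat{\underline k}\lra^*\sum_k\alpha_k\cdot\widehat{U\ket{\underline k}}\equiv\widehat{U\vec{\s v}}$ modulo the congruence. For item~(1), however, you deviate from the paper: the paper's proof is a direct inclusion (STLC terms are pure terms, simple types are flat, and the rules $\s{Ax}$, $\s{Lam}$, $\s{App}$, $\s{Pair}$, $\s{PureLet}$, $\s{InL}$, $\s{InR}$, $\s{PureMatch}$, $\s{Weak}$, $\s{Contr}$ type them), with no mention of the orthogonality premise of $\s{PureMatch}$. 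Your observation that this premise is a real obstruction is correct, and in fact sharper than the paper's treatment: an STLC case with two syntactically equal branches (say both $\Void$) has pseudo inner product $1$, so it cannot be typed directly by $\s{PureMatch}$.

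But your repair has a genuine gap. In \OC the rule $\s{PureMatch}$ (Table~\ref{tab:types}) requires the two branches to be \emph{value distributions}: the orthogonality judgement $\Delta\vdash(x_1:A\vdash\vec v_1\perp x_2:B\vdash\vec v_2):C$ is only defined for value distributions, because the pseudo inner product of Definition~\ref{def:pseudoinnerproduct} takes arguments in $\ValD$, and the substitutions $\theta$ replace variables by pure values precisely so that values remain values. Your branches $(\lambda w.(\hat 0,w))\,s_1^\star$ and $(\lambda w.(\hat 1,w))\,s_2^\star$ are applications, not value distributions, so ``orthogonal for the pseudo inner product'' is not meaningful for them, $\s{PureMatch}$ does not apply, and the claim that $M$ is typable with type $\mathbb B\times C^\star$ fails. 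The fix is to keep the tag inside a value and thunk the payload: take branches $(\hat 0,\lambda d.s_1^\star)$ and $(\hat 1,\lambda d.s_2^\star)$, which are pure values and orthogonal under every substitution because their first components differ, and translate the case as $\LetP{z}{r}{M'}{r\,\Void}$ with $d$ a fresh variable of type $\U$. A similar value restriction also bites your clause for pairs: $(t,s)$ for non-value $t$, $s$ is not even in the grammar of \OC (Table~\ref{tab:Syntax} admits only pairs of pure values, and Table~\ref{tab:notation} extends this only to value distributions), so the translation must sequence pairs administratively, e.g.~as $((\lambda x.\lambda y.(x,y))\,t^\star)\,s^\star$.
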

\begin{proof}\leavevmode
  \begin{enumerate}
  \item The terms from the lambda calculus with pairs and sums are included in
    the grammar of pure terms. The set of types $\{A\mid A^\flat\}$ includes the
    simply types. Simply types, together with rules $\mathsf{Ax}$, $\mathsf{Lam}$,
    $\mathsf{App}$, $\mathsf{Pair}$, $\mathsf{PureLet}$, $\mathsf{InL}$,
    $\mathsf{InR}$, $\mathsf{PureMatch}$, $\mathsf{Weak}$, and $\mathsf{Cont}$ allow
    to type the lambda calculus extended with pairs and sums.
  \item Let $U$ be an isometry. We can define it by giving its behaviour on a
    base of $\C^{2^n}$. So, consider the canonical base $B=\{\ket{\underline
      0},\dots,\ket{\underline{2^n-1}}\}$ and for all $\ket{\underline k}\in B$, let
    $U\ket{\underline{k}}=(\beta_{0,k},\dots,\beta_{2^n-1,k})^T$. That is,
    $U=(\beta_{ij})_{ij}$. We can define a term $\hat U$ using the case construction
    introduced before (cf.~Example~\ref{ex:caseof}).
    \[
      \hat U = \lambda x.\s{case}\ x\ \s{of}\ \{
      \hat{\underline{k}}\mapsto\sum_{i=0}^{2^n-1}\beta_{ik}\cdot\hat{\underline{i}}\}
    \]
    We have $\vdash\hat U:\mathbb B^{\otimes
      n}\rightarrow\mathbb B^{\otimes n}$ (cf.~Example~\ref{ex:caseof}).

    Let $\vec{\s v}=(\alpha_0,\dots,\alpha_{2^n-1})^T=\sum_{k=0}^{2^n-1}\alpha_{k}\ket{\underline{k}}\in\mathbb
    C^{2^{n}}$, with $\sum_{i=0}^{2^n-1}|\alpha_i|^2=1$. Then, we have $\hat{\vec{\s v}}=\sum_{k=0}^{2^n-1}\alpha_k\cdot\hat{\underline k}$, with $\vdash\hat{\vec{\s v}}:\mathbb B^{\otimes n}$ (cf.~Example~\ref{ex:type}), therefore, $\vdash\hat U\hat{\vec{\s v}}:\mathbb B^{\otimes n}$.

    Notice that,
    \begin{align*}
      \hat U\hat{\vec{\s v}}
      &\equiv\sum_{k=0}^{2^n-1}\alpha_k\cdot \hat U\underline{\hat k}
      \lra^*\sum_{k=0}^{2^n-1}\alpha_k\cdot\sum_{i=0}^{2^n-1}\beta_{ik}\cdot\underline{\hat i}\\
      &\equiv\sum_{i=0}^{2^n-1}\sum_{k=0}^{2^n-1}\alpha_k\beta_{ik}\cdot\underline{\hat i}=\widehat{U\vec{\s v}}
    \qedhere
    \end{align*}
  \end{enumerate}
\end{proof}

\noindent
Since \OC is a fragment of the calculus that can be defined using the
realizability model, and the previous lemma shows that any isometry can be
represented in it, then the following theorem is still valid.
\begin{thm}[\teoremaunitarias]\label{thm:isometryLICS}
  A closed $\lambda$-abstraction $\lambda x.\vec t$ is a value of type
  $\sharp\mathbb B\rightarrow\sharp\mathbb B$ if an only if it represents an
  isometry $U:\C^2\rightarrow\C^2$.
  \qed
\end{thm}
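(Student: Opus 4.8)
The plan is to prove both implications directly through the realizability interpretation $\semR\cdot$, exploiting the fact that for the qubit type the interpretation is completely explicit: $\semR{\sharp\mathbb B}=\semR{\sharp(\U+\U)}=\mathsf{span}\{\Inl\Void,\Inr\Void\}\cap\So$, which is exactly the unit sphere of $\C^2$ once we identify $\Inl\Void$ with $\ket 0$ and $\Inr\Void$ with $\ket 1$. Moreover, on value distributions supported on $\{\Inl\Void,\Inr\Void\}$ the pseudo inner product of Definition~\ref{def:pseudoinnerproduct} coincides with the Hermitian inner product of $\C^2$, so syntactic orthogonality and norm agree with the usual ones. Being already a value, $\lambda x.\vec t$ has type $\sharp\mathbb B\to\sharp\mathbb B$ iff $\lambda x.\vec t\in\semR{\sharp\mathbb B\to\sharp\mathbb B}$, i.e.\ iff for every $\vec v\in\semR{\sharp\mathbb B}$ the distribution $\vec t\<x:=\vec v\>$ reduces to a value of $\semR{\sharp\mathbb B}$.

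The ingredient common to both directions is a \emph{linearity lemma}. Since the substitution $\<x:=\cdot\>$ is by definition linear in its argument, and since the rewrite rule for $\alpha\cdot t+\vec s$ together with the congruence of Table~\ref{tab:cong-term} let reduction commute with linear combinations, the two values $\vec f_0,\vec f_1$ to which $\vec t\<x:=\Inl\Void\>$ and $\vec t\<x:=\Inr\Void\>$ reduce determine the behaviour of $\lambda x.\vec t$ on \emph{every} superposition: for $\vec v=\alpha\cdot\Inl\Void+\beta\cdot\Inr\Void$ we have
\[
  \vec t\<x:=\vec v\>=\alpha\cdot\vec t\<x:=\Inl\Void\>+\beta\cdot\vec t\<x:=\Inr\Void\>\lra^*\alpha\cdot\vec f_0+\beta\cdot\vec f_1 .
\]
Thus the linear map $U\colon\C^2\to\C^2$ with $U\ket 0=\vec f_0$ and $U\ket 1=\vec f_1$ captures the input/output behaviour of $\lambda x.\vec t$ on the whole sphere, and $\lambda x.\vec t$ represents $U$ in the sense of Theorem~\ref{thm:expressiveness}.

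For the forward direction, assume $\lambda x.\vec t\in\semR{\sharp\mathbb B\to\sharp\mathbb B}$. Applying the hypothesis to $\Inl\Void,\Inr\Void\in\semR{\sharp\mathbb B}$ shows that $\vec f_0,\vec f_1$ exist and lie in $\semR{\sharp\mathbb B}$, hence are unit vectors, so the columns of $U$ are normalised. Applying it to an arbitrary unit $\vec v$ and invoking the linearity lemma gives $\alpha\cdot\vec f_0+\beta\cdot\vec f_1\in\semR{\sharp\mathbb B}$, i.e.\ $\|\alpha\cdot\vec f_0+\beta\cdot\vec f_1\|=1$ for \emph{all} $(\alpha,\beta)$ with $|\alpha|^2+|\beta|^2=1$. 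Expanding the norm as $1+2\,\mathrm{Re}\!\left(\bar\alpha\beta\,(\vec f_0\mid\vec f_1)\right)$ and testing it at $(\tfrac1{\sqrt2},\tfrac1{\sqrt2})$ and $(\tfrac1{\sqrt2},\tfrac{i}{\sqrt2})$ forces $(\vec f_0\mid\vec f_1)=0$; together with $\|\vec f_0\|=\|\vec f_1\|=1$ this says $U^{*}U=\I$, i.e.\ $U$ is an isometry, and by the linearity lemma $\lambda x.\vec t$ represents exactly this $U$. For the backward direction, let $\lambda x.\vec t$ be a closed abstraction that represents an isometry $U$. Take any $\vec v\in\semR{\sharp\mathbb B}$; by the explicit description above, $\vec v$ is the encoding $\hat{\vec{\s w}}$ of a unit vector $\vec{\s w}\in\C^2$. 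Since $\lambda x.\vec t$ represents $U$, $(\lambda x.\vec t)\,\vec v\lra^*\vec t\<x:=\vec v\>\lra^*\widehat{U\vec{\s w}}$, and because $U$ is an isometry $\|U\vec{\s w}\|=\|\vec{\s w}\|=1$, so $\widehat{U\vec{\s w}}\in\semR{\sharp\mathbb B}$. Hence $\vec t\<x:=\vec v\>\Vdash\sharp\mathbb B$ for every $\vec v\in\semR{\sharp\mathbb B}$, that is $\lambda x.\vec t\in\semR{\sharp\mathbb B\to\sharp\mathbb B}$, so it is a value of type $\sharp\mathbb B\to\sharp\mathbb B$.

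The step I expect to be delicate is the linearity lemma: one must verify that reducing $\vec t\<x:=\vec v\>$ really yields $\alpha\cdot\vec f_0+\beta\cdot\vec f_1$ and that this value is supported on $\{\Inl\Void,\Inr\Void\}$, so that it genuinely lives in $\semR{\sharp\mathbb B}$. This relies on the bilinearity of $\<x:=\cdot\>$, the congruence of Table~\ref{tab:cong-term}, the contextual rule for $\alpha\cdot t+\vec s$, and strong normalization (Corollary~\ref{cor:SN}) to guarantee that values are reached. The remaining purely algebraic fact---that a linear map on $\C^2$ preserving the norm of every unit vector is exactly an isometry---is the elementary computation sketched above.
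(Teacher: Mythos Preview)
The paper does not prove this theorem: it is quoted verbatim from \cite[Theorem~IV.12]{DiazcaroGuillermoMiquelValironLICS19} and closed with a bare $\qed$, the only justification being the sentence immediately before it, namely that \OC\ is a fragment of the realizability calculus and that every isometry is already representable in \OC\ by Theorem~\ref{thm:expressiveness}. So there is nothing to compare against here; your proposal supplies precisely the argument that the paper outsources to the cited reference.

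Your reconstruction is correct and is essentially the proof one finds in the cited work. The identification $\semR{\sharp\mathbb B}=\mathsf{span}\{\Inl\Void,\Inr\Void\}\cap\So$ with the unit sphere of $\C^2$ is immediate from the definition of $\semR{\cdot}$; the linearity lemma is exactly the bilinearity of $\<x:=\cdot\>$ together with the contextual rule $\alpha\cdot t+\vec s\lra\alpha\cdot\vec r+\vec s$, and strong normalization guarantees the values $\vec f_0,\vec f_1$ exist. The algebraic step---that a linear map on $\C^2$ sending the whole unit sphere into the unit sphere must have orthonormal columns---is the standard polarisation argument you sketch. The backward direction is also fine: ``represents $U$'' in the sense of Theorem~\ref{thm:expressiveness} gives exactly $(\lambda x.\vec t)\,\hat{\vec{\s v}}\lra^*\widehat{U\vec{\s v}}$, and $\|U\vec{\s v}\|=1$ then places the result in $\semR{\sharp\mathbb B}$. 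One small caveat worth making explicit: the statement is about ``value of type'' in the realizability sense (membership in $\semR{\sharp\mathbb B\to\sharp\mathbb B}$), not derivability in the finite rule set of Table~\ref{tab:types}; the converse direction would fail for the latter reading, since an arbitrary abstraction that happens to compute an isometry need not have a derivation in the fixed system.
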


Extending this result to a bigger dimension is straightforward.

\begin{rem}
  Even if we can check orthogonality on open terms,
  e.g.~$\Gamma\vdash(\Delta_1\vdash\vec v_1\perp\Delta_2\vdash\vec v_2):A$, we
  cannot type a constructor of an oracle (as, for example, the oracle needed for
  the Deutsch's algorithm, cf.~\cite[\S 1.4.3]{NC00}) parametrized by a given
  function $f$. That is, the oracle $U_f$ sending $\ket{b_1b_2}$ to
  $\ket{b_1,b_2\oplus f(b_1)}$ can be typed with $\mathbb B^{\otimes
  2}\rightarrow\mathbb B^{\otimes 2}$ for any given $f$ in our language,
  however, we cannot type a term $\lambda f.U_f$ such as
  \begin{align*}
    \lambda f.U_f:=\lambda f.\lambda x.\s{case}~x~\s{of}~\{
    &\hat{\underline 0}\mapsto(\Inl\ast,f\Inl\ast),\\
    &\hat{\underline 1}\mapsto(\Inl\ast,\s{not}~(f\Inl\ast)),\\
    &\hat{\underline 2}\mapsto(\Inr\ast,f\Inr\ast),\\
    &\hat{\underline 3}\mapsto(\Inr\ast,\s{not}~(f\Inr\ast))\}
  \end{align*}
  Indeed, the branches of the case are not values and so the orthogonality cannot be verified.

  In~\cite{DiazcaroGuillermoMiquelValironLICS19} this term $\lambda f.U_f$ is
  valid, with typing $\vdash \lambda f.U_f:(\mathbb B\rightarrow\mathbb
  B)\rightarrow \mathbb B^{\otimes 2}\rightarrow\mathbb B^{\otimes 2}$.
  Certainly, the orthogonality verification is done by the realizability model,
  by considering all the reductions of the application of the term to any
  possible argument, something that is not desirable in a static type system,
  where reducing a term in order to type it is not a good practice. In any
  case, the Deustch's algorithm, or any other quantum algorithm using such kind
  of oracle, does not construct the gate dynamically with a term of the kind
  $\lambda f.U_f$, but the unitary $U_f$ is given.
\end{rem}

\section{Denotational semantics}\label{sec:DenSem}
We define two categories, $\SetV$ and $\VecV$, and an adjunction between them.
Types are interpreted as objects in $\SetV$, and terms as maps using the
adjunction.

As stated in Section~\ref{sec:LambdaS1}, value distributions do not form a vector
space. Here we make this concept more precise by defining the non-standard notion of
distributive-action space\footnote{In~\cite{DiazcaroGuillermoMiquelValironLICS19}
  the notion of weak vector space is defined, which is based on a commutative
  monoid. Here we use a commutative semi-group instead, since
  a null vector is not needed.}.

\begin{defi}[Distributive-action space]\label{def:svs}
  A distributive-action space over a field $K$ is a commutative semi-group\footnote{That
  is, an associative and commutative magma.} $(V,+)$
  equipped with a scalar multiplication $(\cdot):K\times V\to V$ such that for
  all $\vec v,\vec w\in V$, $\alpha,\beta\in K$, we have the following axioms.
  \[
    \begin{array}{r@{\ }l@{\qquad}r@{\ }l}
      1\cdot \vec v &=\vec v
      &                    (\alpha+\beta)\cdot\vec v &=\alpha\cdot\vec v+\beta\cdot\vec v\\
      \alpha\cdot(\beta\cdot\vec v) &=\alpha\beta\cdot\vec v
      &    \alpha\cdot (\vec v+\vec w) &=\alpha\cdot\vec v+\alpha\cdot\vec w
    \end{array}
  \]
  In analogy with vector spaces, the elements of $V$ are called vectors and the
  elements of $K$ scalars.
\end{defi}
The notion of distributive-action space differs from the traditional notion of
vector space in that the underlying additive structure $(V,+)$ is an arbitrary
commutative semi-group, whose elements in general do not include a neutral
element, and so do not have an additive inverse. In a distributive-action space,
the vector $(-1)\cdot\vec v$ is in general not the additive inverse of $\vec v$,
and the product $0\cdot\vec v$ does not simplify to a neutral element $\vec 0$.
Indeed, term distributions do not have a null vector.

The notions of inner product and norm can be generalized to the case of
distributive-action spaces in the following way.

\begin{defi}[Inner product of a distributive-action space over $\mathbb C$]\label{def:innerproduct}
  An inner product of a distributive-action space $V$ is a function $\langle
  \cdot\mid\cdot \rangle:V\times V\to\mathbb C$ satisfying the following
  properties. For all $\alpha\in\mathbb C$, $\vec u,\vec v,\vec w\in V$,
  \begin{enumerate}
  \item $\langle \vec u\mid\vec v \rangle = \overline{\langle \vec v\mid\vec u\rangle}$, where $\overline\alpha$ is the conjugate of $\alpha$.
  \item $\langle\vec u\mid\alpha\cdot\vec v  \rangle =\alpha\langle \vec u\mid\vec v\rangle$

    $\langle \vec u\mid\vec v+\vec w \rangle = \langle \vec u\mid\vec v\rangle+\langle \vec u+\vec w \rangle$.

  \item $\langle \vec u\mid\vec u \rangle > 0$ for all $\vec u\in \{\vec v\in V\mid \forall\vec w, \vec v\neq 0.\vec w\}$.
  \end{enumerate}
\end{defi}
\begin{defi}[Norm of a distributive-action space]\label{def:norm}
  A norm of a distributive-action space $V$ is a function
  $||\cdot||:V\to\mathbb R^+$ satisfying the following properties. For all $\alpha\in K$, $\vec v,\vec w\in V$,
  \begin{enumerate}
  \item $||\vec v+\vec w||\leq||\vec v||+||\vec w||$.
  \item $||\alpha\cdot\vec v|| = |\alpha|||\vec v||$.
  \item $||\vec v||=0$ if and only if $\vec v=0.\vec w$, for some $\vec w$.
  \end{enumerate}
\end{defi}
\begin{thm}
  $\ValD$ is a normed distributive-action space over $\mathbb C$.
\end{thm}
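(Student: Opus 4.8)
The plan is to verify directly that the set of value distributions $\ValD$, equipped with the operations $+$ and $\cdot$ already present in the grammar (Table~\ref{tab:Syntax}) and governed by the congruence $\equiv$ (Table~\ref{tab:cong-term}), satisfies each clause of Definitions~\ref{def:svs}, \ref{def:innerproduct}, and \ref{def:norm}. I would organize the proof in three stages: first the distributive-action space structure, then the inner product, then the norm.

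First I would establish that $(\ValD,+,\cdot)$ is a distributive-action space over $\mathbb C$. Commutativity and associativity of $+$ are exactly the first two congruence rules of Table~\ref{tab:cong-term}, so $(\ValD,+)$ is a commutative semi-group; note there is no neutral element, which is precisely why a semi-group rather than a monoid is used. The four scalar-multiplication axioms of Definition~\ref{def:svs}---namely $1\cdot\vec v=\vec v$, $\alpha\cdot(\beta\cdot\vec v)=\alpha\beta\cdot\vec v$, $(\alpha+\beta)\cdot\vec v=\alpha\cdot\vec v+\beta\cdot\vec v$, and $\alpha\cdot(\vec v+\vec w)=\alpha\cdot\vec v+\alpha\cdot\vec w$---are literally the remaining four congruence rules. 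Thus this stage is immediate once the definitions are lined up against the congruence.

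Next I would check that the pseudo inner product $(\cdot\mid\cdot)$ of Definition~\ref{def:pseudoinnerproduct} is an inner product in the sense of Definition~\ref{def:innerproduct}. Writing $\vec v$ and $\vec w$ in canonical form, conjugate-symmetry follows from $\overline{\bar\alpha_i\beta_j}=\bar\beta_j\alpha_i$ together with the symmetry of the Kronecker delta $\delta_{v_i,w_j}=\delta_{w_j,v_i}$; linearity in the second argument follows by expanding $\alpha\cdot\vec v$ and $\vec v+\vec w$ into canonical form and using bilinearity of the defining double sum. For positive-definiteness I would compute $(\vec u\mid\vec u)=\sum_i|\alpha_i|^2$ on a canonical-form representative, which is strictly positive provided $\vec u$ is not of the shape $0\cdot\vec w$; this is exactly the restriction imposed in clause~(3) of Definition~\ref{def:innerproduct}, and it is where the absence of a genuine null vector matters. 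The norm is then defined by $||\vec v||:=\sqrt{(\vec v\mid\vec v)}$, and its three axioms follow: homogeneity from axiom~(2) of the inner product, the triangle inequality from Cauchy--Schwarz applied to the induced inner product, and the vanishing condition from the positive-definiteness clause just established.

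The main obstacle is subtle rather than computational: one must be careful that every property is invariant under the congruence $\equiv$ and independent of the chosen canonical-form representative, since $\ValD$ is taken modulo $\equiv$ only at the outer level. In particular, positive-definiteness must be phrased exactly as in clause~(3)---quantifying over $\vec u\notin\{0\cdot\vec w\}$---because terms of the form $0\cdot\vec v$ do not simplify and genuinely have zero pseudo-norm without being an additive identity. I would therefore take care to verify well-definedness on canonical forms and to state positivity and the norm-vanishing condition with the correct caveat about $0\cdot\vec w$, which is the only place where the non-vector-space nature of $\ValD$ makes the argument depart from the classical one.
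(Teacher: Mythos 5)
Your proof is correct and follows essentially the same route as the paper's own (very terse) proof: verify the distributive-action space axioms against the congruence rules of Table~\ref{tab:cong-term}, check that the pseudo inner product of Definition~\ref{def:pseudoinnerproduct} satisfies Definition~\ref{def:innerproduct}, and define $\|\vec v\|:=\sqrt{(\vec v\mid\vec v)}$ and check Definition~\ref{def:norm}. Your added attention to congruence-invariance, canonical-form representatives, and the $0\cdot\vec w$ caveat in positive-definiteness fills in details the paper leaves implicit, but does not change the argument.
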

\begin{proof}
  Verifying that $\ValD$ is a distributive-action space over $\mathbb C$ is
  straightforward by checking that the congruence given in
  Table~\ref{tab:cong-term} coincides with the requirements from
  Definition~\ref{def:svs}.

  We also check that Definition~\ref{def:pseudoinnerproduct} verifies the
  definition of an inner product (Definition~\ref{def:innerproduct}).

  Finally, we define the norm
  \[
    ||{\vec v}|| :=\sqrt{\langle \vec v\mid\vec v \rangle} = \sqrt{\sum_{i=1}^n|\alpha_i|^2}
  \]
  and check that such a norm verifies Definition~\ref{def:norm}.
\end{proof}
We write $\parts{\cdot} = \mathcal P(\cdot)\setminus\{\emptyset\}$, and
$\Definible$ to the set of (computable) functions that can be defined in \OC. 

With all the previous definitions, we can define the categories which will give
the adjunction to model the calculus.

\begin{defi}[Category $\SetV$]
  The monoidal category $\SetV$ has the following elements:
  \begin{itemize}
  \item $\Ob\SetV=\parts{\ValD}$.
  \item For all $A,B\in\Ob\SetV$,
    \[ \HomS{A,B}=\{f\in\Definible\mid f:A\to B\in\Arr{\Set}\}
    \]
  \item For all $A,B\in\Ob\SetV$,
    \begin{align*} A\boxtimes B
      =&\{\textstyle\sum_i\sum_j\alpha_i\beta_j\cdot(v_i,w_j)\mid
         \begin{array}{l} \sum_i\alpha_i\cdot v_i\in A\\ \sum_j\beta_j\cdot
           w_j\in B
         \end{array}\}\\ \stackrel{\rm def}{=}&\{(\vec v,\vec w)\mid\vec v\in A\
                                                \text{and}\ \vec w\in B\}
    \end{align*}
  \item $1=\{\Void\}\in\Ob\SetV$.
  \item The obvious structural maps: $A\xlra{\lambda_\boxtimes}1\boxtimes A$,
    $A\xlra{\rho_\boxtimes}A\boxtimes 1$, and $(A_1\boxtimes A_2)\boxtimes
    A_3\xlra{\alpha_\boxtimes}A_1\boxtimes (A_2\boxtimes A_3)$.
  \end{itemize}
\end{defi}

\begin{rem}
  Let $A\xlra f B$ and $C\xlra g D$ in $\Arr\SetV$. Then $f,g\in\Definible$. So,
  let $\hat f$ and $\hat g$ be the terms in \OC implementing $f$ and $g$
  respectively. Therefore, a term implementing $f\boxtimes g$ is the following
  \( \lambda x.\LetP{y}{z}{x}{(\hat fy,\hat gz)} \).
\end{rem}

\begin{rem}
  Notice that while the category $\SetV$ is generated by the syntax of the calculus, it is not a syntactic category in the sense of~\cite[Part I $\S10$]{LambekScott}, where the objects are types and arrows are terms. Indeed, the objects at $\SetV$ are non-empty powersets of values, and the interpretation of the language will be done by using an adjunction with a richer category $\VecV$ yet to be define (cf.~Definition~\ref{def:VecV} and Proposition~\ref{prop:adjunction} stablishing the adjunction). Hence, this construction is far from trivial.
\end{rem}

In order to define the category $\VecV$, we define the set $\Hil$ and the map $\Span{}$.
\begin{defi}\label{def:functors}\leavevmode
  \begin{enumerate}
  \item $\Hil$ is the set of all the sub-distributive action spaces\footnote{$V$ is a
      sub-distributive action space of $W$ if $V\subseteq W$ and $V$ forms a distributive-action space
      with the operations from $W$ restricted to $V$.} of $\ValD$, with inner
    product induced by $\ValD$.
  \item $\Span{}:\parts{\ValD}\lra\Hil$ is a map defined by
    $A\mapsto\{\sum_i\alpha_i\cdot\vec v_i\mid\vec v_i\in A\}$.
  \item Let $A,B\in\parts{\ValD}$ and $f:A\lra B$ be a map. Then, $\Span f:\Span
    A\lra\Span B$ is the map defined by $\sum_i\alpha_i\cdot\vec v_i\mapsto\sum_i\alpha_i\cdot f\vec v_i$.
  \end{enumerate}
\end{defi}

\begin{defi}[Category $\VecV$]\label{def:VecV}
  The monoidal category $\VecV$ has the following elements:
  \begin{itemize}
  \item $\Ob\VecV = \Hil$.
  \item For all $V, W\in\Ob\VecV$,
    \[
      \HomV{V,W}=\{f\in\Definible\mid V\xlra{f} W\textrm{ is a linear map}\}
    \]
  \item For all $V,W\in\Ob\VecV$,
    \[
      V\otimes W=\Span(V\boxtimes W)
    \]
  \item $\I= \{\alpha\cdot\Void\mid\alpha\in\C\}\in\Ob\VecV$.
  \item The obvious structural maps:
    $V\xlra{\lambda_\otimes}I\otimes V$,
    $V\xlra{\rho_\otimes}V\otimes I$, and
    $(V_1\otimes V_2)\otimes V_3\xlra{\alpha_\otimes}V_1\otimes (V_2\otimes V_3)$.
  \end{itemize}
\end{defi}

\begin{lem}
  The following maps are monoidal functors:
  \begin{enumerate}
  \item $\Span{}:\SetV\lra\VecV$, defined as in Definition~\ref{def:functors}.
  \item $\Forg{}:\VecV\lra\SetV$, the forgetful functor.
  \end{enumerate}
\end{lem}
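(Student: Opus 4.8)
The plan is to verify, for each map, the three conditions that constitute being a monoidal functor: (i) it is a well-defined functor (acts sensibly on objects and arrows, preserves identities and composition), (ii) it carries the monoidal unit of the source to the unit of the target up to a coherent isomorphism, and (iii) it is equipped with a natural transformation comparing the image of a tensor product with the tensor product of images, satisfying the standard coherence (associativity and unit) diagrams. I will treat the two functors separately, since their ingredients differ.

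For $\Span{}:\SetV\lra\VecV$, I would first check it is functorial. On objects, Definition~\ref{def:functors} sends $A\in\parts{\ValD}$ to $\Span A=\{\sum_i\alpha_i\cdot\vec v_i\mid\vec v_i\in A\}$, which is indeed a sub-distributive-action space of $\ValD$, hence an object of $\VecV$ by Definition~\ref{def:VecV}. On arrows, $f:A\lra B$ is sent to $\Span f$ defined by linear extension; I must check $\Span f$ is a genuine linear map lying in $\Definible$ (the previous remark on implementing $f\boxtimes g$ illustrates that definable maps extend by the syntactic linear-substitution mechanism), and that $\Span{}$ preserves identities and composition, which follows immediately from the linear-extension formula. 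For monoidality, the key structural isomorphism is exactly the defining equation $V\otimes W=\Span(V\boxtimes W)$ in $\VecV$: this gives a canonical comparison map $\Span A\otimes\Span B\to\Span(A\boxtimes B)$. I would exhibit this map (and its inverse) explicitly and show it is an isomorphism, then verify the unit condition using $\I=\{\alpha\cdot\Void\mid\alpha\in\C\}=\Span\{\Void\}=\Span 1$. Finally I would check the associativity and unit coherence pentagon/triangle diagrams commute; because both $\boxtimes$ and $\otimes$ are defined pointwise on the same underlying pairing of value distributions, these reduce to the corresponding coherences already built into the structural maps $\alpha_\boxtimes,\lambda_\boxtimes,\rho_\boxtimes$ and their $\otimes$-counterparts.

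For $\Forg{}:\VecV\lra\SetV$, the forgetful functor, functoriality is nearly immediate: an object $V\in\Hil$ is a sub-distributive-action space of $\ValD$, in particular a nonempty subset, hence an element of $\parts{\ValD}=\Ob\SetV$; a linear definable map $V\xlra f W$ is in particular a definable set-theoretic map $\Forg V\to\Forg W$, so it lies in $\HomS{\Forg V,\Forg W}$, and identities and composition are preserved on the nose. The monoidal structure requires a comparison map $\Forg V\boxtimes\Forg W\to\Forg(V\otimes W)$; since $V\otimes W=\Span(V\boxtimes W)\supseteq V\boxtimes W$, the inclusion $\Forg V\boxtimes\Forg W\hookrightarrow\Forg(V\otimes W)$ serves as this (lax) structure map, and the unit map $1\to\Forg\I$ is the inclusion $\{\Void\}\hookrightarrow\{\alpha\cdot\Void\mid\alpha\in\C\}$. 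I would then check the two coherence diagrams, which again follow from the pointwise definitions of the tensor products.

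The main obstacle I anticipate is not the diagram-chasing but the bookkeeping around definability and well-definedness: one must ensure that $\Span f$ is genuinely \emph{linear} as a map of distributive-action spaces (this uses the bilinearity of substitution already present in the calculus) and that it remains in the class $\Definible$, and one must confirm that the comparison maps are themselves definable and that the isomorphism $\Span A\otimes\Span B\cong\Span(A\boxtimes B)$ holds at the level of these non-standard algebraic structures, where the absence of a null vector means one cannot invoke the usual vector-space arguments. I expect $\Span{}$ to be a \emph{strong} monoidal functor (the comparison is an iso, by the very definition of $\otimes$), whereas $\Forg{}$ will in general be only \emph{lax} monoidal, the comparison being the proper inclusion $\Forg V\boxtimes\Forg W\hookrightarrow\Span(\Forg V\boxtimes\Forg W)$; clarifying this distinction is the conceptually delicate point of the proof.
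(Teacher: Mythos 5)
Your proposal is correct and follows essentially the same route as the paper: functoriality checked directly (the paper does the linearity computation for $\Span f$ explicitly from the distributive-action axioms and asserts $\Span f\in\Definible$), then monoidality witnessed by $m_{AB}=\Id$ and $m_1=\Id$ — using precisely your observation that $\Span A\otimes\Span B=\Span{(\Span A\boxtimes\Span B)}=\Span{(A\boxtimes B)}$ and $\I=\Span 1$ — and by taking $n_{VW}$ and $n_I$ to be the inclusions $\Forg V\boxtimes\Forg W\subseteq\Forg{(V\otimes W)}$ and $1\subseteq\Forg\I$, with the coherence axioms dismissed as trivially satisfied. The only cosmetic difference is that the paper's comparison maps for $\Span{}$ are outright equalities (so no inverse needs to be exhibited), which confirms your strong-versus-lax reading: $\Span{}$ is strict monoidal while $\Forg{}$ is merely lax.
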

\begin{proof}\leavevmode
  \begin{enumerate}
  \item If $A\in\Ob\SetV$, then, by definition $\Span A\in\Ob\VecV$.

    Let $A\xlra f B\in\Arr\SetV$ and $\sum_{i=1}^n\alpha_i\cdot a_i,
    \sum_{j=1}^m\beta_j\cdot a'_j\in\Span A$. Then,
    \begin{align*}
      &\textstyle(\Span f)(\delta_1\cdot\sum_{i=1}^n\alpha_i\cdot a_i+\delta_2\cdot\sum_{j=1}^m\beta_j\cdot a'_j)\\
      &\textstyle=(\Span f)(\sum_{i=1}^n\delta_1\alpha_i\cdot a_i+\sum_{j=1}^m\delta_2\beta_j\cdot a'_j)\\
      &\textstyle=\sum_{i=1}^n\delta_1\alpha_i\cdot fa_i+\sum_{j=1}^m\delta_2\beta_j\cdot fa'_j\\
      &\textstyle=\delta_1\cdot\sum_{i=1}^n\alpha_i\cdot fa_i+\delta_2\cdot\sum_{j=1}^m\beta_j\cdot fa'_j\\
      &\textstyle=\delta_1\cdot(\Span f)(\sum_{i=1}^n\alpha_i\cdot a_i)+\delta_2\cdot(\Span f)(\sum_{j=1}^m\beta_j\cdot a'_j)
    \end{align*}
    Therefore, $\Span f:\Span A\rightarrow \Span B$ is lineal and since
    $f\in\Definible$, $\Span f\in\Definible$. Thus $\Span f\in\Arr\VecV$.
    Functoriality is fulfilled by being a span.
  \item If $V\in\Ob\VecV\subseteq\parts{\ValD}=\Ob\SetV$, then, $\Forg
    V=V\in\Ob\SetV$. Let $f\in\Arr\VecV$, then $f$ is a linear map in $\Definible$,
    so it is a map and then $f\in\Arr\SetV$.
  \end{enumerate}
  Finally, we prove the functors to be monoidal by proving the existence of the
  maps
  \[
    \Span A\otimes\Span B\xlra{m_{AB}}\Span(A\boxtimes B)
    \qquad\textrm{and}\qquad
    I\xlra{m_1}\Span 1
  \]
  in $\Arr{\VecV}$, and
  \[
    \Forg V\boxtimes\Forg W\xlra{n_{VW}} \Forg{(V\otimes W)}
    \qquad\textrm{and}\qquad
    1\xlra{n_I}\Forg I
  \]
  in $\Arr{\SetV}$, trivially satisfying some well-known axioms.
  \begin{itemize}
  \item Since $\Span A\otimes\Span B=\Span{(\Span A\boxtimes\Span
      B)}=\Span{(A\boxtimes B)}$, we take $m_{AB}=\Id$.

  \item Since $I=\Span 1$, we take $m_1=\Id$.

  \item Since $\Forg V\boxtimes \Forg W\subseteq \Forg{\Span{(\Forg
        V\boxtimes\Forg W)}}=\Forg{(V\otimes W)}$, we take $n_{VW}$ as the inclusion map.

  \item Since $1\subseteq\Forg I$, we take $n_I$ as the inclusion map.
    \qedhere
  \end{itemize}
\end{proof}

\noindent
We can now establish the adjunction between the two categories, which will give
us the framework to interpret the calculus.

\begin{prop}\label{prop:adjunction}
  The following construction is a monoidal adjunction:
  \begin{center}
    \begin{tikzcd}[column sep=3mm,labels=description]
      (\SetV,\boxtimes,1)\ar[rr,bend left,"{(\Span{},m)}"] &\bot &
      (\VecV,\otimes,\I)\ar[ll,bend left,"{(\Forg{},n)}"]
    \end{tikzcd}
  \end{center}
  where $m$ and $n$ are mediating arrows of the monoidal structure.
\end{prop}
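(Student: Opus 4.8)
The plan is to give the adjunction $\Span{}\dashv\Forg{}$ by an explicit unit and counit rather than through a hom--set bijection, exploiting two structural facts, and then to upgrade it to a monoidal adjunction using that $\Span{}$ is strong monoidal. The two facts are: (i) on objects $\Forg{}$ is the inclusion $\Hil\subseteq\parts{\ValD}$, so $\Forg{\Span A}=\Span A$ and each $\vec v\in A$ is the element $1\cdot\vec v\in\Span A$; and (ii) $\Span{}$ is idempotent on the image of $\Forg{}$, namely $\Span{\Forg V}=\Span V=V$ for every $V\in\Ob\VecV$. Claim (ii) holds because $V$ is a sub-distributive-action space of $\ValD$: each $\alpha\cdot\vec v$ with $\vec v\in V$ lies in $V$ and finite sums of such elements stay in $V$, so $\Span V\subseteq V$, while $V\subseteq\Span V$ via the single-term combinations $1\cdot\vec v$; note this uses only the closure axioms of Definition~\ref{def:svs}, never an additive inverse or a null vector (which $\ValD$ does not have).

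From (i) and (ii) I would define the unit $\eta_A:A\to\Forg{\Span A}$ to be the inclusion $A\subseteq\Span A$, which is the (definable) identity restricted to $A$ and hence lies in $\HomS{A,\Forg{\Span A}}$, and the counit $\epsilon_V:\Span{\Forg V}\to V$ to be $\Id_V$, which is linear and definable and so lies in $\HomV{\Span{\Forg V},V}$. Naturality of $\eta$ reduces to $\Forg{\Span f}\circ\eta_A=\eta_B\circ f$, which holds since $\Span f$ restricts to $f$ on the single-term combinations forming the image of $\eta_A$; naturality of $\epsilon$ is immediate because $\epsilon$ is the identity and $\Span{\Forg g}=g$ on $V$ for any linear $g$. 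The two triangle identities then collapse: $\Span{\Span A}=\Span A$ forces $\Span{\eta_A}=\Id_{\Span A}$, whence $\epsilon_{\Span A}\circ\Span{\eta_A}=\Id_{\Span A}$, and $\Span{\Forg V}=V$ forces $\eta_{\Forg V}=\Id_{\Forg V}$, whence $\Forg{\epsilon_V}\circ\eta_{\Forg V}=\Id_{\Forg V}$. This yields the ordinary adjunction $\Span{}\dashv\Forg{}$.

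To promote it to a monoidal adjunction I would invoke that the left adjoint $\Span{}$ is strong monoidal, with coherence maps $m_{AB}=\Id$ and $m_1=\Id$, while $\Forg{}$ is lax monoidal with $n_{VW}$ and $n_I$ the inclusions. By the theory of doctrinal adjunctions, since the left adjoint is strong monoidal it suffices to verify that $\eta$ and $\epsilon$ are monoidal natural transformations, i.e.\ that they are compatible with $m$, $n$ and with the unitors $\lambda$, $\rho$ and associators $\alpha$ on both sides. Since every mediating arrow in sight is an identity ($m_{AB}$, $m_1$, $\epsilon_V$) or an inclusion ($n_{VW}$, $n_I$, $\eta_A$), each coherence square reduces to an equality of inclusions and identities and commutes on the nose.

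I expect the only real work to be this last round of bookkeeping: confirming that the mates of the $n$'s are precisely the $m$'s, so that the two monoidal structures are genuinely matched, and double-checking throughout that every closure argument (especially $\Span V=V$ and the linearity of $\Span f$) appeals solely to the distributive-action-space axioms and not to vector-space features absent from $\ValD$. Granting the already-established monoidality of $\Span{}$ and $\Forg{}$, none of these steps is deep, and the conclusion is that $(\Span{},m)\dashv(\Forg{},n)$ is a monoidal adjunction.
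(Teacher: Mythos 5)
Your proposal is correct, but it reaches the adjunction by a genuinely different route than the paper. The paper works with the hom-set formulation: it defines $\varphi_{A,V}:\HomV{\Span A,V}\to\HomS{A,\Forg V}$ by restriction $h\mapsto h\upharpoonright A$ and $\psi_{A,V}:s\mapsto\Span s$, checks they are mutually inverse, and then verifies naturality in both variables through four commuting diagrams; like you, it then declares the monoidality axioms trivially satisfied. You instead produce the unit and counit directly, and your pivotal observation --- that $\Span{\Forg V}=V$ for every $V\in\Ob\VecV$, so the counit is literally the identity and the adjunction is a reflection of $\SetV$ onto $\VecV$ --- is nowhere stated explicitly in the paper, though it is silently used there (otherwise $\Span s$ would not have codomain $V$ and $\psi_{A,V}$ would not be well defined). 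Making the reflection explicit is what collapses your triangle identities and every monoidal coherence square into equalities of identities and inclusions, which is arguably tidier than the paper's diagram chase; what the paper's formulation buys in exchange is that the bijection $h\mapsto h\upharpoonright A$, $s\mapsto\Span s$ is exactly the form in which the adjunction is later consumed by the interpretation of the typing rules. Two minor points: your invocation of doctrinal adjunction is redundant --- a monoidal adjunction is by definition one whose unit and counit are monoidal transformations, so you should either verify that directly (as you in fact propose to do) or verify that the mates of the $n$'s are the $m$'s, but you need not do both; and the tacit well-definedness of $\Span f$ on elements admitting several representations as linear combinations is an issue you share with the paper's Definition~\ref{def:functors}, not a gap you introduced.
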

\begin{proof}
  We need to prove that $\HomV{\Span A,V}\simeq\HomS{A,\Forg{V}}$.
  \begin{itemize}
  \item Let $\HomV{\Span A,V}\xlra{\varphi_{A,V}}\HomS{A,\Forg{V}}$ given by
    $h\mapsto h\upharpoonright A$.
  \item Let $\HomS{A,\Forg{V}}\xlra{\psi_{A,V}}\HomV{\Span A,V}$ given by
    $s\mapsto\Span s$.
  \end{itemize}
  Notice that $\varphi_{A,V}\circ\psi_{A,V}=id$ and
  $\psi_{A,V}\circ\varphi_{A,V}=id$. Therefore, we rename
  $\varphi^{-1}_{A,V}=\psi_{A,V}$.
  We must prove that if $B\xlra f A$, $V\xlra g W$ the following diagrams commute:
  \begin{center}
    \begin{tikzcd}[labels=description,row sep=1.2cm,column sep=1.1cm]
      \HomV{\Span A,V}\ar[r,"\varphi_{A,V}"]\ar[d,swap,"\HomHsfV"] & \HomS{A,\Forg{V}}\ar[d,"\HomSfiV"]
      &[-1.2cm] \HomV{\Span A,V}\ar[r,"\varphi_{A,V}"]\ar[d,swap,"\HomHsAg"] & \HomS{A,\Forg{V}}\ar[d,"\HomSAig"]
      \\
      \HomV{\Span B,V}\ar[r,swap,"\varphi_{B,V}"] &\HomS{B,\Forg{V}}
      &\HomV{\Span A,W}\ar[r,swap,"\varphi_{A,W}"] &\HomS{A,iW}
      \\[-5mm]
      \HomS{A,\Forg{V}}\ar[d,swap,"\HomSfiV"]\ar[r,"\varphi^{-1}_{A,V}"] & \HomV{\Span A,V}\ar[d,"\HomHsfV"]
      & \HomS{A,\Forg{V}}\ar[d,swap,"\HomSAig"]\ar[r,"\varphi^{-1}_{A,V}"] & \HomV{\Span A,V}\ar[d,"\HomHsAg"]
      \\
      \HomS{B,\Forg{V}}\ar[r,swap,"\varphi^{-1}_{B,V}"] &  \HomV{\Span B,V}
      &\HomS{A,iW}\ar[r,swap,"\varphi^{-1}_{A,W}"] &  \HomV{\Span A,W}
    \end{tikzcd}
  \end{center}
  \begin{itemize}
  \item In the first diagram we need to prove that $(h\upharpoonright A)\circ
    f=(h\circ \Span f)\upharpoonright B$.

    We have, for any $b\in B$,
    \begin{align*}
      ((h\upharpoonright A)\circ f)(b)&=h(f(b))\\
                                      &=h(\Span f(b))\\
                                      &=((h\circ \Span f)\upharpoonright B)(b)
    \end{align*}
  \item In the second diagram we need to prove that $\Forg g\circ (h\upharpoonright A)=(g\circ h)\upharpoonright A$.

    We have, for any $a\in A$,
    \begin{align*}
      (\Forg g\circ (h\upharpoonright A))(a)&=g(h(a))\\
                                            &=((g\circ h)\upharpoonright A)(a)
    \end{align*}
  \item The third diagram follows by considering $\Span (s\circ f)=\Span s\circ
    \Span f$.
  \item The last one follows by $\Span (\Forg{g}\circ s)=\Span(\Forg{g})\circ \Span (s)=g\circ \Span (s)$.
  \end{itemize}
  Finally, the monoidality axioms of the adjunction are trivially satisfied.
\end{proof}

Before giving an interpretation of types, we need to define two particular objects $A+B$ and $\home AB$ in
$\SetV$ to interpret the types $A+B$ and $A\rightarrow B$ respectively:
\begin{defi}
  Let $A,B\in\Ob{\SetV}$, we define the following objects.
  \begin{align*}
   A+ B &=\{\Inl{\vec v}\mid\vec v\in A\}\cup\{\Inr{\vec w}\mid\vec w\in B\}\subseteq\ValD\\
   \home AB & =\{\hat f\mid f:A\lra B\in\Arr{\SetV}\}\subseteq\ValD
  \end{align*}
  where $\hat f$ is the term in \OC representing the map $f$.
\end{defi}

In particular, we need to show that $A+B$ is actually a coproduct, as stated by the following lemma.
\begin{lem}%
  \label{lem:coproduct}
  $A+B$ is a coproduct. That is, given $A\xlra f C$ and $B\xlra g C$, there is a unique map $\home fg$ such that the following diagram commutes.
  \begin{center}
    \begin{tikzcd}[labels=description,column sep=2cm,row sep=1.1cm]
      A\ar[r,"i_1"]\ar[rd,"f",swap,sloped] & (A + B)\ar[d,"\home fg",swap]
      &B\ar[l,"i_2",swap]\ar[ld,"g",sloped]\\
      &C&
    \end{tikzcd}
  \end{center}
\end{lem}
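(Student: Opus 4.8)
The plan is to exhibit the copairing map explicitly and then verify the universal property in two parts: existence (the triangles commute) and uniqueness. Since $A+B=\{\Inl{\vec v}\mid\vec v\in A\}\cup\{\Inr{\vec w}\mid\vec w\in B\}$ is a set of value distributions, and the coproduct injections $i_1,i_2$ are naturally taken to be $\vec v\mapsto\Inl{\vec v}$ and $\vec w\mapsto\Inr{\vec w}$, I would define the mediating map $\home fg\colon A+B\to C$ by case analysis on the injection tag:
\[
  \home fg(\vec u)=
  \begin{cases}
    f(\vec v)&\text{if }\vec u=\Inl{\vec v},\\
    g(\vec w)&\text{if }\vec u=\Inr{\vec w}.
  \end{cases}
\]
First I would check that this is well defined: every element of $A+B$ is of exactly one of the two forms (the images of $\Inl{\cdot}$ and $\Inr{\cdot}$ are disjoint, since the two constructors are syntactically distinct), so the case split is unambiguous.

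Next I would verify that $\home fg$ is a legitimate arrow of $\SetV$, i.e.\ that it lies in $\Definible$ and is a set-theoretic map $A+B\to C$. This is where the calculus does the real work: the term implementing $\home fg$ is built from the $\matchkeyword$ construct, namely $\lambda z.\Match{z}{x_1}{\hat f\,x_1}{x_2}{\hat g\,x_2}$, where $\hat f,\hat g$ are the \OC terms witnessing $f,g\in\Definible$ (these exist because $f,g\in\Arr\SetV$). Since both $f$ and $g$ land in the same codomain $C$, this match term is definable, so $\home fg\in\Definible$. The commutativity of the two triangles is then immediate by the reduction rules for $\matchkeyword$ in Table~\ref{tab:AtomicEval}: $\home fg\circ i_1=f$ because $\home fg(\Inl{\vec v})$ reduces to $\hat f\,\vec v$, and symmetrically for $i_2$.

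Finally, uniqueness: if $h\colon A+B\to C$ satisfies $h\circ i_1=f$ and $h\circ i_2=g$, then for any $\vec u\in A+B$ we have either $\vec u=\Inl{\vec v}$, giving $h(\vec u)=h(i_1(\vec v))=f(\vec v)=\home fg(\vec u)$, or $\vec u=\Inr{\vec w}$, giving $h(\vec u)=g(\vec w)=\home fg(\vec u)$; hence $h=\home fg$. The main obstacle I anticipate is not the categorical bookkeeping, which is routine, but confirming the definability constraint: one must be sure that the copairing is realized by an actual \OC term so that $\home fg\in\Definible$ and thus genuinely belongs to $\HomS{A+B,C}$. This hinges on the fact that $\matchkeyword$ with \emph{pure} branches applied to the definable terms $\hat f,\hat g$ stays within the definable fragment, and that the coproduct here is a coproduct in $\SetV$ (computable maps) rather than merely in $\Set$; I would make this definability check explicit rather than leaning on the underlying set-theoretic coproduct.
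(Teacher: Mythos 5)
Your proposal is correct and follows essentially the same route as the paper: define $\home fg$ by case analysis on the constructor tag, observe that the only non-trivial obligation is definability, and witness it with the term $\lambda x.\Match{x}{x_1}{\hat f\,x_1}{x_2}{\hat g\,x_2}$. The paper leaves the commutativity of the triangles and the uniqueness implicit (inheriting them from the coproduct in $\Set$, since $\SetV$ is a subcategory with the same objects-as-sets and arrows-as-functions), whereas you spell them out explicitly; this is a harmless elaboration, not a different argument.
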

\begin{proof}
  Since $A+B \in\Ob\SetV\subset\Ob\Set$ and $f,g\in\Arr\SetV\subset\Arr\Set$, we can take $\home fg\in\Arr\Set$ defined by
  \begin{align*}
    \home fg:A+B&\rightarrow C\\
    x&\mapsto\left\{
      \begin{array}{ll}
        f(\vec a) & \textrm{if }x=\Inl{\vec a}\\
	f(\vec b) & \textrm{ if}x=\Inr{\vec b}
    \end{array}\right.
  \end{align*}
  and prove that $\home fg\in\Arr\SetV$. All we need to prove is that $\hat{\home fg}\in\Definible$. Take $\hat{\home fg}=\lambda x.\Match x {\vec a}{\hat f(\vec a)}{\vec b}{\hat g(\vec b)}$.
\end{proof}

The following lemma allows us to use the home $\home AB$ in the expected way.
\begin{lem}
  There is an adjunction $\_\boxtimes B\dashv\home B{\_}$.
  That is
  \[
    \HomS{A\boxtimes B,C}\simeq\HomS{A,\home BC}
  \]
\end{lem}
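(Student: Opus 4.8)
The plan is to prove this by exhibiting the currying/uncurrying bijection explicitly, exactly as in the cartesian-closed case, and then checking that \emph{both} directions preserve definability by appealing to the lambda-abstraction and the let-destructor of \OC. First I would define the forward map $\Phi\colon\HomS{A\boxtimes B,C}\to\HomS{A,\home BC}$. Given $f\in\HomS{A\boxtimes B,C}$ with representing term $\hat f$, I set $\Phi(f)$ to be the set map sending each $\vec v\in A$ to the term $\widehat{g_{\vec v}}\in\home BC$, where $g_{\vec v}\colon B\to C$ is the partial application $\vec w\mapsto f(\vec v,\vec w)$. This $g_{\vec v}$ lies in $\Arr\SetV$ because it is implemented by $\Lam y{\hat f\,\Pair{\vec v}{y}}$, so $\widehat{g_{\vec v}}$ is a legitimate element of $\home BC$. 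Moreover $\Phi(f)$ itself is definable: it is implemented by $\Lam x{\Lam y{\hat f\,\Pair x y}}$, since applying this term to any $\vec v\in A$ reduces to $\Lam y{\hat f\,\Pair{\vec v}{y}}$, which represents $g_{\vec v}$. Hence $\Phi(f)\in\HomS{A,\home BC}$.

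Conversely I would define $\Psi\colon\HomS{A,\home BC}\to\HomS{A\boxtimes B,C}$. Given $h$ with representing term $\hat h$, I set $\Psi(h)$ to be the map sending $(\vec v,\vec w)\in A\boxtimes B$ to the result of applying the map represented by $h(\vec v)$ to $\vec w$. This map is implemented by $\Lam p{\LetP x y p{(\hat h\,x)\,y}}$: destructuring the pair $p$ into $x,y$, applying $\hat h$ to $x$ to obtain a function term, and then applying that term to $y$. Thus $\Psi(h)\in\HomS{A\boxtimes B,C}$.

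Next I would check that $\Phi$ and $\Psi$ are mutually inverse. For $\Psi\circ\Phi$, one computes $\Psi(\Phi(f))(\vec v,\vec w)$, which applies the map represented by $\Phi(f)(\vec v)=\widehat{g_{\vec v}}$ to $\vec w$, giving $g_{\vec v}(\vec w)=f(\vec v,\vec w)$; hence $\Psi(\Phi(f))=f$. For $\Phi\circ\Psi$, the term $\Phi(\Psi(h))(\vec v)$ represents the map $\vec w\mapsto\Psi(h)(\vec v,\vec w)$, which by construction is precisely the map represented by $h(\vec v)$; using that the correspondence $g\mapsto\hat g$ between $\HomS{B,C}$ and $\home BC$ is a bijection (a term determines its semantics uniquely, so distinct maps have distinct representing terms), we conclude $\Phi(\Psi(h))(\vec v)=h(\vec v)$. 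Finally, naturality in $A$ (along maps $A'\to A$) and in $C$ (along maps $C\to C'$) is a routine diagram chase: pre- and post-composition commute with currying because they amount to substituting into, respectively post-processing the body of, the representing terms above.

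The hard part will not be the bijection at the level of underlying set functions, which is just the usual currying isomorphism, but rather verifying that both $\Phi$ and $\Psi$ land inside $\Definible$, i.e.\ that the curried and uncurried functions are implementable by genuine terms of \OC, together with the care needed in treating $\home BC$ as a set of representing terms rather than of actual functions. The let-destructor for pairs and the iterated lambda-abstraction are exactly what make this work, and the fact (used implicitly above) that each definable map has a \emph{canonical} representing term is what makes $\Phi\circ\Psi=\mathrm{id}$ hold on the nose.
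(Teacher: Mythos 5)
Your proposal is correct and takes essentially the same approach as the paper: the same currying term $\lambda y.\lambda x.\hat f\,(y,x)$ and uncurrying term $\lambda x.\LetP{y}{z}{x}{(\hat h\,y)\,z}$, followed by the mutual-inverse check and naturality (which the paper settles by noting $\SetV$ is a subcategory of $\Set$). The only difference is presentational: you spell out the bookkeeping that $\home BC$ consists of canonical representing terms rather than maps, a point the paper leaves implicit.
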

\begin{proof}\leavevmode
  \begin{itemize}
  \item Let $A\boxtimes B\xlra{f} C\in\Arr{\SetV}$. Then take
    $A\xlra{\mathsf{curry}f} \home BC$ defined by $\vec a\mapsto\lambda x.\hat f(\vec a,x)$, which can be represented in \OC as $\lambda y.\lambda x.\hat f(y,x)$.

  \item Let $A\xlra{g}\home BC\in\Arr{\SetV}$. Then take $A\boxtimes B\xlra{\mathsf{uncurry}g} C$ defined by $(\vec a,\vec b)\mapsto (g\vec a)\vec b$, with $\hat{\s{uncurry} g}$ being $\lambda x.\LetP{y}{z}{x}{\hat g y z}$.
  \end{itemize}
  Notice that $\hat{\mathsf{uncurry}(\mathsf{curry}f)(a,b)}\to^*\hat f(a,b)$ and $\hat{\mathsf{curry}(\mathsf{uncurry}g)a}\to^* \hat ga$.
  Finally, naturality follows from the fact that $\SetV$ is a subcategory of
  $\mathsf{Set}$. \qedhere
\end{proof}

\begin{defi}
  Types are interpreted in the category $\SetV$, as follows:
  \begin{align*}
    \sem{\mathbb U} & =1\\
    \sem{\sharp A} &= \Forg{\Span{\sem{A}}}\\
    \sem{A+B} &= \sem A+\sem B\\
    \sem{A\times B} &=\sem A\boxtimes\sem B\\
    \sem{A\rightarrow B} &=\home{\sem A}{\sem B}
  \end{align*}
\end{defi}

To avoid cumbersome notation, we write $A$ for $\sem{A}$, when there is no
ambiguity.

Before giving the interpretation of typing derivation trees in the model, we
need to define certain maps that will serve to implement some of the
constructions in the language.

To interpret the \emph{match} construction we define the following map.
\begin{defi}\label{def:if}
  Let $A\boxtimes\Delta\xlra{f}C\in\Arr{\SetV}$ and
  $B\boxtimes\Delta\xlra{g}C\in\Arr{\SetV}$. Then, we define the arrow
  $(A+B)\boxtimes\Delta\xlra{\home fg_1}C$ by
  $(\Inl a,d)\mapsto f(a,d)$ and $(\Inr b,d)\mapsto g(b,d)$.
\end{defi}

To interpret the sequence construction, and the rules $\mathsf{Weak}$ and $\mathsf{Contr}$ we need the following maps.
\begin{lem}
  Let $A^\flat$. Then,
  \begin{enumerate}
  \item
    $A\boxtimes B\xlra{\pi_B}B\in\Arr{\SetV}$
    and
    $B\boxtimes A\xlra{\pi_B'}B\in\Arr{\SetV}$.
  \item $A\xlra{\delta}A\boxtimes A\in\Arr{\SetV}$.
  \end{enumerate}
\end{lem}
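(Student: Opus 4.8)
The plan is to treat the three maps uniformly: for each I will first write it down as a set-theoretic function between the relevant objects, and then exhibit a closed \OC\ term computing it, so that the map lands in $\Definible$ and hence in $\Arr{\SetV}$. The hypothesis $A^\flat$ enters at exactly one place, namely in making those terms \emph{typable}: the structural rules $\s{Weak}$ and $\s{Contr}$ carry the side condition $A^\flat$, and these are precisely the rules that let an $A$-component be discarded (for the projections) or duplicated (for the diagonal). Recall also that, by a straightforward induction on flat types, $A^\flat$ forces $\sem A$ to consist only of pure values (the clause $\sem{\sharp A}=\Forg{\Span{\sem A}}$ is the only one producing genuine distributions, and flatness forbids $\sharp$ except at the right of an arrow). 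This is what makes the bilinear coefficient bookkeeping collapse correctly below.

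For $\pi_B$ I would take the set map $A\boxtimes B\to B$ given by $(v,\vec w)\mapsto\vec w$. It is well defined precisely because $A^\flat$ makes the first component a single pure value $v$, so that from a canonical element $\sum_j\beta_j\cdot\Pair{v}{w_j}$ of $A\boxtimes B$ one recovers $\vec w=\sum_j\beta_j\cdot w_j$ unambiguously. I implement it by $\hat{\pi_B}=\Lam{x}{\LetP{y}{z}{x}{z}}$ and type it with $(A\times B)\to B$ using $\s{Lam}$, $\s{PureLet}$ and $\s{Ax}$: the body $y:A,z:B\vdash z:B$ comes from the axiom $z:B\vdash z:B$ by an application of $\s{Weak}$ discarding $y:A$, legal exactly because $A^\flat$. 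A reduction check confirms the computed function: applying $\hat{\pi_B}$ to $\sum_j\beta_j\cdot\Pair{v}{w_j}$ distributes over the sum, each summand reducing as $\LetP{y}{z}{\Pair{v}{w_j}}{z}\lra w_j$, so the result is $\sum_j\beta_j\cdot w_j=\vec w$. The mirror map $\pi_B'$ is identical, with $\hat{\pi_B'}=\Lam{x}{\LetP{y}{z}{x}{y}}$ now discarding $z:A$ by $\s{Weak}$.

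For the diagonal $\delta:A\to A\boxtimes A$ I would take the set map $v\mapsto\Pair{v}{v}$, implemented by $\hat\delta=\Lam{x}{\Pair{x}{x}}$. Its typing $A\to(A\times A)$ is obtained by applying $\s{Pair}$ to the two axioms $x_1:A\vdash x_1:A$ and $x_2:A\vdash x_2:A$, giving $x_1:A,x_2:A\vdash\Pair{x_1}{x_2}:A\times A$, then merging the two hypotheses with $\s{Contr}$ — again legal only because $A^\flat$ — and finally $\s{Lam}$. The only real content of the lemma is this recognition that flatness is exactly what licenses discarding and copying, mirroring the linearity (no-cloning) discipline of the calculus: for a non-flat, i.e.\ $\sharp$-labelled, type neither $\s{Weak}$ nor $\s{Contr}$ applies, and moreover the underlying set maps themselves cease to be well defined through coefficient mixing, so these structural maps genuinely fail to exist. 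I expect no serious obstacle beyond keeping the two roles of $A^\flat$ — typability of the terms and well-definedness of the set functions — clearly separated.
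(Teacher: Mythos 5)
Your proof is correct and takes essentially the same route as the paper: you exhibit exactly the same terms, $\hat{\pi_B}=\Lam{x}{\LetP{y}{z}{x}{z}}$ and $\hat\delta=\Lam{x}{\Pair{x}{x}}$, and type them via $\s{Lam}$, $\s{PureLet}$/$\s{Pair}$ and $\s{Ax}$, with $A^\flat$ entering precisely through the side conditions of $\s{Weak}$ and $\s{Contr}$. Your extra observation that flatness also makes the underlying set maps well defined (since $\sem A$ then contains only pure values, so no coefficient mixing occurs under the bilinear pairing) is a point the paper leaves implicit, but it does not alter the argument.
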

\needspace{2\baselineskip}
\begin{proof}\leavevmode
  \begin{enumerate}
  \item Take $\hat{\pi_B} := \lambda x.\LetP{y}{z}{x}{z}$, which can be typed as
    follows
    \[
      \infer[{}^{\mathsf{Lam}}]{\vdash\lambda x.\LetP yzxz:A\times B\rightarrow B}
      {
        \infer[{}^{\mathsf{PureLet}}]{x:A\times B\vdash\LetP yzxz:B}
        {
          \infer[{}^{\mathsf{Ax}}]{x:A\times B\vdash x:A\times B}{}
          &
          \infer[{}^{\mathsf{Weak}}]{y:A,z:B\vdash z:B}
          {
            \infer[{}^{\mathsf{Ax}}]{z:B\vdash z:B}{}
            &
            A^\flat
          }
        }
      }
    \]
    $\hat{\pi_B'}$ is analogous.
  \item Take $\hat{\delta} :=\lambda x.(x,x)$, which can be typed as follows
    \[
      \infer[{}^{\mathsf{Lam}}]{\vdash\lambda x.(x,x):A\rightarrow A\times A}
      {
        \infer[{}^{\mathsf{Contr}}]{x:A\vdash (x,x):A\times A}
        {
          \infer[{}^{\mathsf{Pair}}]{x:A,y:A\vdash (x,y):A\times A}
          {
            \infer[{}^{\mathsf{Ax}}]{x:A\vdash x:A}{}
            &
            \infer[{}^{\mathsf{Ax}}]{y:A\vdash y:A}{}
          }
          &
          A^\flat
        }
      }
    \]
  \end{enumerate}
  \vspace{-\baselineskip}\qedhere
\end{proof}

\noindent
We give the interpretation of a type derivation tree in our model. If
$\Gamma\vdash t:A$ with a derivation $\pi$, we write generically $\sem\pi$ as
$\Gamma\xlra{t_A} A$. When $A$ is clear from the context, we may write just $t$
for $t_A$.
\begin{defi}
  If $\pi$ is a type derivation tree, we define $\sem\pi$ inductively as follows,
  \begin{align*}
    &\sem{\vcenter{\infer[^{\mathsf{Ax}}]{x:A\vdash x:A}{}}} = A\xlra{\Id} A\\
    &\sem{\vcenter{\infer[^{\mathsf{Lam}}] {\Gamma\vdash\lambda x.\vec t:A\rightarrow B} {\Gamma,x:A\vdash\vec t:B}}} =\Gamma\xlra{\eta^A}[A,\Gamma\boxtimes A]\xlra{[A,\vec t]}[A,B]\\
    &\sem{\vcenter{\infer[^{\mathsf{App}}]{\Gamma,\Delta\vdash t\vec s:B} {\Gamma\vdash t:A\rightarrow B & \Delta\vdash\vec s:A }}}
    =\Gamma\boxtimes\Delta\xlra{t\boxtimes\vec s}
     \home AB\boxtimes A
     \xlra{\varepsilon'} B\quad\textrm{where }\varepsilon'=\varepsilon\circ\s{swap} \\
    &\sem{\vcenter{\infer[^{\mathsf{Void}}]{\vdash\Void:\mathbb U}{}}} = 1\xlra{\Id}1\\
    &\sem{\vcenter{\infer[^{\mathsf{PureSeq}}]{\Gamma,\Delta\vdash\vec t;\vec s:A}{\Gamma\vdash\vec t:\mathbb U & \Delta\vdash\vec s:A}}}
    = \Gamma\boxtimes\Delta\xlra{\vec t\boxtimes\vec s} 1\boxtimes A\xlra{\pi_A} A\\
    &\sem{\vcenter{\infer[^{\mathsf{UnitarySeq}}]{\Gamma,\Delta\vdash\vec t;\vec s:\sharp A}{\Gamma\vdash\vec t:\sharp \mathbb U & \Delta\vdash\vec s:\sharp A}}}
    = \Gamma\boxtimes\Delta\xlra{\vec t\boxtimes\vec s} \Forg{\I}\boxtimes \Forg{\Span A}\xlra n \Forg{(\I\otimes\Span A)}\xlra{\Forg{\lambda^{-1}_\otimes}}\Forg{\Span A}\\
    &\sem{\vcenter{\infer[^{\mathsf{Pair}}]{\Gamma,\Delta\vdash(\vec v,\vec w):A\times B}{\Gamma\vdash\vec v:A & \Delta\vdash\vec w:B}}} = \Gamma\boxtimes\Delta\xlra{\vec v\boxtimes\vec w} A\boxtimes B\\
    &\sem{\vcenter{\infer[^{\mathsf{PureLet}}]{\Gamma,\Delta\vdash\mathsf{let}\ (x,y)=\vec t\ \mathsf{in}\ \vec s:C}{\Gamma\vdash\vec t:A\times B & \Delta,x:A,y:B\vdash\vec s:C}}} \\
    &=
      \Gamma\boxtimes\Delta
      \xlra{\vec t\boxtimes\eta^{A\boxtimes B}} (A\boxtimes B)\boxtimes\home{A\boxtimes B}{\Delta\boxtimes A\boxtimes B}
      \\
    &\xlra{\Id\boxtimes\home{A\boxtimes B}{\vec s}} (A\boxtimes B)\boxtimes\home{A\boxtimes B}C\xlra{\varepsilon}C
    \\
    &\sem{\vcenter{\infer[^{\mathsf{UnitaryLet}}]{\Gamma,\Delta\vdash\mathsf{let}\ (x,y)=\vec t\ \mathsf{in}\ \vec s:\sharp C}{\Gamma\vdash\vec t:A\otimes B & \Delta,x:\sharp A,y:\sharp B\vdash\vec s:\sharp C}}} \\
    &=\Gamma\boxtimes\Delta
    \xlra{\vec t\boxtimes\eta^{\Forg{\Span A}\boxtimes\Forg{\Span B}}}\Forg{\Span{(A\boxtimes B)}}\boxtimes\home{\Forg{\Span A}\boxtimes\Forg{\Span B}}{\Delta\boxtimes\Forg{\Span A}\boxtimes\Forg{\Span B}}\\
    &\xlra{\Id\boxtimes\home{\Forg{\Span A}\boxtimes\Forg{\Span B}}{\vec s}}\Forg{\Span{(A\boxtimes B)}}\boxtimes\home{\Forg{\Span{A}}\boxtimes \Forg{\Span B}}{\Forg{\Span{C}}}\\
    & \xlra{\Id\boxtimes\eta} \Forg{\Span{(A\boxtimes B)}}\boxtimes\Forg{\Span{\home{\Forg{\Span A}\boxtimes \Forg{\Span B}}{\Forg{\Span C}}}}\\
    &\xlra{n} \Forg{(\Span{(A\boxtimes B)}\otimes\Span{\home{\Forg{\Span{A}}\boxtimes \Forg{\Span B}}{\Forg{\Span{C}}}})}\\
    &\xlra{Um} \Forg{\Span{((A\boxtimes B)\boxtimes\home{\Forg{\Span{A}}\boxtimes \Forg{\Span B}}{\Forg{\Span{C}}})}}\\
    &\xlra{\Forg{\Span{((\eta\boxtimes\eta)\boxtimes\Id)}}} \Forg{\Span{((\Forg{\Span{A}}\boxtimes \Forg{\Span{B}})\boxtimes\home{\Forg{\Span{A}}\boxtimes \Forg{\Span B}}{\Forg{\Span{C}}})}}
    \xlra{\Forg{\Span{\varepsilon}}} \Forg{\Span{\Forg{\Span C}}}\xlra\mu \Forg{\Span C}\\
    &\sem{\vcenter{\infer[^{\mathsf{InL}}]{\Gamma\vdash\mathsf{inl}(\vec v):A+B}{\Gamma\vdash\vec v:A}}} = \Gamma\xlra{\vec v} A\xlra{i_1} A+B\\
    &\sem{\vcenter{\infer[^{\mathsf{InR}}]{\Gamma\vdash\mathsf{inr}(\vec v):A+B}{\Gamma\vdash\vec v:B}}} = \Gamma\xlra{\vec v} B\xlra{i_2} A+B\\
    &\sem{\vcenter{\infer[^{\mathsf{PureMatch}}]{\Gamma,\Delta\vdash\mathsf{match}\ \vec t\ \{\mathsf{inl}(x_1)\mapsto\vec v_1\mid\mathsf{inr}(x_2)\mapsto\vec v_2\}:C}{\Gamma\vdash\vec t:A+B & \Delta\vdash(x_1:A\vdash\vec v_1\perp x_2:B\vdash\vec v_2):C}}}\\
    &=\Gamma\boxtimes\Delta\xlra{\vec t\boxtimes\Id} (A+B)\boxtimes\Delta\xlra{\home{\vec v_1}{\vec v_2}_1} C
    \\
    &\sem{\vcenter{\infer[^{\mathsf{UnitaryMatch}}]{\Gamma,\Delta\vdash\mathsf{match}\ \vec t\ \{\mathsf{inl}(x_1)\mapsto\vec v_1\mid\mathsf{inr}(x_2)\mapsto\vec v_2\}:\sharp C}
      {
        \Gamma\vdash\vec t:A\oplus B
	&
        \Delta\vdash(x_1:\sharp A\vdash\vec v_1\perp x_2:\sharp B\vdash\vec v_2):\sharp C
    }}}\\
    &=\Gamma\boxtimes\Delta\xlra{\vec t\boxtimes\Id} \Forg{\Span{(A + B)}}\boxtimes\Delta
    \xlra{\Forg{\Span{(\eta+\eta)}}\boxtimes \eta} \Forg{\Span{(\Forg{\Span{A}}+\Forg{\Span{B}})}}\boxtimes \Forg{\Span{\Delta}}\\
    &\xlra n \Forg{(\Span{(\Forg{\Span{A}}+\Forg{\Span{B}})}\otimes\Span{\Delta})}
    \xlra{\Forg{m}} \Forg{\Span{((\Forg{\Span{A}}+\Forg{\Span{B}})\boxtimes\Delta)}}
      \xlra{\Forg{\Span{\home{\vec v_1}{\vec v_2}_1}}} \Forg{\Span{C}}
    \\
    &\sem{\vcenter{\infer[^{\mathsf{Sup}}]{\vdash\sum_{j=1}^m\alpha_j\cdot\vec v_j:\sharp A}
      {
        \text{\scriptsize $(k\neq h)$}
	&
        \vdash(\vec v_k\perp\vec v_h):A
	&
        \sum_{j=1}^m|\alpha_j|^2=1
	&
	\text{\scriptsize $m\geq 1$}
	&
        A\neq B\rightarrow C
    }}}\\
    &=1\xlra{\eta}\Forg{\Span 1}\xlra{\Forg{\sum_j\alpha_j\cdot\Span{\vec v_j}}} \Forg{\Span{A}}
    \\
    &\sem{\vcenter{\infer{A\leq A}{}}} = A\xlra{\Id}A \\
    &\sem{\vcenter{\infer{A\leq C}{A\leq B & B\leq C}}} = A\xlra{\sem{A\leq B}}B\xlra{\sem{B\leq C}}C \\
    &\sem{\vcenter{\infer{A\leq\sharp A}{}}} = A\xlra{\eta}\Forg{\Span A}\\
    &\sem{\vcenter{\infer{\sharp \sharp A\leq\sharp A}{}}} =
      \Forg{\Span{\Forg{\Span A}}}\xlra{\mu}\Forg{\Span A} \quad\textrm{where}\ \mu=\Forg{\varepsilon_{\Span{}}}\\
      &\sem{\vcenter{\infer{A'\rightarrow B\leq A\rightarrow B'}{A\leq A' & B\leq B'}}} = \home{A'}B\xlra{\home{\sem{A\leq A'}}{\sem{B\leq B'}}}\home A{B'}\\
    &\sem{\vcenter{\infer{A\times B\leq A'\times B'}{A\leq A' & B\leq B'}}} =
                                                                A\boxtimes B\xlra{\sem{A\leq A'}\boxtimes\sem{B\leq B'}}A'\boxtimes B'\\
    &\sem{\vcenter{\infer{A+B\leq A'+B'}{A\leq A' & B\leq B'}}} = A+B\xlra{\sem{A\leq A'}+\sem{B\leq B'}}A'+B'\\
    &\sem{\vcenter{\infer[^{\leq}]{\Gamma\vdash\vec t:B}{\Gamma\vdash\vec t:A & A\leq B}}} = \Gamma\xlra{\vec t}A\xlra{\sem{A\leq B}}B\\
    &\sem{\vcenter{\infer[^{\equiv}]{\Gamma\vdash\vec r:A}{\Gamma\vdash\vec t:A & \vec t\equiv\vec r}}} = \Gamma\xlra{\vec t} A\\
    &\sem{\vcenter{\infer[^{\mathsf{Weak}}]{\Gamma,x:A\vdash\vec t:B}{\Gamma\vdash\vec t:B & A^\flat}}} = \Gamma\boxtimes A\xlra{\pi_\Gamma'} \Gamma\xlra{\vec t} B\\
    &\sem{\vcenter{\infer[^{\mathsf{Contr}}]{\Gamma,x:A\vdash\vec t[y:=x]:B}{\Gamma,x:A,y:A\vdash\vec t:B & A^\flat}}}
    = \Gamma\boxtimes A\xlra{\Id\boxtimes\delta} \Gamma\boxtimes A\boxtimes A\xlra{\vec t} B
  \end{align*}
\end{defi}

Lemma~\ref{lem:eqDer} allows us to write the semantics of a sequent,
independently of its derivation. Hence, due to this independence, we can write
$\sem{\Gamma\vdash t:A}$, without ambiguity.

\begin{lem}[Independence of derivation]\label{lem:eqDer}
  If $\Gamma\vdash t:A$ can be derived with two different derivations $\pi$ and
  $\pi'$, then $\sem\pi=\sem{\pi'}$.
\end{lem}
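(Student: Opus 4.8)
The plan is to proceed by induction, distinguishing the outermost syntactic constructor of $\vec t$, which forces the introduction/elimination rule that must appear in any derivation. The only non-syntax-directed rules are the coercion rule $\leq$, the congruence rule $\equiv$, and the structural rules $\mathsf{Weak}$ and $\mathsf{Contr}$; these are the sole source of divergence between $\pi$ and $\pi'$, so the problem reduces to showing that the various ways of interleaving them with the syntax-directed rules all yield the same morphism. I would first dispatch $\equiv$: by definition $\sem{\cdot}$ assigns to it the identity reindexing $\Gamma\xlra{\vec t}A$, so its insertion never alters the underlying morphism, and we may assume $\pi,\pi'$ contain no $\equiv$. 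Throughout I would rely on an inversion (generation) lemma, of the same straightforward kind used in the proof of Lemma~\ref{lem:SubstSR}, to read off the shape of the premises.

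The crucial auxiliary result is \emph{coherence of subtyping}: any two derivations of $A\leq B$ induce the same arrow $\sem{A\leq B}:A\to B$ in $\SetV$. I would prove this by induction on the subtyping relation. Reflexivity and transitivity are routine, and the congruence cases for $\boxtimes$, $+$ and $\home{-}{-}$ follow from functoriality of these operations. The delicate cases are $A\leq\sharp A$ and $\sharp\sharp A\leq\sharp A$, interpreted respectively by the unit $\eta$ and the multiplication $\mu=\Forg{\varepsilon_{\Span{}}}$ of the monad $\Forg\Span$ arising from the adjunction of Proposition~\ref{prop:adjunction} (recall $\sem{\sharp A}=\Forg{\Span{\sem A}}$). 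Here the various composites of $\eta$'s and $\mu$'s produced by different derivations are forced to agree by the monad laws (left/right unitality and associativity) together with the naturality of $\eta$ and $\mu$. This monad-coherence argument is the technical heart of the lemma and the step I expect to be the main obstacle, since it is precisely the interaction of the $\sharp$-modality with itself, and with $\boxtimes$ and $+$, that admits genuinely distinct derivations.

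With subtyping coherence in hand, I would establish the commutation lemmas that let a coercion be pushed through each syntax-directed rule. For instance, for $\mathsf{Pair}$, post-composing $\sem{A\leq A'}\boxtimes\sem{B\leq B'}$ after $\vec v\boxtimes\vec w$ equals first coercing the premises and then pairing, by bifunctoriality of $\boxtimes$; the analogous naturality squares handle $\mathsf{App}$ (via $\varepsilon'$), $\mathsf{InL}/\mathsf{InR}$ (via $i_1,i_2$), the match rules (via the combinator $\home{\vec v_1}{\vec v_2}_1$ of Definition~\ref{def:if}), and the sequence and let rules (via $\pi_A$, the mediating arrows $n,m$, and $\mu$). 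These are exactly the naturality statements that hold because $\SetV$ and $\VecV$ are concrete categories over $\Set$ with the evident set-theoretic structural maps. For $\mathsf{Weak}$ and $\mathsf{Contr}$, which are restricted to flat types and interpreted by the projection $\pi_\Gamma'$ and the diagonal $\delta$, I would use the naturality of these maps to show that their placement in the derivation is immaterial.

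Finally, assembling these commutations, I would argue that both $\pi$ and $\pi'$ can be rewritten—without altering $\sem{\cdot}$—into a canonical form in which all coercions and structural rules occupy fixed positions relative to the unique syntax-directed skeleton dictated by $\vec t$, with adjacent coercions fused via subtyping coherence. Since this canonical form depends only on the judgment $\Gamma\vdash\vec t:A$, we conclude $\sem\pi=\sem{\pi'}$. To make the induction well-founded despite the structural rules preserving term size, I would order the argument by the pair consisting of the size of $\vec t$ together with the number of $\leq$, $\mathsf{Weak}$, and $\mathsf{Contr}$ instances, each commutation step strictly decreasing the second component or exposing a smaller subterm.
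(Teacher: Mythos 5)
Your overall strategy---rewrite both derivations into a canonical form by commuting the structural rules past the syntax-directed ones, justifying each commutation semantically via naturality, functoriality, and the monad laws---is essentially the same normalization-of-derivations argument the paper uses, and your explicit coherence-of-subtyping lemma is a reasonable refinement of what the paper leaves implicit. However, there is a genuine gap: your claim that $\leq$, $\equiv$, $\s{Weak}$, and $\s{Contr}$ are ``the sole source of divergence'' and that the term dictates a ``unique syntax-directed skeleton'' is false. The elimination rules come in Pure/Unitary pairs, and these pairs are not syntax-directed. Concretely, the same judgment $\Gamma,\Delta\vdash \vec t;\vec s:\sharp A$ admits one derivation ending in $\s{PureSeq}$ (typing $\vec t:\U$ and $\vec s:A$, then coercing $A\leq\sharp A$) and another ending in $\s{UnitarySeq}$ (first coercing the premises to $\sharp\U$ and $\sharp A$). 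These two derivations have \emph{different} skeletons, so no amount of pushing coercions around a fixed skeleton will identify them; your commutation lemmas only relate a single rule's interpretation to itself under coercion of its premises or conclusion, and cannot bridge two distinct rules.

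This overlap is precisely the case the paper singles out as the non-trivial content of the proof: it introduces a dedicated rewrite step $\s{UnitarySeq}\lra\s{PureSeq}$ and proves a commutative diagram showing that the $\s{PureSeq}$ interpretation (via the projection $\pi_A$) followed by $\eta$ agrees with the $\s{UnitarySeq}$ interpretation after coercions, namely the composite of $\eta\boxtimes\eta$, the mediating map $n$ of the monoidal adjunction, and $\Forg{\lambda^{-1}_\otimes}$. That agreement is not an instance of naturality of any one rule; it relies on the monoidality of the adjunction and the unit laws of the monoidal structure. You would also need the paper's companion observation that the remaining Pure/Unitary pairs ($\s{Let}$ and $\s{Match}$) cause no such overlap because a term typable by one is not typable by the other. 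Without handling the sequencing overlap, your canonical form is not well-defined and the induction does not go through.
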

\begin{proof}
  We first give a rewrite system on derivation trees such that if one rule can
  be applied after or before another rule, we choose a direction to rewrite the
  tree to one of these forms.
  Then, we prove that every rule preserves the semantics of the tree. This
  rewrite system is clearly confluent and normalizing, hence for each tree $\pi$
  we can take the semantics of its normal form, and so every sequent will have
  one way to calculate its semantics: as the semantics of the normal tree.

  The introduction rules ($\s{Ax}$, $\s{Lam}$, $\s{Void}$, $\s{Pair}$,
  $\s{InL}$, $\s{InR}$, and $\s{Sup}$) are syntax-directed. So, whenever we have
  an introduction term, we know precisely what is the last rule applied in its
  derivation tree.

  For the structural rules ($\leq$, $\equiv$, $\s{Weak}$, and $\s{Contr}$) if
  they can be applied as last rule, or can be applied before, then we can always
  choose to rewrite the tree to apply them at the end, and also choose and order
  between them.

  Finally, the elimination rule $\s{App}$ is the only one which is
  syntax-directed, all the others are not, since for each
  elimination term there exists a rule preceded with $\s{Pure}$ and another with
  $\s{Unitary}$. However, if a term can be typed with one, cannot be typed with
  the other, except for $\s{Seq}$, which, in combination with
  $\leq$, can be interchanged, and then we have to choose a direction to rewrite
  one to the other:
  \[
    \vcenter{\infer[^{\s{UnitarySeq}}]{\Gamma,\Delta\vdash\vec t;\vec s:\sharp A}
      {
        \infer[^\leq]{\Gamma\vdash\vec t:\sharp\U}{\Gamma\vdash\vec t:\U}
        &
        \infer[^\leq]{\Delta\vdash\vec s:\sharp A}{\Delta\vdash\vec s:A}
      }
    }
    \qquad\longrightarrow\qquad
    \vcenter{
      \infer[^\leq]{\Gamma,\Delta\vdash\vec t;\vec s:\sharp A}
      {
        \infer[^{\s{PureSeq}}]{\Gamma,\Delta\vdash\vec t;\vec s:A}
        {
          \Gamma\vdash\vec t:\U
          &
          \Delta\vdash\vec s:A
        }
      }
    }
  \]
  The confluence of this rewrite system is easily inferred from the fact that
  there are not critical pairs. The normalization follows from the fact that the
  trees are finite and all the rewrite rules push the structural rules to the
  root of the trees.

  It only remains to check that each rule preserves the semantics.
  \begin{itemize}
    \item The structural rules follow trivially by naturality.
    \item $\s{UnitarySeq}\lra\s{PureSeq}$
      The diagrams for the left-side and the right-side of the rewrite rule,
      commutes as it is shown
      below:
  \end{itemize}
  \begin{center}
      \begin{tikzcd}[labels=description,column sep=0.8cm]
	\Gamma\boxtimes\Delta
	\arrow[rrrrddd, "\s{UnitarySeq}" description,
	  rounded corners,
	  to path={[pos=0.25]
	    -- ([xshift=3mm]\tikztostart.east)
	    -| ([xshift=5mm]\tikztotarget.east)\tikztonodes
	  -- (\tikztotarget)}
	]
	\arrow[rrrrddd, "\s{PureSeq}" description,
	  rounded corners,
	  to path={[pos=0.75]
	    -- ([xshift=-3mm]\tikztostart.west)
	    |- ([yshift=-5mm]\tikztotarget.south)\tikztonodes
	  -- (\tikztotarget)}
	]
	\arrow[d, "\vec t\boxtimes\vec s", dashed, red] \\
	 \color{red}1\boxtimes A\arrow[rrd,"\Id\boxtimes\eta",sloped,dashed,red] \arrow[rr, "\eta\boxtimes\eta", dashed, red] \arrow[dd, "\pi_A", dashed, red]  & &
	\color{red}\Forg{\Span{1}}\boxtimes \Forg{\Span{A}} \arrow[rr, "n", dashed, red] & & \color{red}i(I\otimes\Span A) \arrow[dd, "i\lambda^{-1}_\otimes", dashed, red] \\
	 & &\color{red}1\boxtimes (\Forg{\Span{A}})\arrow[u,"\eta\boxtimes\Id",dashed,red]\arrow[rrd,"\pi_{\Forg{\Span{A}}}=\lambda^{-1}_\boxtimes",sloped,dashed,red]  & &   \\
	 \color{red}A \arrow[rrrr, "\eta", dashed, red] & & & & \Forg{\Span{A}} &
      \end{tikzcd}
  \end{center}
  \vspace{-\baselineskip}\qedhere
\end{proof}

\section{Soundness and (partial) completeness}\label{sec:SC}
We prove the soundness of our interpretation with respect to reduction, and the
completeness only on type $\sharp(\U+\U)$, which corresponds to $\C^2$.

\begin{lem}
  [Substitution]\label{lem:substitution}
  If $\Gamma,x:A\vdash\vec t:B$ and $\Delta\vdash v:A$, then the following diagram commutes:
  \begin{center}
    \begin{tikzcd}[labels=description]
      \Gamma\boxtimes\Delta\ar[rr,"\vec t{[}x:=v{]}"]\ar[dr,"\Id\boxtimes v",sloped] & & B\\
      &\Gamma\boxtimes A\ar[ur,"\vec t",sloped] &
    \end{tikzcd}
  \end{center}
  That is,
  $\sem{\Gamma,\Delta\vdash\vec t[x:=v]:B}=\sem{\Gamma,x:A,\Gamma\vdash\vec t:B}\circ(\Id\boxtimes\sem{\Delta\vdash v:A})$.
\end{lem}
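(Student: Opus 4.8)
The plan is to prove the identity by induction on the typing derivation of $\Gamma,x:A\vdash\vec t:B$, mirroring the inductive structure of the syntactic substitution Lemma~\ref{lem:SubstSR}. By Lemma~\ref{lem:eqDer} I may fix derivations in the normal form used there, so that the last rule is syntax-directed for the introduction and elimination constructs, while the structural rules ($\leq$, $\equiv$, $\s{Weak}$, $\s{Contr}$) sit at the root. In each case I would unfold the definition of $\sem{\cdot}$ on both sides of the claimed equation and reduce the goal to a commuting diagram built from the structural maps of $\SetV$ and $\VecV$; since $v$ is a pure value, $\sem{\Delta\vdash v:A}$ is an honest $\SetV$-map and no linearity bookkeeping is needed for it.

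First I would handle the base case $\vec t=x$: the generation lemma forces the flat part of $\Gamma$ to be weakened in and $A\leq B$, the substitution yields $\vec t[x:=v]=v$, and the equation collapses to $\sem{\Delta\vdash v:B}$ equal to the projection-and-coercion composite applied after $\Id\boxtimes\sem v$, which closes using the unitors and the projections $\pi'_\Gamma$. The companion base case $\vec t=y$ with $y\neq x$ is the trivial one: $x\notin\FV(\vec t)$, so $\vec t[x:=v]=\vec t$ and the right-hand side factors through $\pi'_\Gamma:\Gamma\boxtimes A\to\Gamma$, which is exactly the semantics attached to the $\s{Weak}$ step introducing $x:A$; this also subsumes the $\s{Sup}$ case, whose typed term is a closed value distribution in which $x$ cannot occur.

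For the binary rules ($\s{App}$, $\s{Pair}$, $\s{PureSeq}$/$\s{UnitarySeq}$, $\s{PureLet}$/$\s{UnitaryLet}$, $\s{PureMatch}$/$\s{UnitaryMatch}$) the context $\Gamma,x:A$ splits as $(\Gamma_1,\Gamma_2)$, and by linearity $x$ lives in exactly one subcontext; following Lemma~\ref{lem:SubstSR} I would treat the two subcases (``$x$ on the left'' and ``$x$ on the right'') separately. In each, the induction hypothesis rewrites the interpretation of the substituted immediate subterm as a precomposition with $\Id\boxtimes\sem v$, and the claim then follows from functoriality of $\boxtimes$ together with naturality of the maps building the construct: the counit $\varepsilon$ (for $\s{App}$ and the $\s{Let}$ rules), the coproduct mediator $\home{\cdot}{\cdot}_1$ of Definition~\ref{def:if} (for $\s{Match}$), the projections $\pi$ (for the sequence rules), and the monoidal mediators $n$, $m$ and the multiplication $\mu$ in the unitary variants. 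The $\s{Lam}$ case uses that substitution commutes with abstraction on a fresh bound variable together with the naturality of $\eta^{A}$ and of the internal hom $\home{A}{-}$.

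The main obstacle will be the context bookkeeping in the binary rules: routing the single arrow $\Id\boxtimes\sem v$ to whichever tensor factor actually carries $x$ forces the insertion of the associator $\alpha_\boxtimes$ and the symmetry $\s{swap}$, and one must check that these coherence maps commute past the construct's own structural maps; likewise, in the $\s{Contr}$ and $\s{Weak}$ cases on flat types, one must verify naturality of the diagonal $\delta$ and of the projections, so that duplicating or discarding $x$ before versus after substitution yields the same map. These are all routine naturality and coherence diagrams, but placing every associativity and symmetry witness correctly is the delicate part; once they are in place, each case closes by a direct diagram chase.
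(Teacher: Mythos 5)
Your proposal matches the paper's proof: the paper likewise proceeds by induction on the typing derivation, splits each binary rule into the two subcases according to which premise carries $x$, and closes every case by a diagram chase combining the induction hypothesis with functoriality of $\boxtimes$ and naturality of $\eta$, $\varepsilon$, $\s{swap}$, the projections, and the monoidal mediators $n$, $m$, $\mu$. The only cosmetic difference is organizational: the paper treats the structural rules ($\leq$, $\equiv$, $\s{Weak}$, $\s{Contr}$) as ordinary inductive branches and disposes of $\s{Void}$ and $\s{Sup}$ as vacuous (their contexts are empty, so the hypothesis cannot hold), rather than first normalizing the derivation via Lemma~\ref{lem:eqDer} as you suggest.
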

\begin{proof}
  By induction on the derivation of $\Gamma,x:A\vdash\vec t:B$.
  The details can be found in~\ref{app:soundness}
  \qedhere
\end{proof}

\begin{thm}
  [Soundness]\label{thm:soundness}
  If $\Gamma\vdash t:A$, and $t\lra r$,
  then
  $\sem{\Gamma\vdash t:A} = \sem{\Gamma\vdash r:A}$.
\end{thm}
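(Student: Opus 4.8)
The plan is to argue by induction on the derivation of the reduction $t\lra r$, following the rules of Table~\ref{tab:AtomicEval}. Two earlier results make the statement well-posed and the induction tractable. First, by Subject Reduction (Theorem~\ref{thm:SR}) the reduct is again typable at $A$, so $\sem{\Gamma\vdash r:A}$ exists; and by Independence of derivation (Lemma~\ref{lem:eqDer}) neither $\sem{\Gamma\vdash t:A}$ nor $\sem{\Gamma\vdash r:A}$ depends on the chosen derivation, so on each side I am free to select whichever derivation tree is most convenient. This matters because a redex such as $\Void;\vec{s}$, $\LetP{x}{y}{\Pair{v}{w}}{\vec{s}}$ or $\Match{\Inl{v}}{x_1}{\vec{s}_1}{x_2}{\vec{s}_2}$ may be typed by a \emph{pure} rule or, after a use of $\leq$, by the corresponding \emph{unitary} rule, so both forms must be treated. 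Note also that the rule $\equiv$ has identity semantics, so working modulo the congruence of Table~\ref{tab:cong-term} is harmless.

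The base cases are the five contraction rules, and the workhorse is the Substitution Lemma (Lemma~\ref{lem:substitution}), which identifies $\sem{\vec{t}[x:=v]}$ with precomposition by $\Id\boxtimes\sem{v}$. For $\beta$-reduction $(\Lam{x}{\vec{t}})\,v\lra\vec{t}[x:=v]$, I would unfold the $\s{App}$ composite $\varepsilon'\circ(\sem{\Lam{x}{\vec{t}}}\boxtimes\sem{v})$ with $\sem{\Lam{x}{\vec{t}}}=\home{A}{\vec{t}}\circ\eta^A$ plugged in, and collapse it using naturality of the counit together with the triangle identity of the tensor–hom adjunction $\_\boxtimes A\dashv\home{A}{\_}$, obtaining $\sem{\vec{t}}\circ(\Id\boxtimes\sem{v})$; Lemma~\ref{lem:substitution} then closes the case. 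The let-rule is identical with a pair and uses the Substitution Lemma twice (its evident two-variable version). For the two match-rules I would unfold $\sem{\s{PureMatch}}$ and use that $\home{\vec{v}_1}{\vec{v}_2}_1$ sends $(\Inl{a},d)$ to $\sem{\vec{v}_1}(a,d)$ and $(\Inr{b},d)$ to $\sem{\vec{v}_2}(b,d)$ by Definition~\ref{def:if}, again invoking Lemma~\ref{lem:substitution}. For $\Void;\vec{s}\lra\vec{s}$, the $\s{PureSeq}$ semantics is $\pi_A\circ(\sem{\Void}\boxtimes\sem{\vec{s}})$, which collapses to $\sem{\vec{s}}$ because $\pi_A$ inverts the left unitor $\lambda_\boxtimes$ on $1=\{\Void\}$ (and $\U^\flat$ guarantees $\pi_A$ is available).

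The contextual rules of Table~\ref{tab:AtomicEval} are handled uniformly. In each, the interpretation of the surrounding one-hole context is a composite in which $\sem{t}$ occurs as a single factor: $\sem{s\,\vec{r}}=\varepsilon'\circ(\sem{s}\boxtimes\sem{\vec{r}})$, while $\sem{t;\vec{s}}$, $\sem{\LetP{x}{y}{t}{\vec{s}}}$ and $\sem{\Match{t}{x_1}{\vec{s}_1}{x_2}{\vec{s}_2}}$ all factor through $\sem{t}$. The induction hypothesis equates that factor before and after reduction, and functoriality of $\boxtimes$, $\Span{}$, $\Forg{}$ together with naturality of the structural transformations propagate the equality to the whole composite. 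The additive-context rule $\alpha\cdot t+\vec{s}\lra\alpha\cdot\vec{r}+\vec{s}$ is the only one touching the algebraic structure: here I would use that the interpretation factors through the linear map $\Span{}$, which by definition acts coefficientwise, so that replacing $\sem{t}$ by the equal $\sem{\vec{r}}$ inside the linear combination leaves the result unchanged.

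I expect the genuine difficulty to lie not in the pure base cases but in their \emph{unitary} incarnations, reached when the redex is retyped through $\leq$ (e.g. a pair promoted from $A\times B$ to $A\otimes B=\sharp(A\times B)$, triggering $\s{UnitaryLet}$, or an injection promoted to $A\oplus B$, triggering $\s{UnitaryMatch}$). The semantics of $\s{UnitaryLet}$, $\s{UnitaryMatch}$ and $\s{UnitarySeq}$ are long composites threading the monad $\Forg{\Span{}}$ through the mediating arrows $m$ and $n$, the unit $\eta$ and the multiplication $\mu$. Showing that these collapse onto the substituted term requires combining the adjunction triangle identities with the monad laws and the coherence and naturality of $m$ and $n$; tracking where each $\eta$, $\mu$, $m$, $n$ cancels against its partner is the main bookkeeping obstacle, and I would organize it as a sequence of commuting subdiagrams, exactly in the style of the proof of Lemma~\ref{lem:eqDer}.
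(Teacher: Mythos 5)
Your treatment of the redex cases and of the contextual closure matches the paper's proof: the five contraction rules are discharged exactly as you describe, with Lemma~\ref{lem:substitution} as the workhorse, the adjunction (triangle) axiom and naturality of $\varepsilon'$ for $\beta$-reduction, Definition~\ref{def:if} for the two match rules, naturality of $\pi$ for $\Void;\vec s\lra\vec s$, and functoriality/naturality for the one-hole contexts.

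The genuine gap is your one-sentence dismissal of the congruence: ``the rule $\equiv$ has identity semantics, so working modulo the congruence of Table~\ref{tab:cong-term} is harmless.'' This gets the logic backwards. Precisely because the typing rule $\equiv$ assigns to $\Gamma\vdash\vec r:A$ the interpretation of its premise $\Gamma\vdash\vec t:A$, soundness of a rewrite relation that acts on $\equiv$-classes requires proving that when a congruent reshaping of a term admits its \emph{own} direct typing derivation, that derivation denotes the same map; otherwise $\sem{\cdot}$ is not well defined on the classes over which $\lra$ is defined. This obligation occupies roughly half of the paper's proof: for instance $\vdash 1\cdot\vec v:\sharp A$ (typed by $\s{Sup}$, denoting $1\xlra{\eta}\Forg{\Span 1}\xlra{\Forg{(1\cdot\Span{\vec v})}}\Forg{\Span A}$) versus $\vdash\vec v:\sharp A$ (typed by $\leq$, denoting $1\xlra{\vec v}A\xlra{\eta}\Forg{\Span A}$) requires a naturality-of-$\eta$ diagram; $\alpha\cdot(\beta\cdot\vec v)\equiv\alpha\beta\cdot\vec v$ needs its own diagram; and the two distributivity rules need the observation that only one side of each is ever typable, so no comparison arises. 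Nor can you push this onto Lemma~\ref{lem:eqDer}: as proved in the paper, that lemma only reconciles the ordering of structural rules and the $\s{PureSeq}$/$\s{UnitarySeq}$ interchange, and never compares a $\s{Sup}$ derivation of one congruence representative against a $\leq$ derivation of another. Relatedly, your anticipated ``main bookkeeping obstacle''---redexes retyped through the $\s{Unitary}$ rules with monad laws and the mediating maps $m$, $n$---is largely a non-issue in the actual proof: since $\s{Sup}$ applies only to closed values, the applications, sequences, lets and matches built by the notation of Table~\ref{tab:notation} each admit essentially one typing order, and only the pair and injection notations require an explicit two-derivation comparison.
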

\begin{proof}
  By induction on the rewrite relation. The details can be found in~\ref{app:soundness}.
\end{proof}

\begin{lem}[Completeness of values on $\mathbb C^2$]\label{lem:completenessvalues}
  If $\sem{\vdash\vec v:\sharp(\U+\U)}=\sem{\vdash\vec w:\sharp(\U+\U)}$, then $\vec v\equiv\vec w$.
\end{lem}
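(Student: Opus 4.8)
The plan is to show that at type $\sharp(\U+\U)$ the interpretation faithfully records the value distribution, so that it is injective up to $\equiv$. First I would unfold the type: since $\sem{\U}=1=\{\Void\}$ we get $\sem{\U+\U}=\{\Inl\Void,\Inr\Void\}$ and hence $\sem{\sharp(\U+\U)}=\Forg{\Span{\{\Inl\Void,\Inr\Void\}}}$, which is precisely the set of value distributions of the shape $\alpha\cdot\Inl\Void+\beta\cdot\Inr\Void$, viewed as elements of a sub-distributive-action space of $\ValD$, i.e.~modulo $\equiv$. Because the domain $1=\{\Void\}$ is a singleton, the morphism $\sem{\vdash\vec v:\sharp(\U+\U)}\colon 1\to\Forg{\Span{\{\Inl\Void,\Inr\Void\}}}$ is entirely determined by the image of $\Void$.

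The crux is the claim that for every closed value $\vdash\vec v:\sharp(\U+\U)$ one has $\sem{\vdash\vec v:\sharp(\U+\U)}(\Void)=\vec v$, with the right-hand side regarded as an element of $\Span{\{\Inl\Void,\Inr\Void\}}$. I would prove this by induction on the typing derivation, invoking Lemma~\ref{lem:eqDer} to work with a convenient derivation. For a closed value the only possible last rules yielding this type are $\s{Sup}$, $\leq$, and $\equiv$: the introduction rules $\s{Lam}$, $\s{Void}$, $\s{Pair}$, $\s{InL}$, $\s{InR}$ produce a value whose type shape (arrow, $\U$, product, or a bare sum) is not $\sharp(\U+\U)$ and so must be followed by $\leq$; the elimination rules ($\s{App}$ and the $\mathsf{Let}$/$\mathsf{Match}$/$\mathsf{Seq}$ families) have non-value subjects; and $\s{Weak}$, $\s{Contr}$ would leave a variable in the context, contradicting closedness. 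In the $\s{Sup}$ case each component $\vec v_j$ is a closed value of type $\U+\U$; a short generation argument shows the only such values are $\Inl\Void$ and $\Inr\Void$ (up to $\equiv$), and orthogonality forces them distinct, so $\vec v$ is a genuine distribution over $\{\Inl\Void,\Inr\Void\}$. Tracing $1\xlra{\eta}\Forg{\Span 1}\xlra{\Forg{\sum_j\alpha_j\cdot\Span{\vec v_j}}}\Forg{\Span{\{\Inl\Void,\Inr\Void\}}}$ then gives $\Void\mapsto 1\cdot\Void\mapsto\sum_j\alpha_j\cdot\vec v_j=\vec v$.

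The remaining cases rest on the observation that every structural map acts as the identity on underlying distributions modulo $\equiv$. For $\equiv$ the interpretation is literally unchanged and the subject is congruent, so the claim is immediate. For $\leq$ the premise is $\vdash\vec v:A$ with $A\leq\sharp(\U+\U)$, which forces $A$ to lie in the family $\sharp^k(\U+\U)$; I would therefore generalise the induction hypothesis to this whole family. The two basic subtyping maps involved are the unit $\eta$ (for $A\leq\sharp A$), which is the inclusion $\vec a\mapsto\vec a$, and the monad multiplication $\mu=\Forg{\varepsilon_{\Span{}}}$ (for $\sharp\sharp A\leq\sharp A$), which is exactly the flattening $\sum_i\alpha_i\cdot(\sum_j\beta_{ij}\cdot\vec v_{ij})\mapsto\sum_{ij}\alpha_i\beta_{ij}\cdot\vec v_{ij}$ induced by the distributivity rules of Table~\ref{tab:cong-term}; both are the identity on distributions up to $\equiv$, so the induction hypothesis is preserved. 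I expect this $\leq$ case to be the main bookkeeping obstacle: one must verify that $\mu$ really is the congruence-flattening and set up the induction over $\sharp^k(\U+\U)$ carefully, since at levels $k\geq 2$ the interpretation lives in nested spans and only matches $\vec v$ after flattening.

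With the claim established, the conclusion is immediate: from $\sem{\vdash\vec v:\sharp(\U+\U)}=\sem{\vdash\vec w:\sharp(\U+\U)}$, evaluating both morphisms at $\Void$ yields $\vec v=\vec w$ as elements of $\Span{\{\Inl\Void,\Inr\Void\}}$, which by definition of that sub-distributive-action space of $\ValD$ is exactly $\vec v\equiv\vec w$.
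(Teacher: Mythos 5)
Your proposal is correct, and its computational core coincides with the paper's: since the domain $1=\{\Void\}$ is a singleton, the interpretation morphism is determined by the image of $\Void$, and that image is (the $\equiv$-class of) the value distribution itself. The organization, however, is genuinely different. The paper does no induction on typing derivations: it enumerates the possible shapes of closed values of type $\sharp(\U+\U)$ --- namely $\alpha\cdot\Inl\Void+\beta\cdot\Inr\Void$, $\Inl\Void$, and $\Inr\Void$ --- computes the interpretation for each shape, and compares them pairwise, including the cross-shape case showing that $\Inl\Void$ and $1\cdot\Inl\Void+0\cdot\Inr\Void$ receive different interpretations (this is exactly where the distributive-action-space structure, with no cancellation of $0\cdot$ summands, does the work; you use the same fact when you read equality in $\Span{\{\Inl\Void,\Inr\Void\}}$ as $\equiv$). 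You instead prove the uniform claim $\sem{\vdash\vec v:\sharp(\U+\U)}(\Void)=\vec v$ by induction on the derivation, generalized over the subtyping chain $\sharp^k(\U+\U)$, and obtain injectivity for free. Your route buys three things: it avoids the pairwise comparison; it makes explicit the appeal to Lemma~\ref{lem:eqDer} that the paper leaves implicit when it picks one derivation per shape; and it is in fact slightly more thorough, since rule $\s{Sup}$ with $m=1$ also types values such as $i\cdot\Inl\Void$ (norm $1$, a single summand), which are not $\equiv$-equivalent to any of the paper's three listed shapes but are covered automatically by your induction. The paper's route buys brevity: for this two-dimensional type the explicit enumeration is short, and one never has to verify that $\eta$ is an inclusion and $\mu$ the $\equiv$-flattening on nested spans, which is the bookkeeping burden you correctly flag in your $\leq$ case.
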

\begin{proof}
  Since $\vec v$ and $\vec w$ are values of type $\sharp(\U+\U)$, they have the
  following shape:
  $\vec v$ is $\equiv$-equivalent to either
  \begin{enumerate}
    \item $\alpha\cdot\Inl\Void+\beta\cdot\Inr\Void$,
    \item $\Inl\Void$, or
    \item $\Inr\Void$,
  \end{enumerate}
  and
  $\vec w$ is $\equiv$-equivalent to either
  \begin{enumerate}
    \item $\gamma\cdot\Inl\Void+\delta\cdot\Inr\Void$,
    \item $\Inl\Void$, or
    \item $\Inr\Void$.
  \end{enumerate}
  Indeed, we have consider three cases since $1\cdot\Inl\Void+0\cdot\Inr\Void\not\equiv\Inl\Void$ because $\ValD$ is
  a distributive-action space and not a vector space.

  We analyse the different cases:
  \begin{itemize}
  \item If $\vec v$ and $\vec w$ are both in case $1$, i.e.~$\vec v\equiv
    \alpha\cdot\Inl\Void+\beta\cdot\Inr\Void$ and $\vec w\equiv\gamma\cdot\Inl\Void+\delta\cdot\Inr\Void$, we have
    \begin{align*}
      \sem{\vdash\vec v:\sharp(\U+\U)}
      &=1\xlra{\eta}US1\xlra{U(\alpha\cdot Si_1+\beta\cdot Si_2)}\sharp(\U+\U)\\
      \sem{\vdash\vec w:\sharp(\U+\U)}
      &=1\xlra{\eta}US1\xlra{U(\gamma\cdot Si_1+\delta\cdot Si_2)}\sharp(\U+\U)
    \end{align*}
    So, the maps $\Void\mapsto\alpha\cdot\Inl\Void+\beta\cdot\Inr\Void$ and
    $\Void\mapsto\gamma\cdot\Inl\Void+\delta\cdot\Inr\Void$ are the same, and so,
    since $\Inl\Void\perp\Inr\Void$, we have
    $\alpha=\gamma$ and $\beta=\delta$, thus, $\vec v\equiv\vec w$.
  \item If $\vec v$ and $\vec w$ are both in case $2$ or both in case $3$, then
    $\vec v\equiv\vec w$.
  \item It is easy to see that $\vec v$ and $\vec w$ cannot be in different cases, since
    $\sem{\vdash\vec v:\sharp(\U+\U)}$ must be equal to $\sem{\vdash\vec w:\sharp(\U+\U)}$ and this is not the case when they are in different
    cases. For example, let $\vec v=\Inl\Void$ and $\vec w=1\cdot\Inl\Void+0\cdot\Inr\Void$. In this case we have
    \begin{align*}
      \sem{\vdash\vec v:\sharp(\U+\U)}
      &=1\xlra{i_1}1+1\xlra{\eta}\sharp(\U+\U)\\
      \sem{\vdash\vec w:\sharp(\U+\U)}
      &=1\xlra{\eta}US1\xlra{U(1\cdot Si_1+0\cdot Si_2)}\sharp(\U+\U)
    \end{align*}
    the first arrow with the mapping $\Void\mapsto\Inl\Void$ while the second
    $\Void\mapsto 1\cdot\Inl\Void+0\cdot\Inr\Void$, which are not equal in
    a distributive-action space
    \qedhere
  \end{itemize}
\end{proof}

\begin{defi}
  [Equivalence on terms]
  We write $\vec t\sim\vec r$ whenever $\vec t\lra^*\vec v$ and $\vec r\lra^*\vec w$, with $\vec v\equiv\vec w$.
\end{defi}

\begin{thm}[Completeness on $\mathbb C^2$]\label{thm:completeness}
  If $\sem{\vdash\vec t:\sharp(\U+\U)}=\sem{\vdash\vec r:\sharp(\U+\U)}$, then $\vec t\sim\vec r$.
\end{thm}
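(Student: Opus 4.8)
The plan is to reduce the statement about arbitrary term distributions to the already-established completeness on values (Lemma~\ref{lem:completenessvalues}), by first normalising both $\vec t$ and $\vec r$ and then transporting the semantic equality along these reductions using soundness.

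First I would invoke typability together with strong normalisation: since $\vdash\vec t:\sharp(\U+\U)$, Corollary~\ref{cor:SN} yields a finite reduction $\vec t\lra^*\vec v$ to some normal form $\vec v$. By Progress (Theorem~\ref{thm:progress}), a closed typable term that does not reduce is a value distribution, so $\vec v\in\ValD$, and by Subject Reduction (Theorem~\ref{thm:SR}) this value still has type $\sharp(\U+\U)$, i.e.~$\vdash\vec v:\sharp(\U+\U)$. Running the same argument on $\vec r$ produces a value distribution $\vec w$ with $\vec r\lra^*\vec w$ and $\vdash\vec w:\sharp(\U+\U)$.

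Next I would use Soundness (Theorem~\ref{thm:soundness}) to show that reduction does not change the interpretation. Applying it to each individual step of $\vec t\lra^*\vec v$ (that is, iterating soundness along the reflexive and transitive closure, with types preserved at every step by Theorem~\ref{thm:SR}) gives $\sem{\vdash\vec t:\sharp(\U+\U)}=\sem{\vdash\vec v:\sharp(\U+\U)}$, and symmetrically $\sem{\vdash\vec r:\sharp(\U+\U)}=\sem{\vdash\vec w:\sharp(\U+\U)}$. Combining these two identities with the hypothesis $\sem{\vdash\vec t:\sharp(\U+\U)}=\sem{\vdash\vec r:\sharp(\U+\U)}$ yields $\sem{\vdash\vec v:\sharp(\U+\U)}=\sem{\vdash\vec w:\sharp(\U+\U)}$. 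Since $\vec v$ and $\vec w$ are now \emph{values} of the correct type, Lemma~\ref{lem:completenessvalues} applies directly and gives $\vec v\equiv\vec w$. Finally, unfolding the definition of $\sim$: from $\vec t\lra^*\vec v$, $\vec r\lra^*\vec w$, and $\vec v\equiv\vec w$ we conclude $\vec t\sim\vec r$, as required.

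The genuinely hard work is not in this argument but is discharged upstream in Lemma~\ref{lem:completenessvalues}, whose case analysis on the possible shapes of a $\sharp(\U+\U)$-value (and the crucial observation that $\ValD$ is only a distributive-action space, so $1\cdot\Inl\Void+0\cdot\Inr\Void\not\equiv\Inl\Void$) is what actually forces distinct values to have distinct interpretations. In the present proof the only point requiring mild care is that Theorem~\ref{thm:soundness} is stated for a single reduction step, so the step I expect to spell out—though it is routine—is the observation that soundness extends to $\lra^*$ by induction on the length of the reduction; everything else is bookkeeping matching the value-level result to the operational behaviour via normalisation.
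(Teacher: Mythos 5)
Your proposal is correct and follows essentially the same route as the paper's own proof: normalise both terms via strong normalisation and progress, preserve types by subject reduction, transport the semantic equality along the reductions by (iterated) soundness, and conclude with Lemma~\ref{lem:completenessvalues} and the definition of $\sim$. Your explicit remark that Theorem~\ref{thm:soundness} is a single-step statement and must be extended to $\lra^*$ by induction is a small point the paper leaves implicit, but it does not change the argument.
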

\begin{proof}
  By progress (Theorem~\ref{thm:progress}) and strong normalization
  (Corollary~\ref{cor:SN}), we have that $\vec t\lra^*\vec v$ and $\vec r\lra^*\vec w$. By subject reduction (Theorem~\ref{thm:SR}), we have $\vdash\vec v:\sharp(\U+\U)$ and $\vdash\vec w:\sharp(\U+\U)$. Hence, by soundness
  (Theorem~\ref{thm:soundness}), $\sem{\vdash\vec v:\sharp(\U+\U)}=\sem{\vdash
    t:\sharp(\U+\U)}=\sem{\vdash r:\sharp(\U+\U)}=\sem{\vdash w:\sharp(\U+\U)}$.
  Thus, by the completeness of values on $\mathbb C^2$
  (Lemma~\ref{lem:completenessvalues}), we have $\vec v\equiv\vec w$. So, by
  definition, $\vec t\sim\vec r$.
\end{proof}

\begin{thm}
  [Completeness on qubits]\label{thm:completenessonqubits}
  If $\sem{\vdash\vec t:\mathbb B^{\otimes n}}=\sem{\vdash\vec r:\mathbb
    B^{\otimes n}}$, then $\vec t\sim\vec r$.
\end{thm}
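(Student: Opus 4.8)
The plan is to follow the proof of Theorem~\ref{thm:completeness} verbatim at the top level, and to replace the single-qubit analysis of Lemma~\ref{lem:completenessvalues} by an analysis of closed values of type $\mathbb B^{\otimes n}$. By progress (Theorem~\ref{thm:progress}) and strong normalization (Corollary~\ref{cor:SN}) we get $\vec t\lra^*\vec v$ and $\vec r\lra^*\vec w$ with $\vec v,\vec w$ value distributions; by subject reduction (Theorem~\ref{thm:SR}) both are of type $\mathbb B^{\otimes n}$; and by soundness (Theorem~\ref{thm:soundness}) we have $\sem{\vdash\vec v:\mathbb B^{\otimes n}}=\sem{\vdash\vec t:\mathbb B^{\otimes n}}=\sem{\vdash\vec r:\mathbb B^{\otimes n}}=\sem{\vdash\vec w:\mathbb B^{\otimes n}}$. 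It therefore suffices to prove a completeness-of-values statement generalizing Lemma~\ref{lem:completenessvalues}: if two closed values $\vec v,\vec w$ of type $\mathbb B^{\otimes n}$ have equal interpretations, then $\vec v\equiv\vec w$; the conclusion $\vec t\sim\vec r$ follows by definition of $\sim$.

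The heart of the argument is a generation lemma describing closed values of type $\mathbb B^{\otimes n}=\sharp(\mathbb B^n)$. First I would observe that every closed value of type $\mathbb B^n=\mathbb B\times\cdots\times\mathbb B$ is in fact a \emph{pure} value: since $\mathbb B=\U+\U$ has no proper subtype and no introduction rule produces a superposition at a non-$\sharp$ sum type, a value of type $\mathbb B$ is forced to be $\Inl\Void$ or $\Inr\Void$, and hence, by the shape of $\mathsf{Pair}$, a value of type $\mathbb B^n$ is a nested tuple of $\Inl\Void$'s and $\Inr\Void$'s, i.e.\ a basis state $\hat{\underline k}$ with $k\in\{0,\dots,2^n-1\}$. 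Consequently, a value $\vec v$ of type $\mathbb B^{\otimes n}$ is $\equiv$-equivalent either to a single basis state $\hat{\underline k}$ (introduced through the subtyping $\mathbb B^n\leq\sharp\mathbb B^n$), or to a superposition $\sum_j\alpha_j\cdot\hat{\underline{k_j}}$ obtained by rule $\mathsf{Sup}$, with pairwise distinct indices $k_j$ and $\sum_j|\alpha_j|^2=1$. This is exactly the $n$-qubit generalization of the three cases used for $n=1$.

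With the classification in hand, the faithfulness of the interpretation is a direct computation of the two canonical composites (well defined by Lemma~\ref{lem:eqDer}). In the superposition case the semantics is $1\xlra{\eta}\Forg{\Span 1}\xlra{\Forg{\sum_j\alpha_j\cdot\Span{\hat{\underline{k_j}}}}}\sem{\mathbb B^{\otimes n}}$, which sends $\Void$ to $\sum_j\alpha_j\cdot\hat{\underline{k_j}}$, while in the pure case the semantics is $1\xlra{\hat{\underline k}}\sem{\mathbb B^n}\xlra{\eta}\sem{\mathbb B^{\otimes n}}$, sending $\Void$ to $\hat{\underline k}$. In either case the interpretation records the underlying value as the image of $\Void$; since the elements of $\ValD$ are taken modulo $\equiv$, two such maps are equal precisely when their values are $\equiv$-equal. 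The cross-case comparison is handled exactly as in Lemma~\ref{lem:completenessvalues}: a pure basis state $\hat{\underline k}$ and a one-term superposition $1\cdot\hat{\underline k}$ are \emph{not} $\equiv$-equivalent because $\ValD$ is a distributive-action space rather than a vector space, so their interpretations differ and equal semantics forces $\vec v$ and $\vec w$ to lie in the same case. Hence $\sem{\vdash\vec v:\mathbb B^{\otimes n}}=\sem{\vdash\vec w:\mathbb B^{\otimes n}}$ implies $\vec v\equiv\vec w$, which closes the proof.

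The main obstacle is the generation/classification step: one must carefully invert the typing derivation through the structural rules ($\leq$, $\equiv$, $\mathsf{Weak}$, $\mathsf{Contr}$) and confirm that the only closed inhabitants of $\mathbb B^{\otimes n}$ are basis states and their norm-$1$ superpositions, with no further shapes appearing. Once this is established, the orthogonality---indeed, the pairwise distinctness---of the $2^n$ basis states makes the coefficient bookkeeping immediate, and the rest is a routine generalization of Lemma~\ref{lem:completenessvalues}.
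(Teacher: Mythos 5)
Your proposal is correct and follows essentially the same route as the paper: reduce to closed values via progress, strong normalization, subject reduction, and soundness (exactly as in Theorem~\ref{thm:completeness}), classify the closed values of type $\mathbb B^{\otimes n}$ as basis states $\hat{\underline k}$ and their norm-$1$ superpositions, and conclude by the same faithfulness argument as Lemma~\ref{lem:completenessvalues}, including the observation that a bare basis state and a $1$-term scaled superposition are not $\equiv$-equivalent in a distributive-action space. The only cosmetic difference is that the paper carries this out concretely for $n=2$ and declares the generalization straightforward, whereas you argue directly for arbitrary $n$ and are slightly more explicit about the generation/classification step.
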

\begin{proof}
  We prove it for $n=2$. The generalization is straightforward. Since $\mathbb
  B^{\otimes 2}=\sharp(\mathbb B\times\mathbb B)=\sharp((\U+\U)\times(\U+\U))$,
  the set of closed values with this type is
  \[
    Q_2 = \{\sum_i\alpha_i\cdot(v_i,w_i)\mid v_i,w_i\in\{\Inl\Void,\Inr\Void\}\}
  \]
  Hence, if $\vec v\in Q_2$, $\vec v$ is $\equiv$-equivalent to one of
  \begin{align*}
    &\alpha_0\cdot v_0+\alpha_1\cdot v_1+\alpha_2\cdot v_2+\alpha_3\cdot v_3\\
    &\alpha_0\cdot v_0+\alpha_1\cdot v_1+\alpha_2\cdot v_2\\
    &\alpha_0\cdot v_0+\alpha_1\cdot v_1\\
    &\alpha_0\cdot v_0
  \end{align*}
  with $v_i\in\{(w_1,w_2)\mid w_1,w_2\in\{\Inl\Void,\Inr\Void\}\}$.

  Following the same reasoning from Lemma~\ref{lem:completenessvalues}, we get
  that if $\sem{\vdash\vec v:\mathbb B^{\otimes 2}}=\sem{\vdash\vec w:\mathbb
    B^{\otimes 2}}$, then $\vec v\equiv\vec w$, and following the same reasoning
  from Theorem~\ref{thm:completeness}, we have that if $\sem{\vdash\vec t:\mathbb
    B^{\otimes 2}}=\sem{\vdash\vec r:\mathbb B^{\otimes 2}}$, then $\vec t\sim\vec r$.
\end{proof}

\section{Conclusion}\label{sec:Conclusion}
In this paper we have introduced \OC, a quantum calculus issued from a
realizability semantics~\cite{DiazcaroGuillermoMiquelValironLICS19}, and
presented a categorical construction to interpret it.
\paragraph{Comparison with Lambda-$\mathcal S$} The main difference between
\OC and Lambda-$\mathcal
S$~\cite{DiazcaroDowekTPNC17,DiazcaroDowekRinaldiBIO19} is the fact that \OC
enforces norm $1$ vectors by defining a distributive-action space on values. This gives
us a related but different model than those for Lambda-$\mathcal
S$~\cite{DiazcaroMalherbeLSFA18,DiazcaroMalherbe20,DiazcaroMalherbeACS20}. In
addition, in \OC it is allowed to type a superposition with a product type,
whenever it is a separable state. Indeed, since $(\vec v,\vec w):=\sum_{i=1}^n\sum_{j=1}^m\alpha_i\beta_j\cdot(v_i,w_j)$. We have
\[
  \infer{\Gamma,\Delta\vdash(\vec v,\vec w):A\times B} {\Gamma\vdash\vec v:A &
    \Delta\vdash\vec w:B}
\]
That is,
\[\vcenter{
  \infer
  {\Gamma,\Delta\vdash\sum_{i=1}^n\sum_{j=1}^m\alpha_i\beta_j\cdot(v_i,w_j):A\times B}
{\Gamma\vdash\sum_{i=1}^n v_i:A & \Delta\vdash\sum_{j=1}^m w_j:B}}
\]
So, the type $A\times B$ is telling us that the term is separable, while a
generic term of the form $\sum_{i=1}^n\sum_{j=1}^m\alpha_i\beta_j\cdot(v_i,w_j)$
would have type $\sharp(A\times B)=A\otimes B$. In Lambda-$\mathcal S$ the only
way to type such a term is with a tensor, and hence we need a casting operator
in order to lose separability information when needed to allow reduction without
losing type preservation.

The main property of \OC is the fact that any isometry can be represented in the
calculus (Theorem~\ref{thm:expressiveness}) and any term of type
$\sharp(\U+\U)\rightarrow\sharp(\U+\U)$ represents an isometry
(Theorem~\ref{thm:isometryLICS}).

\paragraph{Comparison with the full calculus}
In its original presentation~\cite{DiazcaroGuillermoMiquelValironLICS19}, any
arbitrary type $A$ is defined by its semantics $\semR{\cdot}$ as a set of
values, and the notation $\vdash t:A$ means that $t$ reduces to a value in
$\semR A$ (notation $t\Vdash A$). For example, let
\(
  \ket +=\tfrac 1{\sqrt 2}\cdot\mathsf{inl}(\Void)+\tfrac 1{\sqrt 2}\cdot\mathsf{inr}(\Void)
\).
We can consider $\semR{A}=\{\ket +\}\subseteq\semR{\sharp(\mathbb U+\mathbb U)}=\semR{\sharp\mathbb B}$
(even if this is not a type we can construct with the given syntax of types).

The realizability semantics is so strong that it even allows defining a set
$\semR{A\Rightarrow B}$ of linear combinations of abstractions. However, not
every linear combination of values in $\semR{A\rightarrow B}$ is valid in the
semantics. Indeed, if $\lambda x.\vec t$ and $\lambda x.\vec s$ are both in
$\semR{A\rightarrow B}$, the linear combination $\alpha\cdot\lambda x.\vec t+\beta\cdot\lambda x.\vec s$ with $|\alpha|^2+|\beta^2|$ will be in
$\semR{A\Rightarrow B}$, if and only if $\alpha\cdot\vec t[v/x]+\beta\cdot\vec s[v/x]$ have norm
$1$ for any $v\in\semR A$. Hence, $\semR{A\Rightarrow B}$ is contained, but not
equal to $\semR{\sharp(A\rightarrow B)}$.

The set $\semR{A\Rightarrow B}$ can be easily constructed by the realizability
semantics (since the typing $\vdash\vec t:C$ is done by reducing the term $\vec t$ and checking that the resulted value is in $\semR C$), but not with static
methods. Therefore, we decided to exclude the type $A\Rightarrow B$ in \OC. 
Remark that even without superposition of abstractions, we do not lose
expressivity, since $\alpha\cdot\lambda x.\vec t+\beta\cdot\lambda x.\vec s$
behaves in the same way that $\lambda x.(\alpha\cdot\vec t+\beta\cdot\vec s)$,
which can be typed in \OC. 


\bibliographystyle{alphaurl}
\bibliography{biblio}

\newcommand{\etalchar}[1]{$^{#1}$}
\begin{thebibliography}{DCGMV19}

\bibitem[AD08]{ArrighiDowekRTA08}
Pablo Arrighi and Gilles Dowek.
\newblock Linear-algebraic $\lambda$-calculus: higher-order, encodings, and
  confluence.
\newblock In Andrei Voronkov, editor, {\em Rewriting Techniques and
  Applications}, pages 17--31, Berlin, Heidelberg, 2008. Springer Berlin
  Heidelberg.
\newblock \href {https://doi.org/10.1007/978-3-540-70590-1_2}
  {\path{doi:10.1007/978-3-540-70590-1_2}}.

\bibitem[AD17]{ArrighiDowekLMCS17}
Pablo Arrighi and Gilles Dowek.
\newblock Lineal: A linear-algebraic $\lambda$-calculus.
\newblock {\em Logical Methods in Computer Science}, 13(1:8):1--33, 2017.
\newblock \href {https://doi.org/10.23638/LMCS-13(1:8)2017}
  {\path{doi:10.23638/LMCS-13(1:8)2017}}.

\bibitem[ADCP{\etalchar{+}}14]{AssafDiazcaroPerdrixTassonValironLMCS14}
Ali Assaf, Alejandro D{\'\i}az-Caro, Simon Perdrix, Christine Tasson, and
  Beno{\^\i}t Valiron.
\newblock Call-by-value, call-by-name and the vectorial behaviour of the
  algebraic $\lambda$-calculus.
\newblock {\em Logical Methods in Computer Science}, 10(4:8), 2014.
\newblock \href {https://doi.org/10.2168/LMCS-10(4:8)2014}
  {\path{doi:10.2168/LMCS-10(4:8)2014}}.

\bibitem[AG05]{AltenkirchGrattageLICS05}
T.~Altenkirch and J.~Grattage.
\newblock A functional quantum programming language.
\newblock In {\em 20th Annual IEEE Symposium on Logic in Computer Science
  (LICS' 05)}, pages 249--258, 2005.
\newblock \href {https://doi.org/10.1109/LICS.2005.1}
  {\path{doi:10.1109/LICS.2005.1}}.

\bibitem[Bru14]{BrunelPhD14}
Aloïs Brunel.
\newblock {\em The monitoring power of forcing transformations}.
\newblock PhD thesis, Université Paris 13, France, 2014.

\bibitem[BVN36]{BirkhoffVonNeumann}
Garrett Birkhoff and John Von~Neumann.
\newblock The logic of quantum mechanics.
\newblock {\em Annals of Mathematics}, 37(4):823--843, 1936.
\newblock \href {https://doi.org/10.2307/1968621} {\path{doi:10.2307/1968621}}.

\bibitem[DCD17]{DiazcaroDowekTPNC17}
Alejandro D{\'\i}az-Caro and Gilles Dowek.
\newblock Typing quantum superpositions and measurement.
\newblock In Carlos Mart{\'\i}n-Vide, Roman Neruda, and Miguel~A.
  Vega-Rodr{\'\i}guez, editors, {\em Theory and Practice of Natural Computing
  (TPNC 2017)}, volume 10687 of {\em Lecture Notes in Computer Science}, pages
  281--293. Springer, Cham, 2017.
\newblock \href {https://doi.org/10.1007/978-3-319-71069-3_22}
  {\path{doi:10.1007/978-3-319-71069-3_22}}.

\bibitem[DCDR19]{DiazcaroDowekRinaldiBIO19}
Alejandro D\'{\i}az-Caro, Gilles Dowek, and {Juan Pablo} Rinaldi.
\newblock Two linearities for quantum computing in the lambda calculus.
\newblock {\em BioSystems}, 186:104012, 2019.
\newblock Postproceedings of TPNC 2017.
\newblock \href {https://doi.org/10.1016/j.biosystems.2019.104012}
  {\path{doi:10.1016/j.biosystems.2019.104012}}.

\bibitem[DCGMV19]{DiazcaroGuillermoMiquelValironLICS19}
Alejandro D\'{\i}az-Caro, Mauricio Guillermo, Alexandre Miquel, and Beno\^{\i}t
  Valiron.
\newblock Realizability in the unitary sphere.
\newblock In {\em Proceedings of the 34th Annual ACM/IEEE Symposium on Logic in
  Computer Science (LICS 2019)}, pages 1--13, 2019.
\newblock \href {https://doi.org/10.1109/LICS.2019.8785834}
  {\path{doi:10.1109/LICS.2019.8785834}}.

\bibitem[DCM19]{DiazcaroMalherbeLSFA18}
Alejandro D\'{\i}az-Caro and Octavio Malherbe.
\newblock A concrete categorical semantics for {L}ambda-{S}.
\newblock In Beniamino Accattoli and Carlos Olarte, editors, {\em Proceedings
  of the 13th Workshop on Logical and Semantic Frameworks with Applications
  (LSFA'18)}, volume 344 of {\em Electronic Notes in Theoretical Computer
  Science}, pages 83--100. Elsevier, 2019.
\newblock \href {https://doi.org/10.1016/j.entcs.2019.07.006}
  {\path{doi:10.1016/j.entcs.2019.07.006}}.

\bibitem[DCM20a]{DiazcaroMalherbeACS20}
Alejandro D\'{\i}az-Caro and Octavio Malherbe.
\newblock A categorical construction for the computational definition of vector
  spaces.
\newblock {\em Applied Categorical Structures}, 28(5):807--844, 2020.
\newblock \href {https://doi.org/10.1007/s10485-020-09598-7}
  {\path{doi:10.1007/s10485-020-09598-7}}.

\bibitem[DCM20b]{DiazcaroMalherbe20}
Alejandro D\'{\i}az-Caro and Octavio Malherbe.
\newblock A concrete model for a typed linear algebraic lambda calculus.
\newblock Draft at {\tt arXiv:1806.09236}, 2020.

\bibitem[GLR{\etalchar{+}}13]{GreenEtAlPLDI13}
Alexander~S. Green, Peter~LeFanu Lumsdaine, Neil~J. Ross, Peter Selinger, and
  Beno\^it Valiron.
\newblock Quipper: a scalable quantum programming language.
\newblock {\em ACM SIGPLAN Notices (PLDI'13)}, 48(6):333--342, 2013.
\newblock \href {https://doi.org/10.1145/2499370.2462177}
  {\path{doi:10.1145/2499370.2462177}}.

\bibitem[Kle45]{KleeneJSL45}
Stephen~C. Kleene.
\newblock On the interpretation of intuitionistic number theory.
\newblock {\em The Journal of Symbolic Logic}, 10(4):109--124, 1945.
\newblock \href {https://doi.org/10.2307/2269016} {\path{doi:10.2307/2269016}}.

\bibitem[Kni96]{Knill04}
Emanuel~H. Knill.
\newblock Conventions for quantum pseudocode.
\newblock Technical Report LA-UR-96-2724, Los Alamos National Lab., 1996.

\bibitem[Kri09]{KrivinePS09}
Jean-Louis Krivine.
\newblock Realizability in classical logic.
\newblock {\em Panoramas et synth\`eses: Interactive models of computation and
  program behaviour}, 27:197--229, 2009.

\bibitem[LS86]{LambekScott}
J~Lambek and P.~J. Scott.
\newblock {\em Introduction to higher order categorical logic}.
\newblock Cambridge studies in advances mathematics 7. Cambridge University
  Press, 1986.

\bibitem[Mel03]{MelliesHAL03}
Paul-André Melliès.
\newblock Categorical models of linear logic revisited.
\newblock hal:00154229, 2003.

\bibitem[Miq11]{MiquelTLCA11}
Alexandre Miquel.
\newblock A survey of classical realizability.
\newblock In Luke Ong, editor, {\em Proceedings of {TLCA}-2011}, volume 6690 of
  {\em Lecture Notes in Computer Science}, pages 1--2, 2011.
\newblock \href {https://doi.org/10.1007/978-3-642-21691-6_1}
  {\path{doi:10.1007/978-3-642-21691-6_1}}.

\bibitem[NC10]{NC00}
Michael Nielsen and Isaac Chuang.
\newblock {\em Quantum computation and quantum information}.
\newblock Cambridge University Press, 2010.

\bibitem[PRZ17]{PaykinRandZdancewicPOPL17}
Jennifer Paykin, Robert Rand, and Steve Zdancewic.
\newblock Qwire: A core language for quantum circuits.
\newblock In {\em Proceedings of the 44th ACM SIGPLAN Symposium on Principles
  of Programming Languages}, POPL 2017, pages 846--858, New York, NY, USA,
  2017. ACM.
\newblock \href {https://doi.org/10.1145/3009837.3009894}
  {\path{doi:10.1145/3009837.3009894}}.

\bibitem[Sel04]{SelingerMSCS04}
Peter Selinger.
\newblock Towards a quantum programming language.
\newblock {\em Mathematical Structures in Computer Science}, 14(4):527–586,
  2004.
\newblock \href {https://doi.org/10.1017/S0960129504004256}
  {\path{doi:10.1017/S0960129504004256}}.

\bibitem[SV06]{SelingerValironMSCS06}
Peter Selinger and Beno{\^\i}t Valiron.
\newblock A lambda calculus for quantum computation with classical control.
\newblock {\em Mathematical Structures in Computer Science}, 16(3):527–552,
  2006.
\newblock \href {https://doi.org/10.1017/S0960129506005238}
  {\path{doi:10.1017/S0960129506005238}}.

\bibitem[{van}08]{Vonoosten08}
Jaap {van Oosten}.
\newblock {\em Realizability. An Introduction to its Categorical Side}, volume
  152 of {\em Studies in Logic and the Foundations of Mathematics}.
\newblock Elsevier, 2008.

\end{thebibliography}

\appendix
\section{Soundness}\label{app:soundness}
\noindent{\bf Lemma~\ref{lem:substitution}} (Substitution){\textbf{.}}{\itshape If
  $\Gamma,x:A\vdash\vec t:B$ and $\Delta\vdash v:A$, then the following diagram
  commutes:
  \begin{center}
    \begin{tikzcd}
      \Gamma\boxtimes\Delta\ar[rr,"\vec t{[}x:=v{]}" description]\ar[dr,"\Id\boxtimes
      v" description,sloped,swap] & & B\\
      &\Gamma\boxtimes A\ar[ur,"\vec t" description,sloped,swap] &
    \end{tikzcd}
  \end{center}
  That is, $\sem{\Gamma,\Delta\vdash\vec t[x:=v]:B}=\sem{\Gamma,x:A,\Gamma\vdash\vec t:B}\circ(\Id\boxtimes\sem{\Delta\vdash v:A})$.\/}
\begin{proof}
  By induction on the derivation of $\Gamma,x:A\vdash\vec t:B$.
  \begin{itemize}
  \item ${\vcenter{\infer[^{\mathsf{Ax}}]{x:A\vdash x:A}{}}}$
    \begin{center}
      \begin{tikzcd}
        1\boxtimes\Delta\cong\Delta\ar[rr,"v" description]\ar[dr,"\Id\boxtimes v" description,sloped,swap] & & A\\
        & 1\boxtimes A\cong A\ar[ur,"\Id" description,sloped,swap] &
      \end{tikzcd}
    \end{center}

  \item ${\vcenter{\infer[^\leq]{\Gamma,x:A\vdash\vec t:C}{\Gamma,x:A\vdash\vec t:B & B\leq C}}}$
    \begin{center}
      \begin{tikzcd}[column sep=3cm,row sep=1cm,execute at end picture={
          \path (\tikzcdmatrixname-1-1) -- (\tikzcdmatrixname-1-3)
          coordinate[pos=0.5](aux) (aux) -- (\tikzcdmatrixname-2-2)
          node[midway,blue]{\small (Def)}; \path (\tikzcdmatrixname-1-1) --
          (\tikzcdmatrixname-3-2) coordinate[pos=0.5](aux) (aux) --
          (\tikzcdmatrixname-2-2) node[midway,blue]{\small (IH)}; \path
          (\tikzcdmatrixname-3-2) -- (\tikzcdmatrixname-1-3)
          coordinate[pos=0.5](aux) (aux) -- (\tikzcdmatrixname-2-2)
          node[midway,blue]{\small (Def)}; }]
        \Gamma\boxtimes\Delta \arrow[rdd,sloped,"\Id\boxtimes v" description,swap] \arrow[rr,"{\vec{t}[x:=v]^C}" description] \arrow[rd, "{\vec t[x:=v]^B}" description,sloped,dashed,red] & & C \\
        &  \color{red}B \arrow[ru, "{\sem{B\leq C}}" description,sloped,dashed,red] & \\
        & \Gamma\boxtimes A \arrow[ruu, "\vec{t}^C" description,swap,sloped]
        \arrow[u,"\vec{t}^B" description,dashed,red] &
      \end{tikzcd}
    \end{center}

  \item $\vcenter{\infer[^{\equiv}]{\Gamma\vdash\vec r:A}{\Gamma\vdash\vec t:A &
        \vec t\equiv\vec r}}$

    \begin{center}
      \begin{tikzcd} [labels=description,execute at end picture={
          \path (\tikzcdmatrixname-1-1) -- (\tikzcdmatrixname-1-3) coordinate[pos=0.5](aux)
          (\tikzcdmatrixname-2-2) -- (aux) node[midway,blue]{\small (IH)}
          (\tikzcdmatrixname-2-2) -- (aux) node[pos=1.4,blue]{\small (Def)}
          (\tikzcdmatrixname-2-2) -- (\tikzcdmatrixname-1-3) node[pos=0.52,yshift=-3mm,sloped,blue]{\small (Def)}
          ;
        }]
        \Gamma\boxtimes\Delta\ar[rr,"\vec t{[}x:=v{]}",dashed,red]\ar[rr,"\vec r{[x:=v]}",bend left]\ar[dr,"\Id\boxtimes v",sloped] & & A\\
        &\Gamma\boxtimes A\ar[ur,"\vec t",sloped,dashed,red,bend left=10] \ar[ur,"\vec r",sloped,out=0,in=-90] &
      \end{tikzcd}
    \end{center}

  \item \needspace{4em}
    ${\vcenter{\infer[^{\mathsf{Lam}}] {\Gamma,x:A\vdash\lambda y.\vec t:B\rightarrow C} {\Gamma,x:A,y:B\vdash\vec t:C}}}$
    \begin{center}
      \begin{tikzcd}[labels=description,column sep=3cm,row sep=1.1cm,execute at
        end picture={
          \path (\tikzcdmatrixname-1-1) -- (\tikzcdmatrixname-3-2)
          node[midway,blue,sloped]{\small (Naturality of $\eta^B$)};
          \path
          (\tikzcdmatrixname-2-2) -- (\tikzcdmatrixname-1-3)
          coordinate[pos=0.5](aux) (\tikzcdmatrixname-1-2) -- (aux)
          node[midway,blue]{\tiny (IH \& functoriality)}; \path (\tikzcdmatrixname-3-2) --
          (\tikzcdmatrixname-1-3) node[midway,blue,sloped]{\small (Def)};
          \path (\tikzcdmatrixname-1-1) -- (\tikzcdmatrixname-1-3) node[midway,yshift=8mm,blue]{\small (Def)}; }]
        \Gamma\boxtimes\Delta \arrow[rdd, "\Id\boxtimes v"',bend right,sloped] \arrow[rr, "{\lambda y.\vec t[x:=v]}", bend left] \arrow[r, "\eta^B", dashed,red] & {\color{red}[B,\Gamma\boxtimes\Delta\boxtimes B]} \arrow[r, "{[B,\vec t[x:=v]]}", dashed,red] \arrow[d, "{[B,\Id\boxtimes v\boxtimes \Id]}"', dashed,red] & {[B,C]} \\
        & {\color{red}[B,\Gamma\boxtimes A\boxtimes B]} \arrow[ru, "{[B,\vec t]}"', dashed,red,sloped] &         \\
        & \Gamma\boxtimes A \arrow[u, "\eta^B", dashed,red] \arrow[ruu, "\lambda
        y.\vec t"',bend right,sloped] &
      \end{tikzcd}
    \end{center}

  \item $\vcenter { \infer[^{\mathsf{App}}]
      {\Gamma,\Xi,x:A\vdash t\vec s:C}
      {
        \Gamma\vdash t:B\rightarrow C
        &
        \Xi,x:A\vdash\vec s:B
      }
    }$
    \begin{center}
      \begin{tikzcd}[labels=description,column sep=2cm,row sep=1.2cm,
        execute at end picture={
          \path (\tikzcdmatrixname-1-1) -- (\tikzcdmatrixname-1-3) coordinate[pos=0.5](aux1)
          (aux1) -- (\tikzcdmatrixname-2-2) node[midway,blue]{\small (Def)};
          \path (\tikzcdmatrixname-1-1) -- (\tikzcdmatrixname-3-2) node[sloped,midway,blue]{\small (HI)};
          \path (\tikzcdmatrixname-3-2) -- (\tikzcdmatrixname-1-3) node[sloped,midway,blue]{\small (Def)};
           }]
        \Gamma\boxtimes\Xi\boxtimes\Delta \arrow[rdd, "\Id\boxtimes v" , bend right, sloped] \arrow[rd, "{t\boxtimes\vec s[x:=v]}" , sloped, dashed,red] \arrow[rr, "{t\vec s[x:=v]}" , sloped] && C \\
        & {\color{red}[B,C]\boxtimes B} \arrow[ru, "\varepsilon'" , sloped, dashed,red]                                                     &   \\
        & \Gamma\boxtimes\Xi\boxtimes A \arrow[u, "t\boxtimes\vec s" ,
        dashed,red] \arrow[ruu, "t\vec s" , sloped,bend right] &
      \end{tikzcd}
    \end{center}

    The case where $x\in\FV(t)$ is analogous.

  \item ${\vcenter{\infer[^{\mathsf{Void}}]{\vdash\Void:\mathbb U}{}}}$

    This case does not follow the hypothesis of the lemma.

  \item ${ \vcenter{ \infer[^{\mathsf{PureSeq}}] {\Gamma,\Xi,x:A\vdash\vec t;\vec s:B} { \Gamma\vdash\vec t:\mathbb U & \Xi,x:A\vdash\vec s:B}}}$
    \begin{center}
      \begin{tikzcd}[labels=description,row sep=1.2cm,column sep=1.2cm ,execute at end picture={
           \path (\tikzcdmatrixname-1-1) -- (\tikzcdmatrixname-1-3) coordinate[pos=0.5](aux1)
           (aux1) -- (\tikzcdmatrixname-2-2) node[midway,blue]{\small (Def)};
           \path (\tikzcdmatrixname-3-2) -- (\tikzcdmatrixname-1-3) node[pos=0.3,blue]{\small (Def)};
           \path (\tikzcdmatrixname-1-1) -- (\tikzcdmatrixname-3-2) node[midway,blue]{\small (IH)};
         }
        ]
        \Gamma\boxtimes\Xi\boxtimes\Delta \arrow[dr, "{\vec t\boxtimes\vec s[x:=v]}", dashed,red,sloped] \arrow[ddr, "\Id\boxtimes v"',sloped,bend right] \arrow[rr, "{\vec t;\vec s[x:=v]}"]  & & B \\
        & {\color{red}1\boxtimes B} \arrow[ru, "\pi_B",sloped,dashed,red] &   \\
        & \Gamma\boxtimes\Xi\boxtimes A \arrow[ruu, "\vec t;\vec s"',sloped,bend right]\arrow[u, "\vec t\boxtimes\vec s", dashed,red] &
      \end{tikzcd}
    \end{center}

    The case where $x\in\FV(\vec t)$ is analogous.
  \item \needspace{4em}
    ${ \vcenter{ \infer[^{\mathsf{UnitarySeq}}] {\Gamma,\Xi,x:A\vdash\vec t;\vec s:\sharp B} { \Gamma\vdash\vec t:\sharp\mathbb U & \Xi,x:A\vdash\vec s:\sharp B}}}$
    \begin{center}
      \begin{tikzcd}[labels=description,column sep=2cm,row sep=1.2cm
        , execute at end picture={
          \path (\tikzcdmatrixname-1-2) -- (\tikzcdmatrixname-2-2) coordinate[pos=0.5](aux)
          (\tikzcdmatrixname-1-1) -- (aux) node[midway,blue]{\small (Def)};
          \path (\tikzcdmatrixname-3-2) -- (\tikzcdmatrixname-1-3) node[midway,blue]{\small (Def)};
          \path (\tikzcdmatrixname-1-1) -- (\tikzcdmatrixname-3-2) node[midway,blue]{\small (IH)};
      }
      ]
        \Gamma\boxtimes\Xi\boxtimes\Delta \arrow[rdd, "\Id\boxtimes v"', bend right,sloped] \arrow[rr, "{\vec t;\vec s[x:=v]}", bend left=15] \arrow[dr, "{\vec t\boxtimes\vec s[x:=v]}", dashed,red,sloped] & {\color{red}\Forg{(I\otimes\Span B)}}\ar[r,"\Forg{\lambda_\otimes^{-1}}",sloped,dashed,red] & \Forg{\Span{B}} \\
        & {\color{red}(\Forg{\Span{1}})\boxtimes(\Forg{\Span{B}})} \arrow[u, "n", dashed,red] &   \\
        & \Gamma\boxtimes\Xi\boxtimes A \arrow[ruu, "\vec t;\vec s"', bend right,sloped]\arrow[u, "\vec t\boxtimes\vec s", dashed,red] &
      \end{tikzcd}
    \end{center}

    The case where $x\in\FV(\vec t)$ is analogous.
  \item ${\vcenter{\infer[^{\mathsf{Pair}}]{\Gamma,\Xi,x:A\vdash(\vec u,\vec w):B\times C}{\Gamma\vdash\vec u:B & \Xi,x:A\vdash\vec w:C}}}$
    \begin{center}
      \begin{tikzcd}[row sep=3cm,labels=description,execute at end picture={ \path
          (\tikzcdmatrixname-1-1) -- (\tikzcdmatrixname-1-3)
          coordinate[pos=0.5](aux) (aux) -- (\tikzcdmatrixname-2-2)
          node[pos=0.15,blue]{\small (Def)} (aux) -- (\tikzcdmatrixname-2-2)
          node[pos=0.7,blue]{\small (IH)}; }]
        \Gamma\boxtimes\Xi\boxtimes\Delta \arrow[rd, "\Id\boxtimes v"', bend right,sloped]
        \arrow[rr, "{(\vec u,\vec w[x:=v])}"] \arrow[rr, "{\vec
          u\boxtimes\vec w[x:=v]}", dashed, bend right,red] &
        & B\boxtimes C \\
        & \Gamma\boxtimes\Xi\boxtimes A \arrow[ru, "\vec u\boxtimes\vec w"', bend right,sloped] &
      \end{tikzcd}
    \end{center}
    The case where $x\in\FV(\vec u)$ is analogous.

  \item
    ${\vcenter{\infer[^{\mathsf{PureLet}}]
        {\Gamma,x:A,\Xi\vdash\mathsf{let}\ (y,z)=\vec t\ \mathsf{in}\ \vec s:D}
        { \Gamma,x:A\vdash\vec t:B\times C & \Xi,y:B,z:C\vdash\vec s:D } }}$

      \begin{tikzcd}
        [labels=description,ampersand replacement = \&,column sep=15mm,execute
        at end picture={
          \path (\tikzcdmatrixname-1-1) -- (\tikzcdmatrixname-1-3) coordinate[pos=0.5](aux1)
          (\tikzcdmatrixname-2-1) -- (\tikzcdmatrixname-2-3) coordinate[pos=0.5](aux2)
          (aux1) -- (aux2) node[midway,blue]{\small (Def)};
          \path (\tikzcdmatrixname-4-2) -- (\tikzcdmatrixname-1-3) node[pos=0.1,blue]{\small (Def)}; \path
          (\tikzcdmatrixname-2-1) -- (\tikzcdmatrixname-3-3) coordinate[pos=0.5](aux)
          (aux) -- (\tikzcdmatrixname-3-2) node[midway,yshift=-2mm,blue]{\small (IH)}; \path (\tikzcdmatrixname-1-1) --
          (\tikzcdmatrixname-4-2) node[pos=0.6,blue,sloped,yshift=-9mm]{\small (Nat. of
            swap)}; }]
        \Gamma\boxtimes\Xi\boxtimes\Delta \arrow[rddd, "\Id\boxtimes v",sloped,out=200,in=180] \arrow[d, "swap",dashed,red] \arrow[rr, "{\mathsf{let}\ (y,z)=\vec t{[x:=v]}\ \mathsf{in}\ \vec s}"] \& \& D \\
        {\color{red}\Gamma\boxtimes\Delta\boxtimes\Xi} \arrow[rrd, "{\vec t[x:=v]\boxtimes\eta^{B\boxtimes C}}",sloped,dashed,red] \arrow[rd,"\Id\boxtimes v\boxtimes\Id",sloped,swap,dashed,red] \& \& {{\color{red}(B\boxtimes C)\boxtimes[B\boxtimes C,D]}} \arrow[u, "\varepsilon",sloped,dashed,red] \&   \\
        \& {\color{red}\Gamma\boxtimes A\boxtimes\Xi} \arrow[r, "\vec
        t\boxtimes\eta^{B\boxtimes C}",dashed,red,swap] \&
        {\color{red}(B\boxtimes C)\boxtimes[B\boxtimes C,\Xi\boxtimes B\boxtimes
          C]} \arrow[u,"{\Id\boxtimes[B\boxtimes C,\vec s]}",dashed,red] \&   \\
        \& \Gamma\boxtimes\Xi\boxtimes A \arrow[u, "swap",dashed,red]
        \arrow[ruuu,"{\mathsf{let}\ (y,z)=\vec t\ \mathsf{in}\ \vec s}",near start,sloped,out=0,in=330,swap,looseness=2.7] \& \&
      \end{tikzcd}

  \item
    \needspace{4em}
    ${\vcenter{\infer[^{\mathsf{PureLet}}]
        {\Gamma,\Xi,x:A\vdash\mathsf{let}\ (y,z)=\vec t\ \mathsf{in}\ \vec s:D}
        { \Gamma\vdash\vec t:B\times C & \Xi,x:A,y:B,z:C\vdash\vec s:D } }}$

	\hspace{-2.5cm}\begin{tikzcd}[labels=description,column sep=0mm,ampersand replacement = \&,execute at end picture={
        \path
        (\tikzcdmatrixname-2-1) -- (\tikzcdmatrixname-3-1) coordinate[pos=0.5](aux1)
        (\tikzcdmatrixname-1-2) -- (\tikzcdmatrixname-2-2) coordinate[pos=0.5](aux2)
        (aux1) -- (aux2) node[pos=0.3,blue,sloped]{\small (Nat.~of $\eta^{B\boxtimes C}$)}
        (\tikzcdmatrixname-2-2) -- (\tikzcdmatrixname-4-2) coordinate[pos=0.5](aux)
        (aux) -- (\tikzcdmatrixname-3-1) node[midway,blue]{\small (IH)}
        (aux) -- (\tikzcdmatrixname-3-1) node[pos=-0.75,blue]{\small (Def)};
        \path (\tikzcdmatrixname-2-2) -- (\tikzcdmatrixname-5-2) coordinate[pos=0.5](aux2)
        (aux1) -- (aux2) node[midway,blue,sloped,yshift=-1.5cm]{\small (Def)};
        }]
        \& \Gamma\boxtimes\Xi\boxtimes\Delta \arrow[ld,"\Id\boxtimes v",sloped]\arrow[dddd,"{{\mathsf{let}~(y,z)=\vec{t}~\mathsf{in}~\vec{s}{[x:=v]}}}",out=-5,in=15,looseness=2,sloped,swap,pos=0.8]\arrow[d,"\vec t\boxtimes\eta^{B\boxtimes C}",dashed,red,swap] \\
        \Gamma\boxtimes\Xi\boxtimes
        A\arrow[rddd,"{{\mathsf{let}~(y,z)=\vec{t}~\mathsf{in}~\vec{s}}}",swap,sloped,controls={(-10,0)
          and (-9,0)},pos=0.7]\arrow[d,"\vec t\boxtimes\eta^{B\boxtimes
          C}",swap,dashed,red] \& {\color{red}(B\boxtimes C)\boxtimes[B\boxtimes C,\Xi\boxtimes\Delta\boxtimes B\boxtimes
          C]}\arrow[dd, "{\Id\boxtimes[B\boxtimes C,\vec s[x:=v]]}", dashed,red]
        \arrow[ld,"{\Id\boxtimes[B\boxtimes C,\Id\boxtimes v\boxtimes\Id]}"',sloped,dashed,red] \\
        {\color{red}(B\boxtimes C)\boxtimes[B\boxtimes C,\Xi\boxtimes
          A\boxtimes B\boxtimes C]}\arrow[rd,"{\Id\boxtimes[B\boxtimes C,\vec s]}",sloped,dashed,red] \&\\
        \& {\color{red}(B\boxtimes C)\boxtimes[B\boxtimes C,D]} \arrow[d, "\varepsilon", dashed,red] \\
        \& D
      \end{tikzcd}

  \item $\vcenter{ \infer[^{\mathsf{UnitaryLet}}]
      {\Gamma,x:A,\Xi\vdash\mathsf{let}~(y,z)=\vec{t}~\mathsf{in}~\vec{s}:\sharp
        D} { \Gamma,x:A\vdash\vec t:B\otimes C & \Xi,y:\sharp B,z:\sharp
        C\vdash\vec s:\sharp D } }$

    \hspace{-1cm}\scalebox{.83}{
      \begin{tikzcd}[labels=description,row sep=1cm,ampersand replacement = \&,execute at end picture={ \path
          (\tikzcdmatrixname-1-1) -- (\tikzcdmatrixname-2-1)
          coordinate[pos=0.5](aux1) (\tikzcdmatrixname-1-3) --
          (\tikzcdmatrixname-2-3) coordinate[pos=0.5](aux2) (aux1) -- (aux2)
          node[midway,blue]{\small (Naturality of swap)} (\tikzcdmatrixname-2-1)
          -- (\tikzcdmatrixname-2-3) coordinate[pos=0.5](aux3) (aux3) --
          (\tikzcdmatrixname-3-2) node[midway,blue]{\small (IH)}
          (\tikzcdmatrixname-2-1) -- (\tikzcdmatrixname-8-2)
          node[pos=0.4,blue]{\small (Def)} (\tikzcdmatrixname-2-3) --
          (\tikzcdmatrixname-8-2) node[pos=0.4,blue]{\small (Def)} ; }]
        \Gamma\boxtimes\Xi\boxtimes A \arrow[d, "swap", dashed,red] \arrow[rddddddd,"{\mathsf{let}~(y,z)=\vec{t}~\mathsf{in}~\vec{s}}"',out=220,in=150,sloped] \&\& \Gamma\boxtimes\Xi\boxtimes\Delta \arrow[ll, "\Id\boxtimes v"'] \arrow[d, "swap"', dashed,red] \arrow[lddddddd, "{\mathsf{let}~(y,z)=\vec{t}[x:=v]~\mathsf{in}~\vec{s}}",out=-40,in=30,sloped,swap] \\
        {\color{red}\Gamma\boxtimes A\boxtimes\Xi} \arrow[rd, "\vec{t}\boxtimes\eta^{\Forg{\Span{B}}\boxtimes \Forg{\Span{C}}}",sloped,dashed,red] \&\& {\color{red}\Gamma\boxtimes\Delta\boxtimes\Xi} \arrow[ll, "\Id\boxtimes v\boxtimes\Id",swap,dashed,red] \arrow[ld, "{\vec{t}[x:=v]\boxtimes\eta^{\Forg{\Span{B}}\boxtimes \Forg{\Span{C}}}}",sloped,dashed,red] \\
        \& {\color{red}{\Forg{\Span{(B\boxtimes C)}}\boxtimes[\Forg{\Span{B}}\boxtimes
            \Forg{\Span{C}},\Xi\boxtimes \Forg{\Span{B}}\boxtimes
            \Forg{\Span{C}}]}} \arrow[d, "{\Id\boxtimes[\Forg{\Span
            B}\boxtimes\Forg{\Span C},\vec{s}]}", dashed,red] \&\\
        \& {\color{red}\Forg{\Span{(B\boxtimes C)}}\boxtimes[\Forg{\Span{B}}\boxtimes \Forg{\Span{C}},\Forg{\Span{D}}]} \arrow[d, "\Forg{m}\,\circ\, n\,\circ\, (\Id\boxtimes\eta)", dashed,red] \&\\
        \& {\color{red}i\sharp(B\boxtimes C\boxtimes[\Forg{\Span{B}}\boxtimes \Forg{\Span{C}},\Forg{\Span{D}}])} \arrow[d, "{\Forg{\Span{(\eta\boxtimes\eta\boxtimes\Id)}}}", dashed,red] \&\\
        \& {\color{red}\Forg{\Span{(\Forg{\Span{B}}\boxtimes \Forg{\Span{C}}\boxtimes[\Forg{\Span{B}}\boxtimes \Forg{\Span{C}},\Forg{\Span{D}}])}}} \arrow[d, "\Forg{\Span{\varepsilon}}", dashed,red] \&\\
        \& {\color{red}\Forg{\Span{D}} \Forg{\Span{D}}} \arrow[d, "\mu", dashed,red] \& \\
        \& \Forg{\Span{D}} \&
     \end{tikzcd}
   }

 \item
    \needspace{4em}
   $\vcenter{ \infer[^{\mathsf{UnitaryLet}}]
     {\Gamma,\Xi,x:A\vdash\mathsf{let}~(y,z)=\vec{t}~\mathsf{in}~\vec{s}:\sharp
       D} { \Gamma\vdash\vec t:B\otimes C & \Xi,x:A,y:\sharp B,z:\sharp
       C\vdash\vec s:\sharp D } }$

  \hspace{-1cm}\scalebox{.83}{
     \begin{tikzcd}[labels=description,row sep=1cm,column sep=4mm,ampersand replacement =
       \&,execute at end picture={ \path (\tikzcdmatrixname-1-1) --
         (\tikzcdmatrixname-2-1) coordinate[pos=0.5](aux1)
         (\tikzcdmatrixname-1-3) -- (\tikzcdmatrixname-2-3)
         coordinate[pos=0.5](aux2) (aux1) -- (aux2) node[midway,blue]{\small
           (Naturality of $\eta^{\Forg{\Span{B}}\boxtimes \Forg{\Span{C}}}$)}
         (\tikzcdmatrixname-2-1) -- (\tikzcdmatrixname-2-3)
         coordinate[pos=0.5](aux3) (aux3) -- (\tikzcdmatrixname-3-2)
         node[midway,blue]{\small (IH)} (\tikzcdmatrixname-2-1) --
         (\tikzcdmatrixname-7-2)
         node[midway,blue,xshift=-1.2cm,yshift=1.2cm]{\small (Def)}
         (\tikzcdmatrixname-2-3) -- (\tikzcdmatrixname-7-2)
         node[midway,blue,xshift=1.2cm,yshift=1.2cm]{\small (Def)} ; }]
       \Gamma\boxtimes\Xi\boxtimes A \arrow[d, "\vec{t}\boxtimes\eta^{\Forg{\Span{B}}\boxtimes \Forg{\Span{C}}}", dashed,red] \arrow[rdddddd,"{\mathsf{let}~(y,z)=\vec{t}~\mathsf{in}~\vec{s}}"',out=200,in=150,sloped] \&\& \Gamma\boxtimes\Xi\boxtimes\Delta \arrow[ll, "\Id\boxtimes v"'] \arrow[d, "\vec{t}\boxtimes\eta^{\Forg{\Span{B}}\boxtimes \Forg{\Span{C}}}"', dashed,red] \arrow[ldddddd, "{\mathsf{let}~(y,z)=\vec{t}~\mathsf{in}~\vec{s}[x:=v]}",out=-20,in=30,sloped,swap] \\
       \parbox{3cm}{$\color{red}\Forg{\Span{(B\boxtimes C)}}\boxtimes[\Forg{\Span{B}}\boxtimes \Forg{\Span{C}},\Xi\boxtimes A\boxtimes \Forg{\Span{B}}\boxtimes \Forg{\Span{C}}]$} \arrow[rd, "\Id\boxtimes{[\Forg{\Span B}\boxtimes\Forg{\Span C},\vec{s}]}",sloped,dashed,red] \&\& \parbox{3cm}{$\color{red}\Forg{\Span{(B\boxtimes C)}}\boxtimes[\Forg{\Span{B}}\boxtimes \Forg{\Span{C}},\Xi\boxtimes \Delta\boxtimes \Forg{\Span{B}}\boxtimes \Forg{\Span{C}}]$} \arrow[ll, "\Id\boxtimes v\boxtimes\Id",swap,dashed,red] \arrow[ld, "\Id\boxtimes{[\Forg{\Span B}\boxtimes\Forg{\Span C},\vec{s}[x:=v]]}",sloped,dashed,red] \\
       \& {\color{red}\Forg{\Span{(B\boxtimes C)}}\boxtimes[\Forg{\Span{B}}\boxtimes \Forg{\Span{C}},\Forg{\Span{D}}]} \arrow[d, "\Forg{m}\,\circ\, n\,\circ\, (\Id\boxtimes\eta)", dashed,red] \&\\
       \& {\color{red}\Forg{\Span{(B\boxtimes C\boxtimes[\Forg{\Span{B}}\boxtimes \Forg{\Span{C}},\Forg{\Span{D}}])}}} \arrow[d, "{\Forg{\Span{(\eta\boxtimes\eta,\Id)}}}", dashed,red] \&\\
       \& {\color{red}\Forg{\Span{(\Forg{\Span{B}}\boxtimes
             \Forg{\Span{C}}\boxtimes[\Forg{\Span{B}}\boxtimes
             \Forg{\Span{C}},\Forg{\Span{D}}])}}} \arrow[d, "{\Forg{\Span\varepsilon}}", dashed,red] \&\\
       \& {\color{red}\Forg{\Span{D}} \Forg{\Span{D}}} \arrow[d, "\mu", dashed,red] \& \\
       \& \Forg{\Span{D}} \&
     \end{tikzcd}
   }

 \item ${\vcenter{\infer[^{\mathsf{InL}}]{\Gamma,x:A\vdash\mathsf{inl}(\vec v):B+C}{\Gamma,x:A\vdash\vec v:B}}}$
   \begin{center}
     \begin{tikzcd}[labels=description,column sep=3cm,execute at end picture={ \path
         (\tikzcdmatrixname-1-1) -- (\tikzcdmatrixname-1-3)
         coordinate[pos=0.5](aux) (aux) -- (\tikzcdmatrixname-2-2)
         node[midway,blue]{\small (Def)}; \path (\tikzcdmatrixname-3-2) --
         (\tikzcdmatrixname-1-3) coordinate[pos=0.5](aux) (aux) --
         (\tikzcdmatrixname-2-2) node[midway,blue]{\small (Def)}; \path
         (\tikzcdmatrixname-1-1) -- (\tikzcdmatrixname-3-2)
         coordinate[pos=0.5](aux) (aux) -- (\tikzcdmatrixname-2-2)
         node[midway,blue]{\small (IH)}; }]
       \Gamma\boxtimes\Delta \arrow[rdd,sloped,"\Id\boxtimes v"'] \arrow[rr, "{\vec{w}[x:=v]^{B+C}}"] \arrow[rd,sloped,"{\vec{w}[x:=v]}", dashed,red] & & B+C \\
       &{\color{red}B} \arrow[ru,sloped,"i_1", dashed,red]  &     \\
       & \Gamma\boxtimes A \arrow[ruu,sloped,"\vec{w}^{B+C}"'] \arrow[u, "\vec{w}",
       dashed,red] &
     \end{tikzcd}
   \end{center}
 \item ${\vcenter{\infer[^{\mathsf{InR}}]{\Gamma,x:A\vdash\mathsf{inr}(\vec v):B+C}{\Gamma,x:A\vdash\vec v:C}}}$\quad Analogous to previous case.

 \item
    \needspace{4em}
   ${ \vcenter{ \infer[^{\mathsf{PureMatch}}]
       {\Gamma,x:A,\Xi\vdash\mathsf{match}~\vec{t}~\{\mathsf{inl}(x_1)\mapsto\vec{v}_1\mid\mathsf{inr}(x_2)\mapsto\vec{v}_2\}:D}
       { \Gamma,x:A\vdash\vec{t}:B+C & \Xi\vdash(x_1: B\vdash\vec{v}_1\perp
         x_2:C\vdash\vec{v}_2):D } } }$
   \begin{center}
     \begin{tikzcd}[labels=description,column sep=1.5cm,row sep=1cm,execute at end picture={ \path (\tikzcdmatrixname-1-1) --
         (\tikzcdmatrixname-2-1) coordinate[pos=0.5](aux1)
         (\tikzcdmatrixname-1-3) -- (\tikzcdmatrixname-2-3)
         coordinate[pos=0.5](aux2) (aux1) -- (aux2) node[midway,blue]{\small
           (Def)} (\tikzcdmatrixname-2-1) -- (\tikzcdmatrixname-2-3)
         coordinate[pos=0.5](aux3) (aux3) -- (\tikzcdmatrixname-3-2)
         node[midway,blue]{\small (IH)} (aux3) -- (\tikzcdmatrixname-3-2)
         node[midway,blue,yshift=-1cm,xshift=-3cm]{\small (Naturality of swap)}
         (aux3) -- (\tikzcdmatrixname-3-2)
         node[midway,blue,yshift=-1cm,xshift=3cm]{\small (Def)}; }]
       \Gamma\boxtimes\Xi\boxtimes\Delta \arrow[rddd, "\Id\boxtimes v"',sloped,out=200,in=180,looseness=2] \arrow[rr, "{\mathsf{match}~\vec{t}[x:=v]~\{\mathsf{inl}(x_1)\mapsto\vec{v}_1\mid\mathsf{inr}(x_2)\mapsto\vec{v}_2\}}"] \arrow[d, "swap", dashed,red] & & D\\
       \color{red}\Gamma\boxtimes\Delta\boxtimes\Xi \arrow[rd,sloped,swap,"\Id\boxtimes v\boxtimes\Id", dashed,red] \arrow[rr, "{\vec{t}[x:=v]\boxtimes\Id}", dashed,red] & & \color{red}(B+C)\boxtimes\Xi \arrow[u, "{[\vec{v}_1,\vec{v}_2]_1}", dashed,red] \\
       & \color{red}\Gamma\boxtimes A\boxtimes\Xi \arrow[ru,sloped,swap,"\vec{t}\boxtimes\Id", dashed,red] & \\
       & \Gamma\boxtimes\Xi\boxtimes A \arrow[ruuu,near start,"\mathsf{match}~\vec{t}~\{\mathsf{inl}(x_1)\mapsto\vec{v}_1\mid\mathsf{inr}(x_2)\mapsto\vec{v}_2\}"',out=0,in=-20,looseness=2,sloped]
       \arrow[u, "swap", dashed,red] &
     \end{tikzcd}
   \end{center}

 \item\needspace{4em}
 ${ \vcenter{ \infer[^{\mathsf{PureMatch}}]
       {\Gamma,\Xi,x:A\vdash\mathsf{match}~\vec{t}~\{\mathsf{inl}(x_1)\mapsto\vec{v}_1\mid\mathsf{inr}(x_2)\mapsto\vec{v}_2\}:D}
       { \Gamma\vdash\vec{t}:B+C & \Xi,x:A\vdash(x_1: B\vdash\vec{v}_1\perp
         x_2:C\vdash\vec{v}_2):D } } }$
   \begin{center}
     \begin{tikzcd}[labels=description,row sep=1cm,column sep=3cm,execute at end picture={ \path
         (\tikzcdmatrixname-1-1) -- (\tikzcdmatrixname-1-3)
         coordinate[pos=0.5](aux) (aux) -- (\tikzcdmatrixname-2-2)
         node[midway,blue]{\small (Def)} (\tikzcdmatrixname-1-1) --
         (\tikzcdmatrixname-4-2) node[midway,blue,xshift=2mm]{\small
           (Functoriality of $\boxtimes$)} (\tikzcdmatrixname-4-2) --
         (\tikzcdmatrixname-1-3) node[midway,blue,xshift=-7mm]{\small (IH)}
         (\tikzcdmatrixname-4-2) -- (\tikzcdmatrixname-1-3)
         node[midway,blue,xshift=-15mm,yshift=-15mm]{\small (Def)} ; }]
       \Gamma\boxtimes\Xi\boxtimes\Delta \arrow[rddd,sloped,bend right,"\Id\boxtimes v"'] \arrow[rr,"{\mathsf{match}~\vec{t}~\{\mathsf{inl}(x_1)\mapsto\vec{v}_1[x:=v]\mid\mathsf{inr}(x_2)\mapsto\vec{v}_2[x:=v]\}}"] \arrow[rd,sloped,swap,"\vec t\boxtimes\Id", dashed,red] & & D \\
       & \color{red}(B+C)\boxtimes\Xi\boxtimes\Delta \arrow[d, "\Id\boxtimes v", dashed,red] \arrow[ru, "{[\vec{v}_1[x:=v],\vec{v}_2[x:=v]]_1}",sloped,swap,dashed,red]&   \\
       & \color{red}(B+C)\boxtimes\Xi\boxtimes A \arrow[ruu,
       "{[\vec{v}_1,\vec{v}_2]_1}",sloped,out=0,in=225,dashed,red] &   \\
       & \Gamma\boxtimes\Xi\boxtimes A
       \arrow[ruuu,"\mathsf{match}~\vec{t}~\{\mathsf{inl}(x_1)\mapsto\vec{v}_1\mid\mathsf{inr}(x_2)\mapsto\vec{v}_2\}"',sloped,bend
       right]\arrow[u, "\vec t\boxtimes\Id", dashed,red] &
     \end{tikzcd}
   \end{center}

 \item\needspace{4em}
 ${ \vcenter{ \infer[^{\mathsf{UnitaryMatch}}]
       {\Gamma,x:A,\Xi\vdash\mathsf{match}~\vec{t}~\{\mathsf{inl}(x_1)\mapsto\vec{v}_1\mid\mathsf{inr}(x_2)\mapsto\vec{v}_2\}:\sharp
         D} { \Gamma,x:A\vdash\vec{t}:B\oplus C & \Xi\vdash(x_1: \sharp
         B\vdash\vec{v}_1\perp x_2:\sharp C\vdash\vec{v}_2):\sharp D } } }$
   \begin{center}
     \begin{tikzcd}[labels=description,execute at end picture={ \path (\tikzcdmatrixname-1-1) --
         (\tikzcdmatrixname-3-1) coordinate[pos=0.5](aux1)
         (\tikzcdmatrixname-1-3) -- (\tikzcdmatrixname-3-3)
         coordinate[pos=0.5](aux2) (aux1) -- (aux2) node[midway,blue]{\small
           (Def)} (\tikzcdmatrixname-6-2) -- (\tikzcdmatrixname-1-3)
         node[midway,blue,yshift=-2cm]{\small (Def)} (\tikzcdmatrixname-6-2) --
         (\tikzcdmatrixname-1-1) node[midway,blue,yshift=-1cm]{\small (IH)}
         (\tikzcdmatrixname-6-2) -- (\tikzcdmatrixname-3-1)
         node[midway,blue,sloped,yshift=-1.2cm,xshift=-5mm]{\small (Naturality
           of swap)} ; }]
       \Gamma\boxtimes\Xi\boxtimes\Delta \arrow[rddddd,"\Id\boxtimes v"',sloped,out=195,in=180] \arrow[dd, "swap"', dashed,red] \arrow[rr, "{\mathsf{match}~\vec{t}[x:=v]~\{\mathsf{inl}(x_1)\mapsto\vec{v}_1\mid\mathsf{inl}(x_2)\mapsto\vec{v}_2\}}"] && \Forg{\Span{D}} \\
       && \color{red}\Forg{\Span{((\Forg{\Span{B}}+\Forg{\Span{C}})}}\boxtimes\Xi) \arrow[u, "{\Forg{\Span{[\vec{v}_1,\vec{v}_2]_1}}}", dashed,red] \\
       \color{red}\Gamma\boxtimes\Delta\boxtimes\Xi \arrow[rdd, "\Id\boxtimes
       v\boxtimes\Id"',out=-60,in=180,sloped,dashed,red]
       \arrow[rd,sloped,"{\vec{t}[x:=v]\boxtimes\Id}", dashed,red] &&\color{red}
       \Forg{\Span{(\Forg{\Span{B}}+\Forg{\Span{C}})}}\boxtimes \Forg{\Span{\Xi}} \arrow[u,
       "\Forg m\,\circ\, n", dashed,red] \\
       &\color{red} \Forg{\Span{(B+C)}}\boxtimes\Xi \arrow[ru,sloped,"\Forg{\Span{(\eta+\eta)}}\boxtimes\eta", dashed,red] & \\
       &\color{red} \Gamma\boxtimes A\boxtimes\Xi \arrow[u, "\vec{t}\boxtimes\Id", dashed,red] &\\
       & \Gamma\boxtimes\Xi\boxtimes A \arrow[u, "swap", dashed,red]
       \arrow[ruuuuu,near start,"\mathsf{match}~\vec{t}~\{\mathsf{inl}(x_1)\mapsto\vec{v}_1\mid\mathsf{inl}(x_2)\mapsto\vec{v}_2\}"',out=0,in=-10,looseness=1.5,sloped] &
     \end{tikzcd}
   \end{center}

 \item\needspace{4em}
 ${ \vcenter{ \infer[^{\mathsf{UnitaryMatch}}]
       {\Gamma,\Xi,x:A\vdash\mathsf{match}~\vec{t}~\{\mathsf{inl}(x_1)\mapsto\vec{v}_1\mid\mathsf{inr}(x_2)\mapsto\vec{v}_2\}:\sharp
         D} { \Gamma\vdash\vec{t}:B\oplus C & \Xi,x:A\vdash(x_1: \sharp
         B\vdash\vec{v}_1\perp x_2:\sharp C\vdash\vec{v}_2):\sharp D } } }$

\newcommand\texto{\ensuremath{\mathsf{match}~\vec{t}~\{\mathsf{inl}(x_1)\mapsto\vec{v}_1\mid\mathsf{inl}(x_2)\mapsto\vec{v}_2\}}}

\hspace{-3cm} \begin{tikzcd}
       [ labels=description, ampersand replacement = \&, column sep=1cm, row
       sep=1.5cm, execute at end picture={ \path (\tikzcdmatrixname-1-1) --
         (\tikzcdmatrixname-2-1) coordinate[pos=0.5](aux1)
         (\tikzcdmatrixname-1-3) -- (\tikzcdmatrixname-2-3)
         coordinate[pos=0.5](aux2) (aux1) -- (aux2) node[midway,blue]{\small
           (Functoriality of $\boxtimes$)}; \path (\tikzcdmatrixname-2-1) --
         (\tikzcdmatrixname-3-1) coordinate[pos=0.5](aux1)
         (\tikzcdmatrixname-2-3) -- (\tikzcdmatrixname-3-3)
         coordinate[pos=0.5](aux2) (aux1) -- (aux2) node[midway,blue]{\small
           (Naturality of $\eta$)}; \path (\tikzcdmatrixname-3-1) --
         (\tikzcdmatrixname-4-1) coordinate[pos=0.5](aux1)
         (\tikzcdmatrixname-3-3) -- (\tikzcdmatrixname-4-3)
         coordinate[pos=0.5](aux2) (aux1) -- (aux2) node[midway,blue]{\small
           (Naturality of $n$ and $m$)}; \path (\tikzcdmatrixname-4-1) --
         (\tikzcdmatrixname-4-3) coordinate[pos=0.5](aux) (aux) --
         (\tikzcdmatrixname-5-2) node[midway,blue]{\small (IH)} (aux) --
         (\tikzcdmatrixname-5-2) node[midway,blue,xshift=4cm]{\small (Def)}
         (aux) -- (\tikzcdmatrixname-5-2) node[midway,blue,xshift=-4cm]{\small
           (Def)}; }]
       \Gamma\boxtimes\Xi\boxtimes A
       \arrow[rdddd,out=225,in=180,looseness=2.2,near end,
       "\mathsf{match}~\vec{t}~\{\mathsf{inl}(x_1)\mapsto\vec{v}_1\mid\mathsf{inl}(x_2)\mapsto\vec{v}_2\}",
       sloped] \arrow[d, "\vec{t}\boxtimes\Id", dashed,red] \& \& \Gamma\boxtimes\Xi\boxtimes\Delta \arrow[ll, "\Id\boxtimes v"'] \arrow[ldddd, "{\mathsf{match}~\vec{t}~\{\mathsf{inl}(x_1)\mapsto\vec{v}_1[x:=v]\mid\mathsf{inl}(x_2)\mapsto\vec{v}_2[x:=v]\}}",out=-45,in=0,sloped,swap,looseness=2.2,near end] \arrow[d, "\vec{t}\boxtimes\Id", dashed,red] \\
       \color{red}\Forg{\Span{(B+C)}}\boxtimes\Xi\boxtimes A \arrow[d, "\Forg{\Span{(\eta+\eta)}}\boxtimes\eta", dashed,red] \& \&\color{red} \Forg{\Span{(B+C)}}\boxtimes\Xi\boxtimes\Delta \arrow[d, "\Forg{\Span{(\eta+\eta)}}\boxtimes\eta", dashed,red] \arrow[ll, "\Id\boxtimes v", dashed,red] \\
       \color{red}\Forg{\Span{(\Forg{\Span{B}}+\Forg{\Span{C}})}}\boxtimes \Forg{\Span{(\Xi\boxtimes A)}} \arrow[d, "\Forg{m}\circ\, n", dashed,red] \& \&\color{red} \Forg{\Span{(\Forg{\Span{B}}+\Forg{\Span{C}})}}\boxtimes \Forg{\Span{(\Xi\boxtimes\Delta)}} \arrow[d, "\Forg{m}\circ\, n", dashed,red] \arrow[ll, "\Id\boxtimes \Forg{\Span{(\Id\boxtimes v)}}", dashed,red] \\
       \color{red}\Forg{\Span{((\Forg{\Span{B}}+\Forg{\Span{C}})}}\boxtimes\Xi\boxtimes A) \arrow[rd,sloped,"{\Forg{\Span{[\vec{v}_1,\vec{v}_2]_1}}}",dashed,red] \& \& \color{red}\Forg{\Span{((\Forg{\Span{B}}+\Forg{\Span{C}})}}\boxtimes\Xi\boxtimes\Delta) \arrow[ld,sloped,"{\Forg{\Span{[\vec{v}_1[x:=v],\vec{v}_2[x:=v]]_1}}}", dashed,red] \arrow[ll, "\Forg{\Span{(\Id\boxtimes v)}}",swap,dashed,red] \\
       \& \Forg{\Span{D}} \&
     \end{tikzcd}

 \item\needspace{4em}
 ${\vcenter{\infer[^{\mathsf{Weak}}]{\Gamma,x:A,y:B\vdash\vec t:C}{\Gamma,x:A\vdash\vec t:C & B^\flat}}}$
   \begin{center}
     \begin{tikzcd}[labels=description,column sep=4cm,execute at end picture={ \path
         (\tikzcdmatrixname-1-1) -- (\tikzcdmatrixname-3-2)
         coordinate[pos=0.5](aux) (aux) -- (\tikzcdmatrixname-1-2)
         node[midway,blue]{\small (Funct.~of $\boxtimes$)}; \path
         (\tikzcdmatrixname-1-1) -- (\tikzcdmatrixname-1-3)
         coordinate[pos=0.5](aux) (aux) -- (\tikzcdmatrixname-2-2)
         node[midway,blue,yshift=1.5cm]{\small (Def)}; \path
         (\tikzcdmatrixname-2-2) -- (\tikzcdmatrixname-1-3)
         coordinate[pos=0.5](aux) (aux) -- (\tikzcdmatrixname-1-2)
         node[midway,blue]{\small (IH)}; \path (\tikzcdmatrixname-3-2) --
         (\tikzcdmatrixname-1-3) coordinate[pos=0.5](aux) (aux) --
         (\tikzcdmatrixname-2-2) node[midway,blue]{\small (Def)}; }]
       \Gamma\boxtimes B\boxtimes\Delta \arrow[rdd, "\Id\boxtimes v"',sloped] \arrow[rr, "{\vec{t}[x:=v]_{\Gamma\boxtimes B\boxtimes\Delta}}", bend left] \arrow[r, "{\pi'_\Gamma\boxtimes\Id}", dashed,red] & \color{red}\Gamma\boxtimes\Delta \arrow[d, "\Id\boxtimes v", dashed,red] \arrow[r, "{\vec{t}[x:=v]_{\Gamma\boxtimes\Delta}}", dashed,red] & C \\
       & \color{red}\Gamma\boxtimes A \arrow[ru, "\vec{t}_{\Gamma\boxtimes A}",sloped,dashed,red] &   \\
       & \Gamma\boxtimes B\boxtimes A \arrow[ruu, "\vec{t}_{\Gamma\boxtimes B\boxtimes
         A}"',sloped]\arrow[u, "{\pi'_\Gamma\boxtimes\Id}", dashed,red] &
     \end{tikzcd}
   \end{center}

 \item\needspace{4em}
 ${\vcenter{\infer[^{\mathsf{Contr}}]{\Gamma,y:B,x:A\vdash\vec t[z:=y]:C}{\Gamma,y:B,x:A,z:B\vdash\vec t:C & B^\flat}}}$
   \begin{center}
     \begin{tikzcd}[labels=description,column sep=4cm,execute at end picture={
	   \path (\tikzcdmatrixname-1-1) -- (\tikzcdmatrixname-3-2)
	   coordinate[pos=0.5](aux) (aux) -- (\tikzcdmatrixname-1-2)
	   node[midway,blue]{\small (Funct.~of $\boxtimes$)}; \path
	   (\tikzcdmatrixname-1-1) -- (\tikzcdmatrixname-1-3)
	   coordinate[pos=0.5](aux) (aux) -- (\tikzcdmatrixname-2-2)
	   node[midway,blue,yshift=1.5cm]{\small (Def)}; \path
	   (\tikzcdmatrixname-2-2) -- (\tikzcdmatrixname-1-3)
	   coordinate[pos=0.5](aux) (aux) -- (\tikzcdmatrixname-1-2)
	   node[midway,blue,xshift=-7mm]{\small (IH)}; \path
	   (\tikzcdmatrixname-3-2) -- (\tikzcdmatrixname-1-3)
	   coordinate[pos=0.5](aux) (aux) -- (\tikzcdmatrixname-2-2)
	   node[midway,blue,xshift=-2mm]{\small (Def)}; }]
       \Gamma\boxtimes B\boxtimes\Delta \arrow[rdd, "\Id\boxtimes v"',sloped] \arrow[rr, "{\vec{t}[z:=y,x:=v]_{\Gamma\boxtimes B\boxtimes\Delta}}", bend left]
       \arrow[r,"{\Id\boxtimes\delta\boxtimes\Id}", dashed,red] & \color{red}\Gamma\boxtimes B\boxtimes B\boxtimes \Delta \arrow[d, "\Id\boxtimes v",dashed,red] \arrow[r,"{\vec{t}[z:=y,x:=v]_{\Gamma\boxtimes B\boxtimes B\boxtimes \Delta}}", dashed,red] & C \\
       & \color{red}\Gamma\boxtimes B\boxtimes B\boxtimes A
       \arrow[ru,"\vec{t}{[z:=y]}_{\Gamma\boxtimes B\boxtimes B\boxtimes
         A}",sloped,dashed,red,near start] &   \\
       & \Gamma\boxtimes B\boxtimes A \arrow[ruu, "\vec{t}{[z:=y]}_{\Gamma\boxtimes
         B\boxtimes A}"',sloped]\arrow[u, "{\Id\boxtimes\delta\boxtimes\Id}", dashed,red] &
     \end{tikzcd}
   \end{center}

 \item\needspace{4em}
 ${
     \vcenter{
       \infer[^{\mathsf{Sup}}]
       {\vdash\sum_{j=1}^m\alpha_j\cdot\vec v_j:\sharp B}
       {
         \text{\scriptsize $(k\neq h)$}
         &
         \vdash(\vec v_k\perp\vec v_h):B
         &
         \sum_{j=1}^m|\alpha_j|^2=1
         &
         m\geq 1
         &
         B\neq C\rightarrow D
       }
     }
   }$

   This case does not follow the hypothesis of the lemma.
   \qedhere
 \end{itemize}
\end{proof}

\xrecap{Theorem}{Soundness}{thm:soundness}{ If $\Gamma\vdash t:A$, and $t\lra
   r$, then $\sem{\Gamma\vdash t:A} = \sem{\Gamma\vdash r:A}$. }
 \begin{proof}
  By induction on the rewrite relation, using the first derivable type for each
  term.

  First we check the congruence rules, since the rewrite relation is defined
  modulo such congruence. Notice that the $\s{Unitary}$ rules allow to
  type term distributions. However, at some point, such distributions where typed
  by the rule $\s{Sup}$, which only occur on closed values. It is easy to check
  that we can always use the congruence at the level of the closed values, and
  inherit such form in the final term distributions. Therefore, we only check the
  congruence rules for closed values.
   \begin{itemize}
   \item $\vec v_1+\vec v_2\equiv\vec v_2+\vec v_1$ and $(\vec v_1+\vec v_2)+\vec v_3\equiv\vec v_1+(\vec v_2+\vec v_3)$ follow from the commutativity and
     associativity of $+$ in $\VecV$.
  \item $1\cdot \vec v\equiv\vec v$.  We have
    \[
      \vcenter
      {
        \infer[{}^{\mathsf{Sup}}]{\vdash 1\cdot\vec v:\sharp A}
        {
          \vdash\vec v:A
        }
      }
      \qquad\textrm{and}\qquad
      \vcenter{
        \infer[{}^{\leq}]{\vdash\vec v:\sharp A}{\vdash\vec v:A}
      }
    \]
    Then,
    \begin{center}
      \begin{tikzcd}[labels=description,column sep=0cm,row sep=1.3cm,execute at end picture={
          \path
          (\tikzcdmatrixname-1-2) -- (\tikzcdmatrixname-3-2) node[midway,blue]{\footnotesize (Nat.~of $\eta$)}
          (\tikzcdmatrixname-1-2) -- (\tikzcdmatrixname-3-2) node[pos=0.1,xshift=1cm,blue]{\footnotesize (Def)}
          (\tikzcdmatrixname-1-2) -- (\tikzcdmatrixname-3-2) node[pos=0.9,xshift=-1cm,blue]{\footnotesize (Def)}
          ;
        }]
        & 1 \arrow[ld, "\eta" ,sloped,red, dashed] \arrow[dd, "1\cdot\vec v" description,out=180,in=180,looseness=1.8] \arrow[rd, "\vec v" ,sloped,red, dashed] \arrow[dd, "\vec v" description,out=0,in=0,looseness=1.8] & \\
        {\color{red}\Forg{\Span 1}} \arrow[rd, "\Forg{(1\cdot\Span{\vec v})}" ,sloped,red, dashed] & & {\color{red}A} \arrow[ld, "\eta" ,sloped,red, dashed] \\
        & \Forg{\Span A} &
      \end{tikzcd}
    \end{center}

  \item $\alpha\cdot(\beta\cdot\vec v)\equiv\alpha\beta\cdot\vec v$.  We have
    \[
      \vcenter
      {
        \infer[^{\s{Sup}}]{\vdash \alpha\cdot(\beta\cdot\vec v):\sharp\sharp A}
        {
          \infer[^{\s{Sup}}]{\vdash\beta\cdot\vec v:\sharp A}{\vdash\vec v:A}
        }
      }
      \qquad\textrm{and}\qquad
      \vcenter{
        \infer[^{\leq}]{\vdash\alpha\beta\cdot\vec v:\sharp\sharp A}
        {
        \infer[^{\s{Sup}}]{\vdash\alpha\beta\cdot\vec v:\sharp A}
        {\vdash\vec v:A}
        }
      }
    \]
    Then,

    \begin{center}
      \begin{tikzcd}[labels=description,column sep=5.5cm,row sep=1.5cm,
         execute at end picture={
         \path (\tikzcdmatrixname-2-1) -- (\tikzcdmatrixname-3-1) coordinate[pos=0.5](aux1)
           (\tikzcdmatrixname-2-2) -- (\tikzcdmatrixname-3-2) coordinate[pos=0.5](aux2)
           (aux1) -- (aux2) node[midway,blue]{\small(Nat.~of $\eta$ considering}
           (aux1) -- (aux2) node[midway,blue,yshift=-1em]{\footnotesize
             $\Forg{(\alpha\cdot \Span{\Forg{(\beta\cdot \Span{\vec v})}})}
             =
             \Forg{\Span{\Forg{(\alpha\beta\cdot \Span{\vec v})}}}$)
             }
           (aux1) -- (aux2) node[midway,blue,yshift=1.5cm]{\footnotesize (Equal maps)}
           (\tikzcdmatrixname-1-1) -- (aux1) node[midway,blue,xshift=1cm,yshift=1.2cm]{\footnotesize (Def.)}
           (aux1) -- (aux2) node[midway,blue,yshift=-1.5cm]{\footnotesize (Def.)}
           ;}
        ]
        1 \arrow[d, "\eta", red, dashed] \arrow[rdd, "\alpha\cdot(\beta\cdot\vec v)", out=0,in=30,looseness=1.5,sloped] \arrow[rdd, "\alpha\beta\cdot\vec v"',out=235,in=210,looseness=1.8,sloped,pos=0.4] & \\
        {\color{red}\Forg{\Span{1}}} \arrow[r, "\Forg{\Span\eta}", red, dashed,bend left] \arrow[r, "\eta"', red, dashed] \arrow[d, "\Forg{(\alpha\beta\cdot \Span{\vec v})}", red, dashed]       & {\color{red}\Forg{\Span{\Forg{\Span{1}}}}} \arrow[d, "\Forg{(\alpha\cdot \Span{\Forg{(\beta\cdot \Span{\vec v})}})}", red, dashed] \\
        {\color{red}\Forg{\Span{A}}} \arrow[r, "\eta"', red, dashed] & \Forg{\Span{\Forg{\Span{A}}}}
    \end{tikzcd}
    \end{center}

  \item $(\alpha+\beta)\cdot\vec v\equiv\alpha\cdot\vec v+\beta\cdot\vec v$.
    Notice that $\alpha\cdot\vec v+\beta\cdot\vec v$ is not typable in our
    calculus, so, this equivalence will always be taken in the form
    $(\alpha+\beta)\cdot\vec v$.

  \item $\alpha\cdot(\vec v_1+\vec v_2)\equiv\alpha\cdot\vec v_1+\alpha\cdot\vec t_2$. Same as before: if $\alpha\cdot(\vec v_1+\vec v_2)$ is typable, then
    $\alpha\cdot\vec v_1+\alpha\vec v_2$ is not, and vice-versa. Hence, only one
    of this two forms will be valid at each time.

  \item $(\lambda x.\vec t)v\lra\vec t[x:=v]$. We have
      \[
          \vcenter{
            \infer[^{\s{App}}]{\Gamma,\Delta\vdash(\lambda x.\vec t) v:A}{
             \infer[^{\s{Lam}}]{\Gamma\vdash\lambda x.\vec t:B\rightarrow A}{\Delta,x:B\vdash\vec t:A}
              &
             \Delta\vdash v:B
            }
          }
          \qquad\textrm{and}\qquad
          {\Gamma,\Delta\vdash\vec t[x:=v]:A}
      \]
      Then,
      \begin{center}
        \begin{tikzcd}[
          labels=description,
          column sep=3cm,
          execute at end picture={
            \path (\tikzcdmatrixname-1-1) -- (\tikzcdmatrixname-3-1) coordinate[pos=0.5]
            (aux1) (aux1) -- (\tikzcdmatrixname-2-2) node[pos=0.23,blue,xshift=1mm,yshift=4.5mm]{\small (Func.~of~$\boxtimes$)};
            \path (\tikzcdmatrixname-1-1) -- (\tikzcdmatrixname-1-3) coordinate[pos=0.5](aux1)
            (\tikzcdmatrixname-2-2) -- (aux1) node[midway,blue]{\small (Lemma~\ref{lem:substitution})}
            (\tikzcdmatrixname-2-2) -- (aux1) node[midway,blue,yshift=-2.7cm]{\small (Def)}
            (\tikzcdmatrixname-3-1) -- (\tikzcdmatrixname-2-2) node[midway,blue,sloped,yshift=1mm]{\footnotesize (Adjunction axiom)}
            (\tikzcdmatrixname-2-2) -- (\tikzcdmatrixname-3-3) node[midway,blue]{\small (Naturality of $\varepsilon'$)};
          }]
          \Gamma\boxtimes\Delta\arrow["(\lambda x.\vec t)\vec v",
          rounded corners,
          to path={
            -| ([yshift=-3.7cm,xshift=-.7cm]\tikztostart.west)
            -- ([yshift=-3.7cm,xshift=.7cm]\tikztotarget.east)\tikztonodes
            |- (\tikztotarget)}
          ]{rr}
          \arrow[dd, "\eta^B\boxtimes v",dashed,red] \arrow[rr, "{\vec t[x:=v]}"] \arrow[rd,sloped,"\Id\boxtimes v",dashed,red] & & A \\
          &\color{red} \Gamma\boxtimes B \arrow[ld,sloped,"\eta^B\boxtimes\Id"',bend right=20,dashed,red] \arrow[ru,sloped,"\vec t",dashed,red] &  \\
          \color{red}\home B{\Gamma\boxtimes B}\boxtimes B\arrow[rr,dashed,red,"\home{B}{\vec t}\boxtimes\Id"]\arrow[ru,sloped,"\varepsilon'", bend right=10,dashed,red] & &
          \color{red}\home BA\boxtimes B \arrow[uu,dashed,red,"\varepsilon'"]
        \end{tikzcd}
      \end{center}

    \item $\Void;\vec s\lra\vec s$. We have
      \[
        \vcenter{ \infer[^{\s{PureSeq}}]{\emptyset,\Delta\vdash\Void;\vec s:A}{\infer[^{\s{Void}}]{\vdash\Void:\mathbb U}{} & \Delta\vdash\vec s:A} }
        \qquad\textrm{and}\qquad \emptyset,\Delta\vdash\vec s:A
      \]
      We write the $\emptyset$ to stress the fact that there is a hidden $\times
      1$, which can be projected out.

      Then,
      \begin{center}
        \begin{tikzcd}[labels=description,column sep=3cm,row sep=0mm,
           execute at end picture={
             \path (\tikzcdmatrixname-2-1) -- (\tikzcdmatrixname-2-3) node[midway,yshift=10mm,blue]{\small (Def)}
             (\tikzcdmatrixname-2-1) -- (\tikzcdmatrixname-2-3) node[midway,blue]{\small (Naturality of $\pi$)}
             (\tikzcdmatrixname-2-1) -- (\tikzcdmatrixname-2-3) node[midway,yshift=-10mm,blue]{\small (Def)}
             ;}
          ]
          &\color{red}1\boxtimes A\ar[rd,"\pi_A",sloped,dashed,red] &\\
          1\boxtimes\Delta\ar[rr,"\Void;\vec s",sloped,bend left=30]\ar[rr,"\vec s_{1\boxtimes\Delta}",sloped,bend right=30]\ar[ru,"\Id\boxtimes\vec s_\Delta",sloped,dashed,red]\ar[rd,"\pi_\Delta",sloped,dashed,red] && A\phantom{1\boxtimes}\\
          &\color{red}\Delta\ar[ru,"\vec s_{\Delta}",sloped] &
        \end{tikzcd}
      \end{center}

    \item $\LetP{x}{y}{\Pair{v}{w}}{\vec{s}}\lra \vec{s}[x:=v,y:=w]$. We have
      \[
        \vcenter{
          \infer[^{\s{PureLet}}]{\Gamma_1,\Gamma_2,\Delta\vdash\LetP xy{\Pair vw}{\vec s}:C}
          {
            \infer[^{\s{Pair}}]{\Gamma_1,\Gamma_2\vdash\Pair vw:A\times B}{\Gamma_1\vdash v:A & \Gamma_2\vdash w:B}
            &
            \Delta,x:A,y:B\vdash\vec s:C
          }
        }
      \]
      and
      \[
        \Gamma_1,\Gamma_2,\Delta\vdash\vec s[x:=v,y:=w]:C
      \]
      Then,
      \begin{center}
        \begin{tikzcd}[labels=description,row sep=11mm,execute at end picture={
            \path
            (\tikzcdmatrixname-1-1) -- (\tikzcdmatrixname-1-3) node[midway,yshift=5mm,blue]{\footnotesize(Lemma~\ref{lem:substitution})}
            (\tikzcdmatrixname-3-1) -- (\tikzcdmatrixname-3-3) node[midway,yshift=-5mm,blue]{\footnotesize(Def)}
            (\tikzcdmatrixname-1-1) -- (\tikzcdmatrixname-3-1) coordinate[midway](aux1)
            (\tikzcdmatrixname-1-3) -- (\tikzcdmatrixname-3-3) coordinate[midway](aux2)
            (aux1) -- (aux2) node[pos=0.8,blue]{\footnotesize(Naturality of $\varepsilon$)}
            (aux1) -- (aux2) node[pos=0.2,blue,yshift=1cm]{\footnotesize(Naturality of $\boxtimes$)}
            (\tikzcdmatrixname-1-2) -- (\tikzcdmatrixname-2-2) coordinate[midway](aux)
            (aux) -- (\tikzcdmatrixname-1-3) node[pos=0.4,blue,sloped]{\footnotesize(Lemma~\ref{lem:substitution})}
            (\tikzcdmatrixname-3-1) -- (\tikzcdmatrixname-2-2) node[midway,blue,sloped]{\footnotesize(Adjunction axiom)}
            ;
          }]
          \Gamma_1\boxtimes\Gamma_2\boxtimes\Delta
          \arrow[rr, "{\vec s[x:=v,y:=w]}",
          rounded corners,
          to path={
            -- ([yshift=5mm]\tikztostart.north)
            -- ([yshift=5mm]\tikztotarget.north)\tikztonodes
            -- (\tikztotarget)}
          ]
          \arrow[rr, "{\mathsf{let}~(x,y)=(v,w)~\mathsf{in}~\vec s}",
          rounded corners,
          to path={
            -| ([yshift=-45mm,xshift=-17mm]\tikztostart.west)
            -- ([yshift=-45mm,xshift=17mm]\tikztotarget.east)\tikztonodes
            |- (\tikztotarget)}
          ]
          \arrow[dd, "v\boxtimes w\boxtimes\eta^{A\boxtimes B}", dashed,red]
          \arrow[r, "v\boxtimes\Id", dashed,red]
          &\color{red} A\boxtimes\Gamma_2\boxtimes\Delta \arrow[r, "{\vec s[y:=w]}", dashed,red] \arrow[d, "\Id\boxtimes w\boxtimes\Id", dashed,red] & C \\
          &\color{red} A\boxtimes B\boxtimes\Delta \arrow[ld, "\Id\boxtimes\eta^{A\boxtimes B}", dashed,red,sloped,bend right] \arrow[ru,sloped,"\vec s", dashed,red] & \\
          \color{red}{(A\boxtimes B)\boxtimes[A\boxtimes B,\Delta\boxtimes
            A\boxtimes B]} \arrow[rr, "{\Id\boxtimes[A\boxtimes B,\vec s]}", dashed,red] \arrow[ru,sloped,"\varepsilon", dashed,red,out=10,in=-90] & &\color{red} {(A\boxtimes B)\boxtimes[A\boxtimes B,C]} \arrow[uu, "\varepsilon", dashed,red]
        \end{tikzcd}
      \end{center}

     \item $\Match{\Inl{v}}{x_1}{\vec{s}_1}{x_2}{\vec{s}_2}\lra
       \vec{s}_1[x_1:=v]$. We have
       \[
         \vcenter{\infer[^{\s{PureMatch}}]{\Gamma,\Delta\vdash\Match{\Inl v}{x_1}{\vec s_1}{x_2}{\vec s_2}:C}
           {
             \infer[^{\s{InL}}]{\Gamma\vdash\Inl v:A+B}{\Gamma\vdash v:A}
             &
             \Delta,x_1:A\vdash\vec s_1:C
             &
             \Delta,x_2:B\vdash\vec s_2:C
           }
         }
       \]
       and
       \[
         \Gamma,\Delta\vdash\vec s_1[x_1:=v]:C
       \]

       Then,
       \begin{center}
         \begin{tikzcd}[labels=description,column sep=2cm,row sep=11mm,execute
           at end picture={
             \path
             (\tikzcdmatrixname-1-1) -- (\tikzcdmatrixname-1-3) coordinate[midway](aux)
             (aux) -- (\tikzcdmatrixname-2-2) node[midway,blue]{\footnotesize(Lemma~\ref{lem:substitution})}
             (aux) -- (\tikzcdmatrixname-2-2) node[pos=3,blue]{\footnotesize(Def)}
             (\tikzcdmatrixname-2-2) -- (\tikzcdmatrixname-3-2) coordinate[midway](aux1)
             (aux1) -- (\tikzcdmatrixname-1-3) node[midway,blue,sloped]{\footnotesize(Def)}
             (\tikzcdmatrixname-1-1) -- (aux1) node[midway,blue,sloped]{\footnotesize(Def)}
             ;
           }]
           \Gamma\boxtimes\Delta
           \arrow[rr, "{\Match{\Inl{v}}{x_1}{\vec{s}_1}{x_2}{\vec{s}_2}}",
           rounded corners,
           to path={
             -- ([yshift=-45mm]\tikztostart.south)
             -- ([yshift=-45mm]\tikztotarget.south)\tikztonodes
             -- (\tikztotarget)}
           ]
           \arrow[rr, "{\vec{s}_1[x_1:=v]}"]
           \arrow[rdd, "\Inl{v}\boxtimes\Id",sloped,dashed,red]
           \arrow[rd, "v\boxtimes\Id",sloped,dashed,red] & & C \\
           & \color{red}A\boxtimes\Delta \arrow[d, "i_1\boxtimes\Id", dashed,red] \arrow[ru, "\vec{s}_1",sloped,dashed,red] &   \\
           & \color{red}(A+B)\boxtimes\Delta \arrow[ruu, "{[\vec{s}_1,\vec{s}_2]_1}",sloped,dashed,red] &
         \end{tikzcd}
       \end{center}

     \item $\Match{\Inr{v}}{x_1}{\vec{s}_1}{x_2}{\vec{s}_2}\lra
       \vec{s}_2[x_2:=v]$. Analogous to previous case.

     \end{itemize}

     The inductive cases are straightforward. However, we have to check the
     notation for linear constructions, which give us the typing derivations using
     the ``Unitary'' rules.

     Let $\vec v=\sum_{j=1}^n\alpha_j\cdot v_j$, $\vec w=\sum_{k=1}^m\beta_k\cdot w_k$, $\vec t=\sum_{h=1}^p\gamma_h\cdot t_h$, and
     $\vec s=\sum_{\ell=1}^q\delta_\ell\cdot s_\ell$.
      \begin{itemize}
     \item $(\vec v,\vec w):=\sum_{jk}\alpha_j\beta_k\cdot(v_j,w_k)$.

       We have
       \[
         \vcenter{
           \infer[^\sharp]{\vdash(\vec v,\vec w):\sharp(A\times B)}{
             \infer[^{\s{Pair}}]{\vdash(\vec v,\vec w):A\times B}
             {
               \vdash\vec v:A & \vdash w:B
             }
           }
         }
       \]
       and
       \[
         \vcenter{
           \infer[^{\s{Sup}}]
           {\vdash\sum_{jk}\alpha_j\beta_k\cdot(v_j,w_k):\sharp(A\times B)}
           {
             {\scriptstyle(j\neq h\vee k\neq\ell)}
             &
             \infer[^{\s{Pair}}]
             {\vdash((v_j,w_k)\perp(v_h,w_\ell)):A\times B}
             {
               \vdash v_j:A
               &
               \vdash w_k:B
             }
             & \sum_{jk}|\alpha_j\beta_k|^2=1
           }
         }
       \]

       \begin{center}
         \begin{tikzcd}[column sep=1.2cm,labels=description]
           1\boxtimes 1\arrow[rrr,"\vec v\boxtimes\vec w",dashed,red]\arrow[ddd,"\eta",dashed,red]\ar[dr,no head,dotted,blue]
           \arrow[dddrrr, "{\sum_{jk}\alpha_j\beta_k\cdot(v_j,w_k)}",
           rounded corners,
           to path={[pos=.75]
             -- ([xshift=-5mm]\tikztostart.west)
             |- ([yshift=-3mm]\tikztotarget.south)\tikztonodes
             -- (\tikztotarget)}
           ]
           \arrow[dddrrr, "{(\vec v,\vec w)}",
           rounded corners,
           to path={[pos=.25]
             -- ([yshift=3mm]\tikztostart.north)
             -| ([xshift=3mm]\tikztotarget.east)\tikztonodes
             -- (\tikztotarget)}
           ]
           & & & {\color{red}A\boxtimes B}\arrow[ddd,"\eta",dashed,red]\ar[dl,no head,dotted,blue]\\
           & {\color{blue}(*,*)}\arrow[r,maps to,blue]\arrow[d,maps to,blue] & {\color{blue}(\sum_j\alpha_j\cdot a_j,\sum_k\beta_k\cdot b_k)}\arrow[d,maps to, blue]\\
           & {\color{blue}(*,*)}\arrow[r,maps to,blue] & {\color{blue}\sum_{jk}\alpha_j\beta_k\cdot(a_j,b_k)}\\
           {\color{red}US(1\boxtimes 1)}\ar[ur,no head,dotted,blue]\arrow[rrr,"U\sum_{jk}\alpha_j\beta_k\cdot S(v_j\boxtimes w_k)",dashed,red] & &&US(A\boxtimes B)\ar[ul,no head,dotted,blue]
         \end{tikzcd}
       \end{center}

     \item $\Inl{\vec v}:=\sum_{j=1}^n\alpha_j\cdot\Inl{v_j}$.

       We have
       \[
         \vcenter{\infer[^\sharp]{\vdash\Inl{\vec v}:\sharp(A+B)}
         {
           \infer[^{\mathsf{InL}}]{\vdash\Inl{\vec v}:A+B}{\vdash\vec v:A}
         }
         }
         \qquad\textrm{and}\qquad
         \vcenter{
         \infer[^{\mathsf{Sup}}]{\vdash\sum_j\alpha_j\cdot\Inl{v_j}:\sharp(A+B)}
         {
           {\scriptstyle(h\neq k)} &
           \infer[^{\s{InL}}]{\vdash (v_h\perp v_k):A+B}{\vdash v_j:A}
           & \sum_j|\alpha_j|^2=1
         }
         }
       \]
       \begin{center}
         \begin{tikzcd}[column sep=1cm,labels=description]
           1\arrow[rr,"\vec v",dashed,red]\arrow[ddd,"\eta",dashed,red]\ar[dr,no head,dotted,blue]
           \arrow[dddrrrr, "{\sum_j\alpha_j\cdot \Inl{v_j}}",
           rounded corners,
           to path={[pos=.75]
             -- ([xshift=-5mm]\tikztostart.west)
             |- ([yshift=-3mm]\tikztotarget.south)\tikztonodes
             -- (\tikztotarget)}
           ]
           \arrow[dddrrrr, "\Inl{\vec v}",
           rounded corners,
           to path={[pos=.25]
             -- ([yshift=3mm]\tikztostart.north)
             -| ([xshift=3mm]\tikztotarget.east)\tikztonodes
             -- (\tikztotarget)}
           ]
           & & {\color{red}A}\arrow[rr,"i_1",dashed,red] & & {\color{red}A+B}\arrow[ddd,"\eta",dashed,red]\ar[dl,no head,dotted,blue]\\
           & {\color{blue}*}\arrow[r,maps to,blue]\arrow[d,maps to,blue] &
           {\color{blue}\sum_j\alpha_j\cdot a_j}\arrow[r,maps to,blue]& {\color{blue}\Inl{\sum_j\alpha_j\cdot a_j}}\arrow[d,maps to, blue]\\
           & {\color{blue}*}\arrow[rr,maps to,blue] & & {\color{blue}\sum_{j}\alpha_j\cdot\Inl{a_j}}\\
           {\color{red}US1}\ar[ur,no
           head,dotted,blue]\arrow[rrrr,"U\sum_{j}\alpha_j\cdot S\Inl{v_j}",dashed,red] &&&&US(A+B)\ar[ul,no head,dotted,blue]
         \end{tikzcd}
       \end{center}
     \item $\Inr{\vec v}:=\sum_{j=1}^n\alpha_j\cdot\Inr{v_j}$. Analogous to
       previous case.
     \item $t\vec s:=\sum_{\ell=1}^q\delta_\ell\cdot ts_\ell$.
       In this case, there is only one way to type it, which is first doing the
       $\s{Sup}$ to type $\vec s$, and then the $\s{App}$, because we cannot
       apply $\s{Sup}$ on $ts_\ell$, since these are not values.

     \item $\vec t;\vec s:=\sum_{h=1}^p\gamma_h\cdot (t_h;\vec s)$. Since
       $\Gamma\vdash\vec t:\sharp\U$, because of the orthogonality restriction
       on rule $\s{Sup}$ the only possibility is $p=1$. Thus, $\gamma_1=\gamma$
       and $t_1=t$, with $|\gamma_1|^2=1$.

       Furthermore, in this case there is only one way to type it, which is first doing the
       $\s{Sup}$ to type $\vec t$, and then the $\s{UnitarySeq}$, because we
       cannot use the rule $\s{Sup}$ on $t;\vec s$, which is not a value.

     \item $\LetP xy{\vec t}{\vec s}:=\sum_{h=1}^p\gamma_h\cdot(\LetP xy{t_h}{\vec s})$.
       In this case, there is only one way to type it, which is first doing the
       $\s{Sup}$ to type $\vec t$, and then the $\s{UnitaryLet}$, because we cannot superpose
       $\mathtt{let}$ terms since these are not values.

     \item $\Match{\vec t}{x_1}{\vec s_1}{x_2}{\vec s_2}:=\sum_{h=1}^p\gamma_h\cdot(\Match{t_h}{x_1}{\vec s_1}{x_2}{\vec s_2})$.
       In this case, there is only one way to type it, which is first doing the
       $\s{Sup}$ to type $\vec t$, and then the $\s{UnitaryMatch}$, because we cannot superpose
       $\mathtt{match}$ terms since these are not values.
        \qedhere
   \end{itemize}
 \end{proof}

\end{document}